\newcommand{\Gmc}{\mathcal{G}}
\newcommand{\Pmc}{\mathcal{P}}
\newcommand{\Dgv}{\int \prod_{v\in V_1}\dd{g_v}}
\newcommand{\DgvG}{\int \prod_{v\in V_\Gamma}\dd{g_v}}
\newtheorem{theorem}{Theorem}[section]
\newtheorem{definition}[theorem]{Definition}
\newtheorem{example}[theorem]{Example}
\newtheorem{lemma}[theorem]{Lemma}
\newtheorem{remark}[theorem]{Remark}
\DeclareMathOperator{\image}{Image}
\DeclareMathOperator{\interior}{Int}
\DeclareMathOperator{\exterior}{Ext}
\newcommand{\crefnames}[3]{%
  \@for\next:=#1\do{%
    \expandafter\crefname\expandafter{\next}{#2}{#3}%
  }%
}
\newcommand{\defi}{{:=}}
\newcommand{\linearops}[1]{{\mathcal{L}({#1})}}
\newcommand{\id}{\mathds{1}}
\newcommand{\idV}{\mathds{1}_{\mathcal{H}_V}}
\title{\boldmath Gauging the bulk: generalized gauging maps and holographic codes}
\author[a]{Kfir Dolev}
\author[b]{Vladimir Calvera}
\author[a]{Samuel S. Cree}
\author[a]{Dominic J. Williamson}
\affiliation[a]{Stanford Institute for Theoretical Physics, Stanford University, 382 Via Pueblo Mall, Stanford, CA 94305-4060, U.S.A.}
\affiliation[b]{Department of Physics, Stanford University, Stanford, CA 94305, USA}
\emailAdd{dolev@stanford.edu}
\emailAdd{fvcalver@stanford.edu}
\emailAdd{scree@stanford.edu}
\emailAdd{domjw@stanford.edu}
\abstract{
Gauging is a general procedure for mapping a quantum many-body system with a global symmetry to one with a local gauge symmetry. 
We consider a generalized gauging map that does not enforce gauge symmetry at all lattice sites, and show that it is an isometry on the full input space including all charged sectors. 
We apply this generalized gauging map to convert global-symmetric bulk systems of holographic codes to gauge-symmetric bulk systems, and vice versa, while preserving duality with a global-symmetric boundary. 
We separately construct holographic codes with gauge-symmetric bulk systems by directly imposing gauge-invariance constraints onto existing holographic codes, and show that the resulting bulk gauge symmetries are dual to boundary global symmetries. Combining these ideas produces a toy model that captures several interesting features of holography -- it exhibits a rudimentary sort of dynamical duality, can be modified to demonstrate the relationship between metric fluctuations and approximate error-correction, and serves as an illustration for certain no-go theorems concerning symmetries in holography.
Finally, we apply the generalized gauging map to construct codes with arbitrary transversal gate sets -- for any compact Lie group, we use a symmetry-preserving truncation scheme to construct covariant finite-dimensional approximate holographic codes.
}
\begin{document} 
\maketitle
\flushbottom

\section{Introduction}
\label{sec:intro}

The anti-de Sitter/conformal field theory (AdS/CFT) correspondence~\cite{Maldacena1997,Witten1998} is a general pattern of holographic dualities~\cite{Hooft1993,Susskind1995} that relate a Bulk theory of quantum gravity in asymptotically AdS spacetime to a Boundary\footnote{Note: we capitalize Boundary throughout this paper when referring to the system that arises as one side of the AdS/CFT duality or holographic codes, to distinguish from other mathematical uses of the word boundary.} CFT in a space with one fewer dimension.
Recent years have seen the development of quantum error correcting code toy models~\cite{HaPPY,haydenHolographicDualityRandom2016,Marolf,Yang2016,harlow,preskillpastawski,caoApproximateBaconShorCode2020,farrellyParallelDecodingMultiple2020,harrisCalderbankSteaneShorHolographicQuantum2018,harrisMaximumLikelihoodDecoder2020,Kohler2019,Apel2021,Cao2021,Jahn2021,noncliffords} that capture some of the most puzzling aspects~\cite{almheiriBulkLocalityQuantum2015} of the AdS/CFT correspondence.
These toy models have provided valuable insight on holography by providing an explicit, finite-dimensional theory amenable to concrete calculations~\cite{Jahn2021}.
Although these models are useful for understanding static features like entanglement and geometry, they have had limited success at replicating aspects of dynamics and the inclusion of symmetry~\cite{noncliffords,Faist_2020}, particularly gauge symmetries~\cite{Kogut:1974ag,Kogut1979}.

The interplay of symmetry and quantum gravity has been the subject of much investigation~\cite{Misner1957,Polchinski2004,Banks2011,Harlow2016,harlow-ooguri,Harlow2021,Chen2021,Hsin2020,Belin2020}.
In Ref.~\cite{harlow-ooguri} it was shown that AdS/CFT can never exhibit a Boundary global symmetry that is dual to a Bulk global symmetry, and that this eliminates the possibility of the latter's existence entirely.
Instead, all Boundary global symmetries must be dual to Bulk gauge symmetries, and vice versa.
These statements hold even for spacetime symmetries, and in particular time evolution. Models of time evolution in holographic codes are of interest because they may be used to practically implement ideas in Ref.~\cite{May_2020}, such as the efficient implementation of non-local quantum computations. 
Existing toy models are not developed enough for this purpose as they either do not implement a proper symmetry duality \cite{Kohler2019}, or they do not achieve entangling dynamics \cite{Osborne_2020}. 
This motivates the construction of toy models that implement gauge symmetries in the Bulk as a means for concretely studying possible avenues towards time evolution, as well as studying symmetries in holography more generally.

Some of the major issues that remain for the incorporation of symmetries into toy models are: 
\begin{enumerate}
    \item the construction of a toy model with an explicit realization of Bulk gauge symmetry dual to a Boundary global symmetry\footnote{while the code in \cite{Marolf}, referred to in this paper as the ``LOTE'' code, can accommodate a gauge invariant Hilbert space, an explicit choice of gauge symmetry was not studied there.}
    \item a deeper investigation of the conditions under which current holographic codes, such as HaPPY, do allow for Bulk global symmetries that are dual to Boundary global symmetries despite the results of \cite{harlow-ooguri}, and finally
    \item the construction of a toy model which properly incorporates a Boundary global time evolution symmetry, along with a dual Bulk gauge symmetry.
\end{enumerate}

In this work we resolve the first issue by imposing Bulk gauge symmetry on the holographic code introduced in \cite{Marolf}. We then adapt the proof used for AdS/CFT in \cite{harlow-ooguri} to show that any holographic code whose Bulk exhibits such a symmetry must have a dual Boundary global symmetry. Thus the construction exhibits the desired duality. 

We suggest a resolution to the second issue by showing that an apparent global symmetry in the Bulk can emerge from a gauge symmetry when one restricts to a specific subspace of the Bulk gauged system.
This can explain why global symmetries appear in a context in which only gauge symmetries are allowed; the symmetries merely appear to be global due to this subspace restriction.
To be more concrete, we generalize the ``gauging map'' defined in Ref.~\cite{Cirac_2015}, so that it isometrically maps systems with global symmetries into systems with gauge symmetries.
We show that the image of this generalized gauging map is a ``fixed-flux sector'' -- a subspace characterized by specific eigenvalues of all relevant Wilson operators \cite{Wilson1974}.
Thus, it can be inverted to map this subspace of a gauged system into a system with global symmetry, a process we refer to as ``ungauging''.
Then an explicit way to construct a holographic code with Bulk global symmetry is to take one with Bulk gauge symmetry, ungauge it in this way to project out all but a single fixed-flux sector, and obtain a system in which the original gauge symmetry now appears as a global one.
This circumvents the ``no Bulk global symmetries'' theorem of Ref.~\cite{harlow-ooguri} which requires the global symmetry to appear in any Bulk subspace.
We argue that this loophole explains the presence of global symmetries in HaPPY, as its Bulk can be understood as having been restricted to a fixed-flux sector.

Despite applying only to internal i.e.\ non-spacetime symmetries, the gauging map also offers insight into the third issue, how Bulk time evolution appears. Drawing an analogy to diffeomorphism invariance (the gauge symmetry of gravity) we interpret the fixed-flux subspace as a fixed geometry subspace, and the application of the ungauging map as the emergence of Bulk locality in such a subspace. The duality thus generated between Bulk and Boundary global symmetries can then be interpreted as a trivial kind of time evolution under a non-interacting Hamiltonian, in which all information is stuck in place. 

There is an interesting consequence of this work that may have applications in the study of pure quantum error correction. We can generalize the construction of a code with Bulk gauge symmetry to allow for an arbitrary finite or continuous compact Lie group. By using the map to ungauge the Bulk, the Bulk gauge symmetry is converted to a global symmetry, as before. In the language of quantum error correction, we have uncovered a new method to construct codes with arbitrary transversal gate sets. For continuous (compact Lie) groups the code is infinite dimensional~\cite{Kang2021,Gesteau2020a,Gesteau2020b}, but we show how to use a truncation scheme which produces finite dimensional approximations to the exact continuous code while preserving the covariance property, as must be the case to respect the results of Ref.~\cite{Faist_2020}. 

The structure of the paper is as follows. In \cref{sec:preliminaries} we introduce notation, definitions, and background that is used throughout the paper. Of particular importance are the definitions of global symmetries, gauge symmetries, and dualities between them. In \cref{sec:generalized-gauging-map} we define the gauging map and prove its relevant properties, although some more detailed proofs are left to \cref{app:gauge-proofs}. In \cref{sec: holographic-codes} we give a working definition of holographic codes and provide illustrative examples of the two main types of holographic code we are interested in, those with unconstrained local Bulk degrees of freedom, and those with Bulk constraints due to gauge symmetries. We also prove here that any holographic code whose Bulk exhibits a gauge symmetry has a dual Boundary global symmetry.
As part of this construction, we derive new results for quantum error-correcting codes to do with the implementation of a logical unitary representation via a physical unitary representation on the complement of a correctable subsystem, which may be of independent interest -- see \cref{sec:qecc} for details.
Finally, in \cref{sec:applications-to-holography} we illustrate a number of applications of the gauging map to holographic codes, including gauging and ungauging holographic codes, and commentary on the resulting models of holography.

\section{Setup}
\label{sec:preliminaries}

In this section we review and introduce notation for the background concepts needed to understand the results of this work. Readers with a background in quantum error correction and/or gauge theory will likely have an easier time,  though neither of these is prerequisite. In \ref{subsec: math-notation} we introduce notation for basic definitions of linear algebra (for discussing quantum systems) and graph theory (for giving these systems geometrical structure). In \ref{subsec: QMBS} we give definitions for quantum many body systems that keep track of their local tensor product structure and any constraints they may have, which are helpful in keeping track of the variety of systems being mapped to each other throughout the rest of the paper. In \ref{subsec: global-sym} we define what it means to equip a system with a global symmetry, and what it means for a map between systems to exhibit a global/global symmetry duality. Finally in \ref{subsec: gauge-sym} we define what it means for a system to have a gauge symmetry, show how the Hilbert space of such systems naturally breaks up into ``fixed-flux sectors'', and what it means for a map between a gauged system and an unconstrained system to exhibit a gauge/global duality.

As part of our conventions we capitalize Boundary and Bulk throughout this paper when referring to the two systems dual to each other in AdS/CFT or holographic codes. This is to avoid confusion with other meanings of the word boundary. 

\subsection{Mathematical notation} \label{subsec: math-notation}

We use the following notation for the basic mathematical objects that appear in the work that follows.
We encourage the reader to skip this section on a first read and use it merely for reference where necessary.

\subsubsection{Linear algebra}

\begin{itemize}
    \item Hilbert spaces are denoted by $\mathcal{H}_x$ with $x$ some label.
    \item For a Hilbert space $\mathcal{H}_x$, we denote the set of all linear operators acting on $\mathcal{H}_x$ by $\linearops{\mathcal{H}_x}$.
    \item For a Hilbert space $\mathcal{H}_x$ we denote the identity element of $\linearops{\mathcal{H}_x}$ by $\id_x$, or $\id$ when the space is clear from context. We reserve the symbol $I$ for the identity of a group.
    \item For a set $V$, and a set of Hilbert spaces $\{\mathcal{H}_v|v\in V\}$, we define $\mathcal{H}_V\equiv \bigotimes_{v\in V}\mathcal{H}_v$. We refer to the labels $v$ as
    \textbf{subsystems}.
    \item For a Hilbert space $\mathcal{H}$ and a projector $\Pi:\mathcal{H}\rightarrow\mathcal{H}$ we denote by $\Pi\mathcal{H}$ the Hilbert space which is the image of $\Pi$.
	\item For a qubit Hilbert space with some choice of basis $\left\{ \ket{0},\ket{1} \right\}$, we define the states $\ket{\pm} \defi \frac{1}{\sqrt{2}}\left( \ket{0} \pm \ket{1} \right)$.
		We also define Pauli operators $X = \ketbra{+} - \ketbra{-}$, $Z = \ketbra{0} - \ketbra{1} $, and $Y = iXZ$.
	\item We use the word \emph{support} in the conventional, slightly ambiguous way.
	\begin{itemize}
	    \item  An operator $O\in \linearops{\mathcal{H}}$ with $\mathcal{H} = \mathcal{H}_A \otimes \mathcal{H}_B$ is supported on \emph{subsystem} $\mathcal{H}_A$ if it takes the form $O_A\otimes \id_B$ with $O_A \in \linearops{\mathcal{H}_A}$.
	    \item  An operator $O\in \linearops{\mathcal{H}}$ with $\mathcal{H} = \mathcal{H}_1 \oplus \mathcal{H}_2$ is supported on \emph{subspace} $\mathcal{H}_1$ if it takes the form $O_1\oplus 0_2$ with $O_1 \in \linearops{\mathcal{H}_1}$ and $0_2$ the zero operator in $\linearops{\mathcal{H}_2}$.
	\end{itemize}
	Which of these two meanings is intended should be clear from context.
\end{itemize}

\subsubsection{Graph theory}
\label{subsubsec: graph-theory-background}
\begin{itemize}
    \item A \textbf{directed graph} $\Lambda=(V,E)$ consists of a set $V$ also called \textbf{the vertices} and set $E\subseteq  V\times V$ also called \textbf{the edges}. 
    
    \item A directed graph $\Lambda=(V,E)$ is said to be \textbf{oriented} if for any two vertices $v,u\in V$, there is at most one edge connecting the two, i.e. $(v,u)\in E\rightarrow (u,v)\not\in E$. We refer to such a graph simply as an \textbf{oriented graph}, and every $\Lambda$ from here on refers to an oriented graph.
    
    
    \item A \textbf{path} in $\Lambda$ is a tuple of vertices $(v_1,v_2,\ldots ,v_n)$ such that for all $i\in\{1,\ldots ,n\},$ $v_i\in V$ and for all $i\in\{1,\ldots ,n-1\}$ $(v_i,v_{i+1})\in E$ or $(v_{i+1},v_{i})\in E$.
    We emphasize that although the path itself has a specific direction (from $v_1$ to $v_n$), there is no restriction on the direction of each individual edge.
    
    
    \item  A \textbf{cycle} in $\Lambda$ is a path $(v_1,v_2,\ldots ,v_n)$ such that $v_1=v_n$.
    
    \item We say that $\Lambda$ is connected if any two vertices are connected by a path, noting again that there is no constraint on the directions of the edges comprising the path\footnote{This notion of connectedness for a directed graph is sometimes called ``weakly connected'', as opposed to ``strongly connected'' when one requires $(v_i,v_{i+1}) \in E$ for each path.}.

    \item For any edge $e$ in $\Lambda$ we define the vertices $e^+$ and $e^-$ such that $e=(e^+,e^-)$. For a set of edges $F\subseteq E$ and a vertex $v\in V$ we define 
    $F(v)=\{e\in F | v \in e\}$,
    $F^+(v)=\{e\in F(v) | e^+=v\}$, and $F^-(v)=\{e\in F(v) | e^-=v\}$.
    
    \item A \textbf{bit labeling} is a map $\alpha:V\rightarrow \{0,1\}$. For an element $\sigma\in \{0,1\}$, we denote \textbf{the set of vertices labeled $\sigma$} as $V_\sigma\equiv \{v\in V|\alpha(v)=\sigma\}$. 
    When $V$ are the vertices of an oriented graph $(V,E)$, we also define $E_1 = \{e \in E | e \subseteq  V_1\times V_1 \} $ and $E_0 = E\setminus E_1 $.
    
    \item A \textbf{planar graph} is a graph which can be embedded in a plane without edges intersecting. When we speak of planar graphs, we implicitly assume that a specific embedding has already been chosen.
\end{itemize}

\subsection{Quantum many body systems} \label{subsec: QMBS}

We now define the two fundamental objects of interest that we deal with: a quantum many body system with or without constraints. An unconstrained quantum many body system is a labeled collection of Hilbert spaces. A constrained quantum many body system additionally involves considering only a subspace of the combined Hilbert space to be physical. These two types of systems are of interest because they describe states of condensed matter systems, lattice ultraviolet completions of quantum field theories, and quantum error correcting codes.
For brevity we replace the phrase ``quantum many body system'' with simply ``system'' for the rest of this work.\\
\begin{definition}
An \textbf{unconstrained system} $\mathcal{S}=(V,\{\mathcal{H}_{v\in V}\})$ consists of a set $V$ labeling individual quantum systems living in Hilbert spaces $\{\mathcal{H}_v\}$. 
For a subset $W\subseteq V$, we define the \textbf{algebra of $\mathcal{S}$-operators localized to $W$} as 
\begin{align*}
    \mathcal{A}_\mathcal{S}(W)=\linearops{\mathcal{H}_W}\otimes I_{V\setminus W}\subseteq \linearops{\mathcal{H}_V}.
\end{align*}
When $W=V$ we simply say that $\mathcal{A}_S(V)$ is the \textbf{algebra of $\mathcal{S}$-operators.} 
\end{definition}

\begin{definition}
A \textbf{constrained system} $\mathcal{T}=(\mathcal{S},\Pi)$ consists of an unconstrained system $\mathcal{S}=(V,\{\mathcal{H}_{v\in V}\})$ and a projector $\Pi:\mathcal{H}_V\rightarrow \mathcal{H}_V$. For any $W\subseteq V$, we define the \textbf{algebra of physical $\mathcal{T}$-operators localized to $W$} as
\begin{align}
    \mathcal{A}_\mathcal{T}(W)=\{O\in \mathcal{A}_\mathcal{S}(W)|[O,\Pi]=0\}. \label{eq:localized-physical-operator}
\end{align}
When $W=V$ we simply say that $\mathcal{A}_\mathcal{T}(V)$ is the \textbf{algebra of physical $\mathcal{T}$-operators}. We also call $\Pi\mathcal{H}_V$ the \textbf{physical Hilbert space of $\mathcal{T}$}.
\end{definition}
We emphasize that this definition of physical is relative to the system being considered.
In this work, we consider error-correcting codes in which the logical system is constrained in some way, meaning we can sensibly talk about a ``physical logical operator'', which may sound strange to readers familiar with error-correction.
In our context, this is just a logical operator which is physical with respect to the logical system, meaning it commutes with the relevant projector (i.e.\ preserves the constraint).

Note that a constrained system is a generalization of an unconstrained system. Note also that, unlike for an unconstrained system, for a constrained system $\mathcal{T}$ with non-trivial projector we can find two distinct elements in the algebra of physical $\mathcal{T}$-operators that have the same action on states in the physical Hilbert space of $\mathcal{T}$. This is a key property which the reader may recognize from either quantum error correction or gauge symmetry. In a quantum error correction context, such a non-uniqueness appears because information which is stored in the logical Hilbert space may be manipulated by acting on any set of physical systems from which it may be recovered. In the context of gauge symmetry, it appears because a physical operator may be multiplied (on either side) by an operator implementing a gauge transformation, and its action on physical states would remain the same.

\subsection{Global symmetries \&  global/global dualities}
\label{subsec: global-sym}
In this subsection we define what it means for a system to have a \textit{global symmetry}, and what it means for an isometry between two systems to exhibit a \textit{global/global duality}.

One of the most fundamental features unconstrained\footnote{One could also consider constrained systems with global symmetries; however we stick with unconstrained systems for simplicity and because it is sufficient for our discussions on holographic codes.} systems may exhibit is an internal\footnote{This is in contrast with symmetries that do allow information to spread locally, such as spacetime symmetries. We do not consider such symmetries in this paper, again for simplicity.}, also known as ``on-site'', global symmetry. This usually means that 1) there exists a unitary representation of a group acting on the whole system which is a product of unitary representations acting on individual subsystems, and 2) all elements of this ``collective'' representation commute with the Hamiltonian. We do not consider systems with dynamics in this work, so we do not require 2). With this in mind we now provide a formal definition:

\begin{definition}\label{def:global-symmetry-transformation}
An \textbf{unconstrained system transforming under a global symmetry}, $(\mathcal{S}, G, \{U_v\})$ consists of a system $\mathcal{S}=(V, \{\mathcal{H}_v\})$, a group $G$, and a set of unitary representations of $G$
\begin{align*}
    U_v:G\rightarrow \mathcal{A}_\mathcal{S}(v)
\end{align*}
where we drop the set brackets around $v$ for brevity. For any $W\subseteq V$ we define $U_W(g)\equiv\prod_{w\in W}U_w(g)$. We refer to $U_W(g)$ as a \textbf{global symmetry transformation restricted to $W$}, and when $W=V$ we simply refer to it as a \textbf{global symmetry transformation}. 
\end{definition}

To genuinely be a global symmetry, we should also require that the representations acting on each site be faithful. This is not necessary for our results to apply except when we need to use the results of \cite{harlow-ooguri} in \cref{subsec:connection-to-harlow-ooguri}, since they explicitly make use of this.  

In this paper we often examine isometries between the Hilbert spaces of two systems. When these two systems are unconstrained and transform under a global symmetry with the same group, then we may inquire whether the isometry possesses a \textit{global/global duality} -- that is, whether the isometry is a group isomorphism between the two sets of global symmetry transformations acting on the two different systems. This can be more explicitly stated as follows.

\begin{definition} \label{def:global-global}
For a group $G$ consider two systems of the following form
\begin{enumerate}
    \item A system transforming under a global symmetry $(\mathcal{S},G,\{U_v\})$ with $\mathcal{S}=(V,\{\mathcal{H}_{v\in V}\})$
    \item A system transforming under a global symmetry $(\tilde{\mathcal{S}},G,\{\tilde{U}_{\tilde{v}}\})$ with $\tilde{\mathcal{S}}=(\tilde{V},\{\tilde{\mathcal{H}}_{\tilde{v}\in\tilde{V}}\})$
\end{enumerate}
Consider an isometry $\mathcal{V}:\mathcal{H}_V\rightarrow\tilde{\mathcal{H}}_{\tilde{V}}$. We say that $\mathcal{V}$ has a \textbf{global/global symmetry duality} if for all $g\in G$ the global symmetry transformation $\tilde{U}_{\tilde{V}}(g)$ acting on $\tilde{\mathcal{S}}$  implements the global symmetry transformation $U_{V}(g)$ acting on $\mathcal{S}$, i.e.
\begin{align*}
    \forall g\in G\quad \mathcal{V}U_V(g)&=\tilde{U}_{\tilde{V}}(g)\mathcal{V}.
\end{align*}
\end{definition}

Notice that $\tilde{U}_{\tilde{V}}$ automatically
preserves the image of $\mathcal{V}$, i.e. $ [\tilde{U}_{\tilde{V}}(g),\mathcal{V}\mathcal{V}^\dagger]=0$.
The object $\mathcal{V}$ is also sometimes called an ``intertwiner'' or a ``covariant isometry''.

\subsection{Gauge symmetries}
\label{subsec: gauge-sym}
We now review the standard construction of a lattice gauge theory originally introduced in Ref.~\cite{Kogut:1974ag}, but stripped of all dynamics. In particular, we describe how to build a constrained system ``with gauge symmetry''. We begin by introducing some basic definitions in \ref{subsubsec: gauge-sym-basic-definitions}, and then the formal definition in \ref{subsubsec: gauge-sym-definition}. In \ref{subsubsec: gauge-sym-fixed-flux-sectors} we show how the ``gauge-invariant Hilbert space'' can be written as a direct sum of ``fixed-flux sectors''. Finally in \ref{subsubsec: gauge-global-dualities} we define what it means for an isometry to have ``gauge/global symmetry duality''.

Gauge symmetries are different in nature than global symmetries. A global symmetry involves a unitary representation acting on a system, generically changing its state. In contrast, a system exhibits gauge symmetry if it is constrained to be invariant under the action of local unitary representations. Such a system can be constructed by adding new degrees of freedom to an unconstrained system and restricting the resulting Hilbert space (thus making a constrained system) to be invariant under the action of ``gauge transformations'', defined in depth below. It is not necessary to include gauge transformations at all locations in this restriction, a fact which is of vital importance to our results, and which is encoded in an assignment of a bit to each subsystem of the original system, i.e. a ``bit labeling''. \\

\subsubsection{Basic definitions}
\label{subsubsec: gauge-sym-basic-definitions}

One of the ingredients of a system with gauge symmetry is the presence of degrees of freedom which encode group elements. More specifically, these degrees of freedom live in the Hilbert space $L^2(G)$ for some group $G$. For finite groups this is a Hilbert space with $\{|g\rangle|g\in G\}$ as an orthonormal basis. For continuous compact Lie groups it is essentially the same along with the usual subtleties that we address in \cref{subsubsec: continuous-groups}. A natural representation of $G$ which acts on $L^2(G)$ is ``left-multiplication'', $U^L:G\rightarrow \linearops{L^2(G)}$\footnote{In the literature $U^L$ and $U^R$ are more commonly denoted as $L$ and $R$. We deviate from this notation as we use other meanings for both $L$ and $R$.}, which acts on the basis vector $\ket{h}$ as
\begin{align} \label{eq:UL}
    U^L(g)\ket{h}=\ket{gh}
\end{align}
Another representation is ``right-multiplication'', $U^R:G\rightarrow \linearops{L^2(G)}$, which acts on the basis vector $\ket{h}$ as
\begin{align}\label{eq:UR}
    U^R(g)\ket{h}=\ket{hg^{-1}}.
\end{align}
It is easy to check that these representations commute with each other, i.e.
\begin{align*}
    [U^R(g),U^L(h)]=0
\end{align*}
for any $g,h\in G$.

We often need to take the averaged sum over all group elements of an expression. We denote this as $\int dg$, which for finite groups means $\frac{1}{|G|}\sum_{g\in G}$, and for continuous compact Lie  groups means integration using the Haar measure\footnote{More explicitly, we use the unique left and right invariant measure normalized so that $\int_{G}1\dd{g}= 1 $.}. In the remainder of the paper we restrict to finite groups and continuous groups that are also compact Lie groups\footnote{Note that a finite group is trivially a compact Lie group as it has the discrete topology so every map is trivially smooth and the compactness follows from the finiteness of the group.}. Therefore, when we say continuous group we really mean continuous compact Lie group.

\subsubsection{Systems with gauge symmetry}
\label{subsubsec: gauge-sym-definition}

The definition of gauge symmetry proceeds as follows.
Starting with an ``ungauged'' system, we first introduce new gauge degrees of freedom to obtain a ``pre-gauging'' system.
On this system, we can then define gauge transformations, which couple the gauge degrees of freedom with the original degrees of freedom.
The gauge-symmetric theory is then obtained by constraining the pre-gauging system into a subspace invariant under such gauge transformations.
We formalize this as follows.

\begin{definition}\label{def:pre-gauging}
Given a system transforming under a global symmetry, $(\mathcal{S}, G, \{U_v\})$, with $\mathcal{S}=(V, \{\mathcal{H}_v\})$, referred to here as the \textbf{ungauged system}, and an oriented graph $\Lambda=(V,E)$, we define the \textbf{pre-gauging system} to be the system
\begin{align*}
\mathcal{S}'\equiv(V\cup E,\{\mathcal{H}_{v\in V}\}\cup\{\mathcal{H}_{e\in E}\})  
\end{align*}
with $\mathcal{H}_{v\in V}$ the same for the ungauged and pre-gauged systems. We also require $\mathcal{H}_{e\in E}\cong L^2(G)$, which we sometimes describe as ``gauge degrees of freedom'' that ``live on the edges''.
\end{definition}

\begin{definition}
\label{def: gauge-trans-operator}
Given a system transforming under a global symmetry, $(\mathcal{S}, G, \{U_v\})$, with $\mathcal{S}=(V, \{\mathcal{H}_v\})$ and an oriented graph $\Lambda=(V,E)$, let $\mathcal{S}'$ be the pre-gauged system. For a vertex $v\in V$ we define a \textbf{gauge transformation (operator) localized to $v$} as an operator of the form

\begin{align} \label{eq:Avg}
    A_v(g)=U_v(g)\prod_{e\in E^+(v)}U^R_e(g)\prod_{e\in E^-(v)}U^L_e(g),
\end{align}
with $U^R$ and $U^L$ defined as in \cref{eq:UL,eq:UR}.
Note that $A_v(g)\in\mathcal{A}_{\mathcal{S}'}(\{v\}\cup E(v))$ is a unitary representation of $G$. The Hilbert space invariant under (or stabilized by) gauge transformations localized to $v$ is the image of the projector, 
\begin{align*}
    \Pi_v\equiv \int dg A_v(g)
\end{align*}
called the \textbf{local gauge projector}.
\end{definition}

\begin{definition}
\label{def: gauging-definition}
Given a system transforming under a global symmetry, $(\mathcal{S}, G, \{U_v\})$, with $\mathcal{S}=(V, \{\mathcal{H}_v\})$, and an oriented graph $\Lambda=(V,E)$ with bit labeling $\alpha:V\rightarrow\{0,1\}$, let $\mathcal{S}'$ be the pre-gauged system. We define the system obtained by \emph{gauging} $(\mathcal{S}, G, \{U_v\})$ to be the constrained system $\mathcal{T}=(\mathcal{S}',\Pi_{GI})$ with
\begin{align*}
    \Pi_{GI} \equiv \prod_{v\in V}(\Pi_{v})^{\alpha(v)}=\prod_{v\in V_{1}}\Pi_{v},
\end{align*}
where a projector to the power of zero is understood to mean the identity and as above, $V_{\beta} = \{ v\in V | \alpha(v) = \beta \}$ with $\beta \in \{0,1\}$.
We call $\Pi_{GI}\mathcal{H}_\Lambda$ the \textbf{gauge-invariant Hilbert space}, and operators in $\mathcal{A}_{\mathcal{T}}(\Lambda)$, \textbf{gauge-invariant operators}.
We use the convention that edges connecting vertices in $V_0$ to those in $V_1$ always point from the former to the latter. 
\end{definition}

The label $\alpha(v)$ determines whether or not the Hilbert space is restricted to be invariant under a gauge transformation localized to a vertex $v$. 
We sometimes refer to the vertices in $V_0$ as \emph{NGC} (not gauge-constrained) vertices, and those in $V_1$ as \emph{GC} (gauge-constrained) vertices.
Because gauge transformations localized to NGC vertices are not constrained, they continue to have a non-trivial action on the gauge-invariant subspace, which is a key feature for the rest of the paper.
When we introduce holographic codes, NGC vertices are associated with the Boundary system and GC vertices with the Bulk system.

\begin{definition} \label{def:asymptotic-transformation}
Given a system transforming under a global symmetry, $(V, \{\mathcal{H}_v\}, G, \{U_v\})$ gauged with respect to an oriented graph $\Lambda=(V,E)$ with bit labeling $\alpha$, for any $W\subseteq V_0$ we define the \textbf{NGC} or \textbf{asymptotic symmetry transformation restricted to $W$} as
\begin{align*}
    A_{W}(g)\equiv\prod_{v \in W} A_v(g)
\end{align*}
For the special case $W=V_0$, we refer to $A_{V_0}(g)$ simply as an
\textbf{NGC} or \textbf{asymptotic symmetry transformation}.
\end{definition}

It is straightforward to show that NGC symmetry transformations are gauge-invariant. We do not include vertices in $V_1$ in this transformation. Doing so would make no difference to how the NGC symmetry transformation would act on states in the gauge-invariant Hilbert space, though it would change which subsystems it acts on for states in the pre-gauged Hilbert space. In particular, the $\mathcal{T}$-operator $\prod_{v\in V} A_v(g)$ has the same action as the NGC symmetry transformation $A_{V_0}(g)$.

We use the term \emph{asymptotic symmetry transformation} in holographic contexts because it is used in Ref.~\cite{harlow-ooguri} in the specific case where the system has a geometric structure and the $V_0$ vertices correspond to vertices on the Boundary.
For quantum field theories these transformations take place at asymptotically spacial infinity, hence the name. 
In more application-agnostic settings, such as here and in \cref{sec:generalized-gauging-map}, we use the term NGC simply to emphasize the generality of the construction.

\subsubsection{Fixed-flux sectors}
\label{subsubsec: gauge-sym-fixed-flux-sectors}

We can decompose the gauge-invariant Hilbert space into a direct sum of \textit{fixed-flux sectors}, or simultaneous eigenspaces of all ``NGC-to-NGC'' Wilson lines and all Wilson loops, defined below. This decomposition is useful in understanding the image of the gauging isometry.

The name ``fixed-flux sector'' is inspired by quantum electrodynamics, which can be described as a continuum version of a gauged system where $G=U(1)$. In this context, the Wilson loop operators measure the magnetic flux going through them, and thus a ``fixed-flux sector'' is the portion of the state space for which all magnetic fluxes have fixed given value.

\begin{definition} \label{def:flux-free}
Consider a system transforming under a global symmetry, $(\mathcal{S}, G, \{U_v\})$, with $\mathcal{S}=(V, \{\mathcal{H}_v\})$ and a graph $\Lambda=(V,E)$ with bit labeling $\alpha$.
Let $\mathcal{S}'$ be the pre-gauged system, and $\mathcal{T}=(\mathcal{S}',\Pi_{GI})$ be the gauged system.
Let $r:G\rightarrow\mathds{C}_{d\times d}$ with $d$ some positive integer be a faithful unitary representation of $G$, and let $r_{ij}(g)$ denote its components in some basis.
Define the \textbf{Wilson link operator} acting on edge $e$, $W^{e}_{ij}\in\mathcal{A}_{\mathcal{S}'}(e)$, as
\begin{align*}
    W^e_{ij}|g\rangle_e
    =r_{ij}(g)|g\rangle_e. 
\end{align*}
For $e=(u,v)\in V\times V$  define $\bar{e}\equiv(v,u)$
, and $W^{\bar{e}}_{ij}\equiv (W^{e}_{ji})^\dagger$. To each cycle $l=(v_1,\ldots ,v_n)$ such that $v_1,\ldots ,v_n\in V_1$, we associate the \textbf{Wilson loop operator} 
    $$
    W^l\equiv Tr({W^{\overline{(v_1,v_2)}}}\ldots {W^{\overline{(v_{n-1},v_{n})}}} {W^{\overline{(v_n,v_{1})}}}), \footnote{It is possible to remove the overlines by switching left and right multiplication in the definition of gauge transformations. We choose this convention instead to maintain consistency with \cite{Cirac_2015}.}
    $$
where both the multiplication of operators on the right hand side and the trace are understood to be with respect to the representation indices.

For each undirected path $p=(u,v_1,\ldots ,v_n,w)$ with $u,w\in V_0$ and $v_1,\ldots ,v_n\in V_1$ we define the $\textbf{NGC-to-NGC Wilson line operator}$

\begin{align*}
    W^p_{ij}\equiv (W^{\overline{(u,v_1)}}\ldots W^{\overline{(v_{n-1},v_{n})}} W^{\overline{(v_n,w)}})_{ij}.
\end{align*}

All Wilson loop and NGC-to-NGC Wilson line operators commute with each other and are gauge-invariant (Shown in the appendices under \cref{app:ProofWilsonLoopsI}), and thus their eigenspaces can be used to partition the gauge-invariant Hilbert space. We call a subspace of $\Pi_{GI}\mathcal{H}_\Lambda$ which is a simultaneous eigenspace of $W^l$ and $W^p_{ij}$ for all loops $l$ and NGC-to-NGC paths $p$, for all choices of $r$, a \textbf{fixed-flux sector}. 
There is a special sector for which it is sufficient to use a single representation $r$, called  the \textbf{flux-free sector}, which has eigenvalue $Tr(r(I))$ for all $W^l$ and $r_{ij}(I)$ for all $W^p_{ij}$. We denote its projector by $\Pi_{FF}$.
Because these lines and loops measure only edge degrees of freedom, $\mathcal{H}_E$, this projector can also be expressed in the form $\qty( \Pi_{FF}^E \otimes \id_V ) \Pi_{GI}$, with $\Pi_{FF}^E$ projecting onto the eigenspaces described above.
\end{definition}

To understand this definition, notice that $W^e$ essentially measures the group element at site $e$. Thus an NGC-to-NGC Wilson line measures the product of group elements along its path.
The reason that we only consider Wilson lines ending on NGC vertices at both ends is that only these Wilson lines are gauge invariant on their own. 
As we discuss below Wilson lines ending at GC vertices can be gauge-invariant when in combination with vertex operators.

The Wilson loop is a bit more subtle. We can not say that it measures the product of group elements, $g$, along the loop because $g$ would depend on which link we started with. The Wilson loop actually measures the conjugacy class of $g$, defined by $C(g)=\{hgh^{-1}|h\in G\}$. 
Using the cyclic property of the trace, it is clear that $\Tr[r(g)]$ depends only on $C(g)$. 
Character theory ensures that the values of $\Tr[r(g)]$ for all irreducible representations $r$ uniquely determines conjugacy class (see e.g.\ Ref.~\cite{martin_1963}). However, it is sufficient to use a single faithful (finite-dimensional) unitary representation to distinguish the conjugacy class $C(I)$ from the others. This, and the fact that the identity is the only member of its conjugacy class, shows that the flux-free sector as defined is the unique subspace of $\Pi_{GI}\mathcal{H}_\Lambda$ for which the product of group elements along any path or loop is the identity.

What degrees of freedom remain within each fixed-flux sector? Even if all elements of $\{\mathcal{H}_{v\in V}\}$ are trivial, there may still remain some pure gauge degrees of freedom (see Refs.~\cite{Cui_2020} and \cite{Sengupta}). As we uncover in greater detail in \cref{sec:generalized-gauging-map}, if some of the $\{\mathcal{H}_{v\in V}\}$ are non-trivial, then they contribute additional degrees of freedom to each flux sector. These are completely fixed by Wilson lines that, on at least one side, end with an action on a subsystem corresponding to an element of $V_1$. We call these ``NGC-to-GC'' (when the other end is in $V_0$, i.e.\ an NGC vertex) and ``GC-to-GC'' (when both are in $V_1$) Wilson lines.

We note that, alongside Wilson loops, NGC-to-NGC Wilson lines, and NGC-to-charge Wilson lines, there is another gauge invariant operator $U_e^L(g)=U_e^R(g)$ for $g$ in the center of $G$ and any $e$ is also gauge invariant. We refer to these as $\textbf{central flux operators}$. Note that left and right multiplication are equivalent only for central elements, and that only for central elements are they gauge-invariant.

\subsubsection{Gauge/global dualities}
\label{subsubsec: gauge-global-dualities}
We now introduce a definition for an isometry with a \textit{gauge/global duality}. The definition is motivated by AdS/CFT, where global symmetry transformations acting on the Boundary CFT implement asymptotic gauge transformations in the Bulk AdS \cite{harlow-ooguri}. 

\begin{definition}\label{def:gauge-global}
Consider the following two systems involving the group $G$:
\begin{enumerate}
    \item A constrained system $\mathcal{T}=(\mathcal{S}',\Pi_{GI})$ which is obtained by gauging a system with global symmetry $(\mathcal{S},G,\{U_v\})$, where $\mathcal{S}=(V,\{\mathcal{H}_{v\in V}\})$,  with respect to an oriented graph $\Lambda=(V,E)$ with bit labeling $\alpha:V\rightarrow\{0,1\}$.
    \item A system transforming under a global symmetry $(\tilde{\mathcal{S}},G,\{\tilde{U}_{\tilde{v}}\})$ with $\tilde{\mathcal{S}}=(\tilde{V},\{\tilde{\mathcal{H}}_{\tilde{v}\in \tilde{V}}\})$.
\end{enumerate}

Consider an isometry $\mathcal{V}:\Pi_{GI}\mathcal{H}_\Lambda\rightarrow\tilde{\mathcal{H}}_{\tilde{V}}$. We say that $\mathcal{V}$ has a \textbf{gauge/global symmetry duality} if for all $g\in G$ the global symmetry transformation $\tilde{U}_{\tilde{V}}(g)$ on $\tilde{\mathcal{S}}$ implements the NGC symmetry transformation $A_{V_0}(g)$ on $\mathcal{T}$, i.e.
\begin{align*}
   \forall g \in G, \quad \mathcal{V}A_{V_0}(g)&=\tilde{U}_{\tilde{V}}(g)\mathcal{V}.
\end{align*}
If instead $\mathcal{V}$ is an isometry of the form $\mathcal{V}:\tilde{\mathcal{H}}_{\tilde{V}}\rightarrow\Pi_{GI}\mathcal{H}_\Lambda$ then we say it has a \textbf{global/gauge symmetry duality} if for all $g\in G$ the global symmetry transformation $\tilde{U}_{\tilde{V}}(g)$ on $\tilde{\mathcal{S}}$ is implemented by the NGC symmetry transformation $A_{V_0}(g)$ on $\mathcal{T}$, i.e.
\begin{align*}
    \forall g \in G, \quad  \mathcal{V}\tilde{U}_{\tilde{V}}(g)&=A_{V_0}(g)\mathcal{V}.
\end{align*}
Lastly, if instead we replace $\Pi_{GI}\mathcal{H}_\Lambda$ with one of its fixed-flux sectors, we instead say that $\mathcal{V}$ exhibits a \textbf{fixed sector gauge/global (or global/gauge) symmetry duality}. 
\end{definition}

Again we automatically have that
 $[\tilde{U}_{\tilde{V}}(g),\mathcal{V}\mathcal{V}^\dagger] =0$ for the gauge/global case and 
  $[A_{V_0}(g),\mathcal{V}\mathcal{V}^\dagger] =0$ for the global/gauge case. 

\section{The gauging isometry}
\label{sec:generalized-gauging-map}

The procedure for defining a system with gauge symmetry, as we have presented it in Sec.~\ref{subsec: gauge-sym}, appears at first glance an involved modification of a system with global symmetry. There is, however, an explicit relationship between these two systems in the form of a linear map between their two overall Hilbert spaces. The properties of this map in the context of lattice spin systems have been studied in various guises since the work of Refs.~\cite{Kramers1941,wegner1971duality}. We follow the recent treatment in Ref.~\cite{Cirac_2015}, which focused on the case where gauge constraints are imposed over all the local gauge transformations, i.e. $V_0=\varnothing$.
 In this case, the map has as its support the space of states that transform trivially under the global symmetry, where it acts as an isometry.

In the Bulk of AdS/CFT however, the gauge constraint is not enforced at the asymptotic boundary (see Ref.~\cite{harlow-ooguri}). To model additional features of AdS/CFT we would like to study holographic codes that do likewise. For this reason we now generalize this map by allowing for some local gauge transformations to be excluded from the gauge constraint. We then prove the relevant properties of this map, namely that it can be an isometry on the whole system, that it preserves the locality of operators up to a ``dressing to the boundary'', that it displays a global/gauge duality, and that its image is the full flux-free sector. We then describe a generalization of this map to other flux sectors. 

\subsection{Definition of the map}
We now define a map that ``gauges'' states of a system transforming under a global symmetry by encoding them into gauge-invariant states. Our definition differs from the one in Ref.~\cite{Cirac_2015} only in that it allows for the possibility that some gauge transformations are not included in the gauge constraint. 

\begin{definition}
\label{def: gauging-map}
Consider an unconstrained system transforming under a  global symmetry, $(\mathcal{S}, G, \{U_v\})$  with $\mathcal{S}=(V, \{\mathcal{H}_v\})$ and an oriented graph $\Lambda=(V,E)$ with $\alpha$ labeling. Let $\mathcal{T}=(\mathcal{S}',\Pi_{GI})$ be the constrained system obtained by gauging $\mathcal{S}$. Then we define the \textbf{gauging map} $\mathcal{G}$ as
\begin{equation}
    \begin{split}
        \mathcal{G}:\mathcal{H}_V&\rightarrow\Pi_{GI}\mathcal{H}_{\Lambda} \\
    |\psi\rangle_V&\mapsto \Pi_{GI}\left (|\psi\rangle_V\bigotimes_{e\in E}|I\rangle_e\right )
    \end{split}
\end{equation}
where $I$ is the identity element of $G$. 
\end{definition}

There is a subtle terminology distinction that should be clarified, as it may cause confusion.
The phrase ``the system obtained by gauging $(\mathcal{S}, G, \{U_v\})$'', as defined in \cref{def: gauging-definition}, refers only to a system constructed from an ungauged system, with appropriate edge degrees of freedom added and a gauge-invariance constraint imposed.
It is not to be confused with the application of the gauging map $\mathcal{G}$ to an ungauged system, because its image, as we explain shortly is additionally constrained beyond the gauge-invariance requirement.
The former is a prescription for constructing a Hilbert space, whereas the latter is a specific isometry that embeds the ungauged system into the former.
This potential cause for confusion is an unfortunate consequence of the former phrase being standard in the literature.

\subsection{The map can be an isometry}

We now show that the gauging map can be made an isometry when the graph is connected and some gauge transformations are not included in the gauge constraint.

\begin{theorem}
\label{thm: gauging-map-is-isometry}
If $V_0\neq\varnothing$ and $\Lambda$ is connected, then $\mathcal{G}$ is an isometry up to an overall normalization factor.
\end{theorem}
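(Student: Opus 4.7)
The plan is to compute $\Gmc^\dagger\Gmc$ directly and show it equals a positive multiple of $\id_V$. First, I need that $\Pi_{GI}$ is a genuine orthogonal projector: each $\Pi_v = \int\dd{g}\, A_v(g)$ is self-adjoint and idempotent by Haar invariance together with unitarity of $A_v$, and any two $\Pi_v,\Pi_u$ commute because on any shared edge $A_v(g)$ acts through $U^R$ while $A_u(h)$ acts through $U^L$, and these commute by construction. Restricting $\Gmc^\dagger$ to its natural domain $\Pi_{GI}\mathcal{H}_\Lambda$ then gives $\Gmc^\dagger = \id_V\otimes\bra{I}_E$, so for any $\ket{\psi},\ket{\phi}\in\mathcal{H}_V$,
\begin{equation*}
\bra{\phi}\Gmc^\dagger\Gmc\ket{\psi} = \Dgv (\bra{\phi}\otimes\bra{I}_E)\prod_{v\in V_1} A_v(g_v)(\ket{\psi}\otimes\ket{I}_E).
\end{equation*}

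Next I would evaluate the edge part of this expression. Expanding each $A_v(g_v)$ via \cref{eq:Avg} and collecting operators edge by edge, each $e\in E$ picks up the factor $U_e^L(g_{e^-})U_e^R(g_{e^+})$, with the convention $g_v := I$ for $v\in V_0$ so that contributions from NGC endpoints are trivial. Using \cref{eq:UL,eq:UR},
\begin{equation*}
\bra{I}_e U_e^L(g_{e^-})U_e^R(g_{e^+})\ket{I}_e = \braket{I}{g_{e^-}g_{e^+}^{-1}}_e,
\end{equation*}
which is a (Kronecker in the finite case, Dirac in the continuous case) delta on $G$ enforcing $g_{e^+} = g_{e^-}$. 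The full matrix element therefore reduces to
\begin{equation*}
\bra{\phi}\Gmc^\dagger\Gmc\ket{\psi} = \Dgv \prod_{e\in E}\delta_G(g_{e^+},g_{e^-})\,\bra{\phi}\prod_{v\in V_1} U_v(g_v)\ket{\psi}.
\end{equation*}

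The final step invokes the graph hypotheses. The edge deltas force the assignment $v\mapsto g_v$ to be constant on each connected component of $\Lambda$; since $\Lambda$ is connected and $V_0\neq\varnothing$, there is some $v_0\in V_0$ on the (unique) component pinning $g_{v_0} = I$, hence $g_v = I$ for every $v\in V_1$. On this locus $\prod_{v\in V_1} U_v(g_v) = \id_V$, and the residual integration over the delta-supported configuration contributes a positive constant $C$ depending only on $G$ and $\Lambda$, yielding $\Gmc^\dagger\Gmc = C\,\id_V$, which is the claim.

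The main delicacy is controlling $C$ in the continuous case, where the product of $|E|$ delta distributions against $|V_1|$ Haar integrations must be shown to give a finite positive number. I would handle this by choosing a spanning tree of $\Lambda$ rooted at a vertex of $V_0$ and using exactly $|V_1|$ of its tree-edge constraints to integrate out each $g_v$ in sequence back to $I$; the remaining $|E|-|V_1|$ deltas then evaluate on the diagonal and contribute finite Haar weights. Both hypotheses enter precisely at this step: connectedness lets the spanning tree reach every $V_1$ vertex, and $V_0\neq\varnothing$ provides the seed value that propagates through the tree to pin every $g_v$. Without either hypothesis one is left with a residual diagonal integration, which projects onto a globally $G$-invariant subspace rather than isometrically embedding all of $\mathcal{H}_V$.
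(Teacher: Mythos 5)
Your proposal is correct and follows essentially the same route as the paper: compute $\Gmc^\dagger\Gmc$, use $\Pi_{GI}^\dagger\Pi_{GI}=\Pi_{GI}$, reduce the edge matrix elements to deltas $\braket{I}{g_{e^-}g_{e^+}^{-1}}_e$, and use connectedness together with $V_0\neq\varnothing$ to pin every $g_v$ to $I$, leaving a constant times $\id_V$. The only caveat is your claim that the $|E|-|V_1|$ leftover deltas ``contribute finite Haar weights'' in the continuous case — a Dirac delta evaluated at the identity is not finite, so the normalization constant is genuinely ill-defined there without regularization; the paper quietly restricts its explicit constant $\kappa^2=|G|^{-|V_1|}$ to finite $G$ and only confronts this $\delta(I)$ issue later in its truncation scheme for compact Lie groups.
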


\begin{proof}
Consider the map $\Pmc=\mathcal{G}^{\dagger}\mathcal{G}^{}: \mathcal{H}_{V} \rightarrow \mathcal{H}_{V}$ given by 

\begin{align*}
    \Pmc &= \bra{\{I_e\}}_{E} \left(\Pi_{GI}\right)^{\dagger}\Pi_{GI} \ket{\{I_e\}}_{E} = \bra{\{I_e\}}_{E} \Pi_{GI} \ket{\{I_e\}}_{E} \\
    &=\int\prod_{v\in V_1}\dd{g_v} \prod_{u\in V_1}U_{u}(g_u) \prod_{e\in E} \mel{I}{
    U_{e}^{L}(g_{e^{-}}^{})
    U_{e}^{R}(g_{e^{+}}^{})}{I}_e  \\
    &=\int\prod_{v\in V_1}\dd{g_v} \prod_{u\in V_1}U_{u}(g_u) \prod_{e\in E} \braket{I}{g_{e^{-}}^{}g_{e^{+}}^{-1}}_e.   
\end{align*}
where in the last line if $e^{+}\in V_0$ we have $g_{e^{+}} = I$ since the gauge transformation acting on $e^{+}$ is not included in the gauge constraint. The last factor is non-zero only when $g_{e^{-}}^{}=g_{e^{+}}^{}$. When $V_0 \neq \varnothing$ and $\Lambda$ is connected, this causes all $g$'s to be set to $I$\footnote{In the case when $V_0=\varnothing$ then all $g$'s are equal to each other but there remains a sum over all elements of the group, making the map a projector onto the trivial representation. 
}%
.

Thus $\mathcal{P} = \kappa^{2}\id_{\mathcal{H}_V}$ with 
\begin{align*}
    \kappa^{2} &= \int\prod_{v\in V_1}\dd{g_v} \prod_{e\in E}\braket{I}{g_{e^{-}}^{}g_{e^{+}}^{-1}}.
\end{align*}
In particular, for $G$ finite the above formula is   
\[ \kappa^2 = \abs{G}^{-\abs{V_1}}.\]
Then one could make $\Gmc$ an isometry by dividing by $\kappa$.
\end{proof}

In \cref{sec:Image-Isometry-Is-Flux-Free}, we find that the image of this map is precisely the flux-free sector introduced in \cref{def:flux-free}.
We can use that here result to provide a heuristic degrees-of-freedom-counting argument for why the existence of at least one NGC vertex makes allows the map become an isometry.
When there are no NGC vertices (i.e.\ the setting studied in Ref.~\cite{Cirac_2015}), the gauging map enforces too many constraints, causing all states other than the symmetric subspace of the input system to be annihilated.
Although the gauging map provides additional degrees of freedom in the form of edge states, these are counteracted by the gauge constraints and the flux-free Wilson loop constraints.
If a GC vertex is changed to be an NGC vertex, then these constraints are replaced by the trivial NGC-to-NGC Wilson line constraints, but only so long as there is already at least one other NGC vertex, otherwise no such lines can exist.
Thus the first NGC vertex added provides additional degrees of freedom not counteracted by constraints, allowing the map to no longer annihilate any states.

\subsection{Local operators are implemented by their dressed versions}
\label{subsec: dressing-local-operators}

We now show that local ungauged operators are implemented by ``dressed'' gauged operators. The dressed operator may be chosen to have support on any path from the original site to a point on the boundary, as portrayed in \cref{fig:dressing-local-ops}. The non-uniqueness can be understood in analogy with error correcting codes\footnote{In general, our discussions of error-correcting codes throughout this work are in the context of correcting for erasure errors.}, where a logical operation can be implemented on any subset of physical degrees of freedom on which the logical information can be recovered. This is true for isometries in general, even if they are not good at correcting errors. 

\begin{figure}[t]
  \centering
    \includegraphics[width=1.0\textwidth]{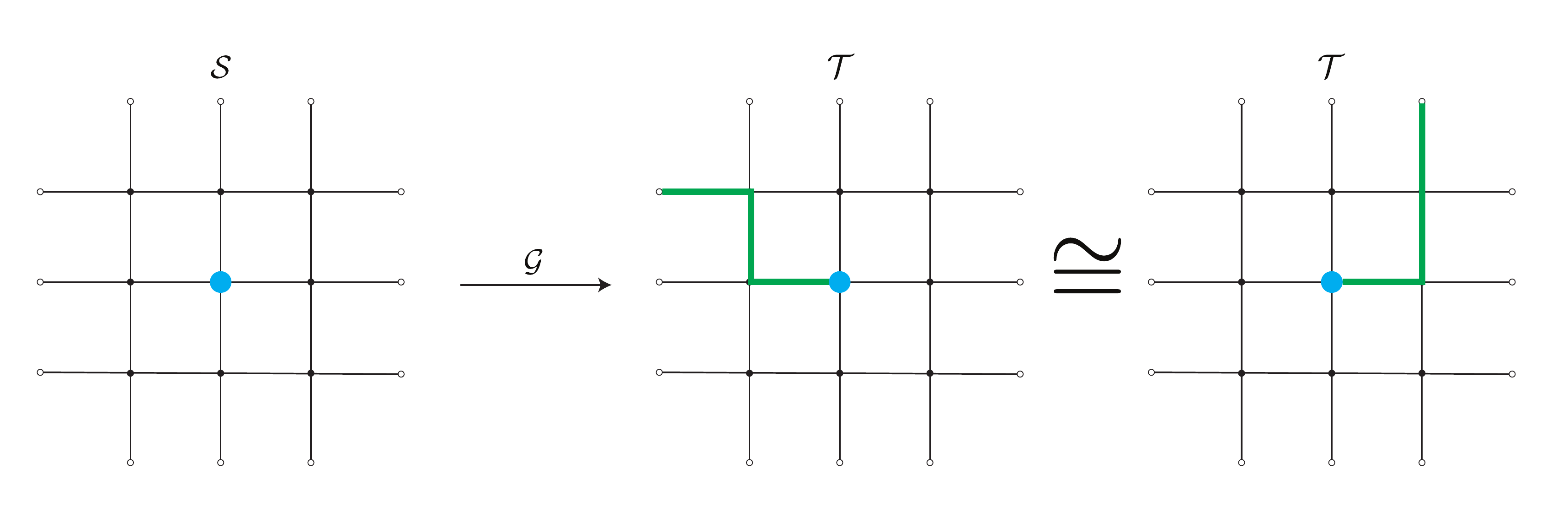}
    \caption{Mapping of an $\mathcal{S}$ operator under $\mathcal{G}$. The black and white dots are vertices labeled $1$ and $0$ respectively. The blue dot represents an operator acting on a vertex $v\in V_1$, while the green line represents an operator acting on all the edges it lies on. The localized $\mathcal{S}$ operator can be implemented on $\mathcal{T}$ via an operator supported on $v$, as well as all the edges lying along a path from $v$ to any $v'\in V_0$.}
  \label{fig:dressing-local-ops}
\end{figure}

\newcommand{\subgraphGaugingLemma}{%
Consider a system transforming under a global symmetry, $(\mathcal{S}, G, \{U_v\})$ with $\mathcal{S}=(V, \{\mathcal{H}_v\})$ and an oriented graph $\Lambda=(V,E)$ with $\alpha$ labeling. Let $\mathcal{S}'$ be the pre-gauged system and $\mathcal{T}$ the gauged system. Consider a subgraph of $\Lambda$, $\Gamma=(V_\Gamma,E_\Gamma)$, such that $E_{\Gamma}^{+}\cup E_{\Gamma}^{-}\subseteq V_{\Gamma}\cup V_{0} $, i.e.\ all the vertices of the edges on $E_{\Gamma}$ are in $V_0$ or $V_{\Gamma}$. Define, for any $v\in V_\Gamma$ and $g\in G$, the following $\mathcal{S}'$ operator:
\begin{align*}
    A^\Gamma_{v}(g)=U_v(g) \prod_{e\in E^+(v)}U^R_e(g)\prod_{e\in E^-(v)}U^L_e(g).
\end{align*}
For any $Q_{\Gamma}\in \mathcal{A}_{\mathcal{S}'}(\Gamma)\equiv\mathcal{A}_{\mathcal{S}'}(V_\Gamma\cup E_\Gamma)$, the operator

\begin{align*}
O_\Gamma= {\DgvG} \quad \left (\prod_{v\in V_\Gamma}A^\Gamma_v(g_{v})\right)
Q_{\Gamma}
\left (\prod_{v\in V_\Gamma}A^\Gamma_v(g_{v})\right)^\dagger    
\end{align*}
is an element of $\mathcal{A}_\mathcal{T}(\Gamma)$.
}

\begin{lemma}
\label{lemma:subgraph-gauging-lemma}
\subgraphGaugingLemma
\end{lemma}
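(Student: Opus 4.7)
The plan is to verify the two defining conditions for membership in $\mathcal{A}_\mathcal{T}(\Gamma)$: that $O_\Gamma$ is supported on $V_\Gamma \cup E_\Gamma$, and that $[O_\Gamma,\Pi_{GI}]=0$. The support condition is essentially immediate: $Q_\Gamma$ is supported on $V_\Gamma \cup E_\Gamma$ by assumption, and each $A^\Gamma_v(g_v)$ acts only on $v\in V_\Gamma$ together with edges incident to $v$. Any factor of $A^\Gamma_v(g_v)$ living on an edge outside $E_\Gamma$ meets only identity factors of $Q_\Gamma$ and cancels against its dagger in the conjugation, so $O_\Gamma$ is manifestly supported on $V_\Gamma \cup E_\Gamma$.

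For the commutation with $\Pi_{GI} = \prod_{v^* \in V_1} \Pi_{v^*}$, it suffices to show $[O_\Gamma, \Pi_{v^*}] = 0$ for each $v^* \in V_1$ separately, which I would split into two cases. In the first case, $v^* \in V_1 \setminus V_\Gamma$: here the subgraph hypothesis $E_\Gamma^+ \cup E_\Gamma^- \subseteq V_\Gamma \cup V_0$ is used to conclude that no edge of $E_\Gamma$ touches $v^*$ (since $v^* \notin V_\Gamma \cup V_0$), so $A_{v^*}(g)$ is supported entirely outside $V_\Gamma \cup E_\Gamma$ and commutes with $O_\Gamma$ by support alone.

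The substantive case is $v^* \in V_1 \cap V_\Gamma$. I would first decompose $A_{v^*}(g) = A^\Gamma_{v^*}(g)\, B$, where $B$ collects the factors acting on edges incident to $v^*$ but not in $E_\Gamma$. Since $B$ is supported outside $V_\Gamma \cup E_\Gamma$, it automatically commutes with $O_\Gamma$, reducing the task to showing $[A^\Gamma_{v^*}(g), O_\Gamma] = 0$. The key tools are the representation identity $A^\Gamma_{v^*}(g) A^\Gamma_{v^*}(g_{v^*}) = A^\Gamma_{v^*}(g g_{v^*})$, together with the commutativity of $A^\Gamma_v$ for distinct vertices (a consequence of $[U^L_e, U^R_e]=0$ on any shared edge $e$), which lets me bring $A^\Gamma_{v^*}(g)$ through the product $\prod_{v \in V_\Gamma} A^\Gamma_v(g_v)$ to sit next to $A^\Gamma_{v^*}(g_{v^*})$ inside the integrand. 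Left-invariance of the Haar measure then permits the substitution $g_{v^*} \mapsto g^{-1} g_{v^*}$, which restores $A^\Gamma_{v^*}(g_{v^*})$ on the left while producing a compensating $A^\Gamma_{v^*}(g)$ on the right of the daggered factor. Integration over $g$ then gives $[\Pi_{v^*}, O_\Gamma] = 0$.

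The main obstacle I anticipate is cleanly executing the change of variables in the second case while tracking which factors commute with which — in particular, the daggered side produces a residual $A^\Gamma_{v^*}(g^{-1})^\dagger = A^\Gamma_{v^*}(g)$ that must be shown to escape to the far right of the integrand without snagging on the other $A^\Gamma_v(g_v)^\dagger$ factors. Once this bookkeeping is settled, combining the two cases establishes $[O_\Gamma, \Pi_v] = 0$ for all $v \in V_1$, hence $[O_\Gamma, \Pi_{GI}] = 0$, completing the proof that $O_\Gamma \in \mathcal{A}_\mathcal{T}(\Gamma)$.
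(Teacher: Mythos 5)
Your proof is correct and follows essentially the same route as the paper's: both rest on the decomposition $A_{v}(g)=A^\Gamma_{v}(g)\cdot(\text{edge factors outside }E_\Gamma)$, the mutual commutativity and composition law of the $A^\Gamma_v$, and a Haar-measure change of variables that reabsorbs the extra group element into the integrand. The only difference is organizational — you commute $\Pi_{v^*}$ through $O_\Gamma$ one vertex at a time, whereas the paper shifts all integration variables $h_v \mapsto g_v h_v$ simultaneously — and the bookkeeping worry you flag resolves exactly as you anticipate, since the residual $A^\Gamma_{v^*}(g)$ commutes with every $A^\Gamma_v(g_v)^\dagger$ for $v\neq v^*$.
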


\begin{proof} The proof is given in \cref{proof:subgraph-gauging-lemma}. 
\end{proof}

We now prove our claim that a local ungauged operator supported on some vertex $u\in V_1$ can be implemented by a gauged operator with the same support plus an arbitrary dressing to the boundary.
Specifically, this dressing lies on a subgraph $\Gamma = (V_\Gamma, E_\Gamma)$ consisting of a path from $u$ to $v_0 \in V_0$ and its edges. However, it does not have support on $v_0$ itself.

\begin{theorem}
\label{thm: operator-dressing-theorem}
Consider a system transforming under a global symmetry, $(\mathcal{S}, G, \{U_v\})$  with $\mathcal{S}=(V, \{\mathcal{H}_v\})$ with $\Lambda=(V,E)$ connected and labeled by $\alpha$ such that $V_0 \neq \varnothing $. Let $\mathcal{T}$ be obtained from gauging $\mathcal{S}$. Let $\mathcal{G}$ be the corresponding gauging isometry.
Consider $u \in V$ and a subgraph of $\Lambda$, $\Gamma=(V_\Gamma,E_\Gamma)$ with $V_\Gamma \subseteq  V_1$, such that the edges $E_\Gamma$ and the vertices $V_\Gamma \cup \{v_0\}$ form a path starting at $u$ and ending with a vertex $v_0 \in V_0$.
For any $O_{u}\in \mathcal{A}_{\mathcal{S}}(u)$, there exists an operator $O_\Gamma\in \mathcal{A}_\mathcal{T}(\Gamma)$ called a \textbf{dressed operator} that preserves the image of the gauging map and implements $O_u$, i.e.\ 
\begin{align*}
     \mathcal{G}O_{u}&=O_\Gamma\mathcal{G}, \\
     [O_\Gamma,\mathcal{G}\mathcal{G}^\dagger]&=0.
\end{align*}
\end{theorem}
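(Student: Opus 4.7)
The plan is to construct $O_\Gamma$ as a gauge-averaged dressing of a carefully chosen seed operator, and then verify the intertwining $O_\Gamma\mathcal{G}=\mathcal{G}O_u$ by a direct computation on the reference state $|\psi\rangle_V\otimes\bigotimes_e|I\rangle_e$ used to build $\mathcal{G}$. Concretely, for the given $O_u\in\mathcal{A}_{\mathcal{S}}(u)$ I would take the seed
\begin{align*}
Q_\Gamma\defi c\cdot O_u\otimes\bigotimes_{e\in E_\Gamma}|I\rangle\!\langle I|_e\;\in\;\mathcal{A}_{\mathcal{S}'}(V_\Gamma\cup E_\Gamma),
\end{align*}
where $c$ is a normalization constant absorbing the Haar-measure factors of the averaging to come (for finite $G$, $c=|G|^{|V_\Gamma|}$). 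Since $V_\Gamma\subseteq V_1$ and the path hypothesis gives $E_\Gamma^+\cup E_\Gamma^-\subseteq V_\Gamma\cup V_0$, \cref{lemma:subgraph-gauging-lemma} applies to this $Q_\Gamma$ and produces the candidate $O_\Gamma\in\mathcal{A}_\mathcal{T}(\Gamma)$: it is supported on $V_\Gamma\cup E_\Gamma$ and commutes with $\Pi_{GI}$.

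To verify $O_\Gamma\mathcal{G}=\mathcal{G}O_u$, since $[O_\Gamma,\Pi_{GI}]=0$ it suffices to evaluate $\Pi_{GI} O_\Gamma(|\psi\rangle_V\otimes\bigotimes_e|I\rangle_e)$. Writing $B(g)=\prod_{v\in V_\Gamma}A^\Gamma_v(g_v)$, a short bookkeeping argument gives
\begin{align*}
B(g)Q_\Gamma B(g)^\dagger=c\cdot U_u(g_u)O_u U_u(g_u)^\dagger\otimes\bigotimes_{e\in E_\Gamma}|\phi_e(g)\rangle\!\langle\phi_e(g)|_e,
\end{align*}
where $\phi_e(g)=g_{e^-}g_{e^+}^{-1}$ with the convention $g_v=I$ for $v\in V_0$. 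Acting on $\mathcal{G}|\psi\rangle=\int\prod_{v\in V_1}dh_v\prod_v U_v(h_v)|\psi\rangle\otimes\bigotimes_e|\phi_e(h)\rangle_e$, the edge projectors contribute deltas $\prod_{e\in E_\Gamma}\delta(\phi_e(g),\phi_e(h))$; because $\Gamma$ is a path terminating at $v_0$ where $g_{v_0}=I=h_{v_0}$, these cascade from the $v_0$-endpoint inward, forcing $g_v=h_v$ for every $v\in V_\Gamma$ and collapsing the $g$-integral, with $c$ cancelling the residual Haar factor. Using $U_u(h_u)O_u U_u(h_u)^\dagger \cdot U_u(h_u)=U_u(h_u)O_u$ and commuting $O_u$ past the remaining $U_v(h_v)$ for $v\neq u$, what remains is exactly $\mathcal{G}O_u|\psi\rangle$.

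Finally, $[O_\Gamma,\mathcal{G}\mathcal{G}^\dagger]=0$ follows by an adjoint trick: applying the same construction to $O_u^\dagger$ yields an operator which, by the Hermiticity of both the gauge averaging and the projectors $|I\rangle\!\langle I|_e$, equals $O_\Gamma^\dagger$; its intertwining $O_\Gamma^\dagger\mathcal{G}=\mathcal{G}O_u^\dagger$ dualizes to $\mathcal{G}^\dagger O_\Gamma=O_u\mathcal{G}^\dagger$, and combining with $O_\Gamma\mathcal{G}=\mathcal{G}O_u$ gives $O_\Gamma\mathcal{G}\mathcal{G}^\dagger=\mathcal{G}O_u\mathcal{G}^\dagger=\mathcal{G}\mathcal{G}^\dagger O_\Gamma$. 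The main technical obstacle I anticipate is the continuous-group case: $|I\rangle\!\langle I|_e$ is not a bounded operator on $L^2(G)$ but a distribution, so the seed $Q_\Gamma$ must be interpreted through the Haar-integrated final expression, which is morally a conjugation of $O_u$ by the Wilson-line holonomy along $\Gamma$. Making this distributional interpretation precise while keeping track of the normalization $c$ for compact Lie groups will require some care.
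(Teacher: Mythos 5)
Your proposal is correct and follows essentially the same route as the paper's proof: the same seed $O_u\otimes\bigotimes_{e\in E_\Gamma}\ketbra{I}{I}_e$ fed into \cref{lemma:subgraph-gauging-lemma}, the same cascade of delta constraints propagating inward from the $v_0$ endpoint to collapse the gauge average (the paper fixes the normalization $\kappa_\Gamma=|G|^{-|V_\Gamma|}$ at the end rather than absorbing it upfront), and the same adjoint argument $O_\Gamma^\dagger\mathcal{G}=\mathcal{G}O_u^\dagger$ to establish $[O_\Gamma,\mathcal{G}\mathcal{G}^\dagger]=0$. Your closing caveat about the distributional nature of $\ketbra{I}{I}_e$ for compact Lie groups is a fair observation, but the paper is no more careful on this point.
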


\begin{proof}
The proof is by construction. The candidate $O_\Gamma$ is the one defined in \cref{lemma:subgraph-gauging-lemma}
\begin{align*}
    O_\Gamma=\int \prod_{v\in V_\Gamma}\dd g_v \quad \left (\prod_{v\in V_\Gamma}A^\Gamma_v(g_{v})\right) Q
\left (\prod_{v\in V_\Gamma}A^\Gamma_v(g_{v})\right)^\dagger
\end{align*}
with 
\begin{align}
    Q=O_{u} \otimes \bigotimes_{e\in E_\Gamma} \ketbra{I}{I}_e
\end{align}
By lemma \ref{lemma:subgraph-gauging-lemma} we have that $O_\Gamma\in \mathcal{A}_\mathcal{T}(\Gamma)$.
To prove that it implements $O_u$, we show that for all $|\psi\rangle_V\in\mathcal{H}_V$ we have 
\[\Gmc O_{u}\ket{\psi}_V=O_\Gamma\Gmc\ket{\psi}_V.\]
Looking at the right hand side, expanding $\mathcal{G}$ and using the gauge-invariance of $O_\Gamma$ to commute it with $\Pi_{GI}$ gives 
\begin{align*}
    RHS&=\Pi_{GI} O_\Gamma\left (|\psi\rangle_V\bigotimes_{e\in E}|I\rangle_e\right) \\
    &=\Pi_{GI} \int\prod_{v\in V_\Gamma} d{g_v} \left (\prod_{v\in V_\Gamma}A^\Gamma_v(g_{v})\right) Q
\left (\prod_{v\in V_\Gamma}A^\Gamma_v(g_{v})\right)^\dagger \left (|\psi\rangle_V\bigotimes_{e\in E}|I\rangle_e\right) \\
&=\Pi_{GI} \int\prod_{v\in V_\Gamma} d{g_v} \left (\prod_{v\in V_\Gamma}A^\Gamma_v(g_{v})\right) Q
\left (\prod_{v\in V_\Gamma}U^\dagger_v(g_v)|\psi\rangle_V  \bigotimes_{e\in E_{\Gamma}}|g_{e^-}^{-1}g^{}_{e^+}\rangle_e \bigotimes_{f \in E \setminus E^{\Gamma}}\ket{I}_f\right) \\
&=\Pi_{GI} \int\prod_{v\in V_\Gamma} d{g_v} \left (\prod_{v\in V_\Gamma}A^\Gamma_v(g_{v})\right) O_{u}
\left (\prod_{v\in V_\Gamma}U^\dagger_v(g_v)|\psi\rangle_V\bigotimes_{e\in E_{\Gamma}}|I\rangle\langle I|g^{-1}_{e^-}g^{}_{e^+}\rangle_e \bigotimes_{f \in E \setminus E^{\Gamma}}\ket{I}_f\right),
\end{align*}
where we must define $g_{v_0} = I$ for the third line to be valid.
For the edge $e \in E_\Gamma$ such that $e^+ = v_0$ (i.e.\ the edge $e \in E_\Gamma \cap E_0$ with $E_0$ as defined in \cref{subsubsec: graph-theory-background}), the matrix element $\braket{I}{g^{}_{e^-}g^{-1}_{e^+}}_e$ imposes $g_{e^{-}} = I$. Then after integration over $g_v$ and using the constraints from the other matrix elements, we can set $g_v = I$ for all $v\in V_\Gamma$ to obtain 
\begin{align*}
    RHS & = \Pi_{GI}\Dgv  O_{u}
\left (|\psi\rangle_V\bigotimes_{e\in E_{}}\ket{I}_e \prod_{e\in E_{\Gamma}} \braket{I}{g^{-1}_{e^-}g^{}_{e^+}}\right) \\
& = \kappa_{\Gamma}\Pi_{GI} O_{v}\left( \ket{\psi}_{V} \bigotimes_{e\in E} \ket{I}_e\right) \\
& = \kappa_{\Gamma} \mathcal{G} O_v \ket{\psi}_{V}
\end{align*}
with 
\[ \kappa_{\Gamma} = \int \prod_{v\in V_{\Gamma}} \dd{g_v} \prod_{e\in E_{\Gamma}} \braket{I}{g_{e^{-}}^{} g_{e^{+}}^{-1}}=\abs{G}^{-\abs{V_{\Gamma}}}.\]
To prove the implementation condition we need only then to redefine $O_\Gamma\rightarrow \frac{O_\Gamma}{\kappa_\Gamma}$.

Finally, we just need to show that $O_\Gamma$ commutes with the projector onto the image of $\mathcal{G}$, that is, it is block diagonal with respect to a decomposition into the image and its orthogonal complement.
In this decomposition, it follows from $O_\Gamma \mathcal{G} = \mathcal{G} O_u$ that the lower left block of $O_\Gamma$ is zero.
It is straightforward to verify that $O_\Gamma^\dagger \mathcal{G} = \mathcal{G} O_u^\dagger$, which implies the upper right block is zero, and thus $[O_\Gamma,\mathcal{G}\mathcal{G}^\dagger]=0$.
\end{proof}

It is possible to map some operators to their gauged counterparts without a boundary attached dressing. In particular, for a $G$-symmetric $O$, i.e. one  satisfying 
\begin{align*}
    O=\left(\prod_{v\in V}U_v(g)\right)O\left(\prod_{v\in V}U_v(g)\right)^\dagger
\end{align*}
for every $g\in G$ and having support on some subset of vertices $W\subseteq  V_1$, it is possible to use as $\Gamma$ a subgraph which includes all of $W$ and no edges connecting to vertices in $V_0$. This fact was originally shown in Ref.~\cite{Cirac_2015} using essentially the same proof. The idea is that without the constraint imposed by the edge connecting to $V_0$, in the last part of the proof, all $g_v$ are set equal to each other, but not necessarily to $I$, leaving a projection of $O$ onto the space of $G$ symmetric operators. But this does nothing on $G$ symmetric operators. For all other operators a boundary attached dressing is required.

In the appendices under \cref{thm: undressing-local-ops}, we show a sort of converse to the theorem proving a ``reverse dressing property'', namely that any gauged operator that preserves the flux-free sector and is supported on some subgraph $\Gamma$ of $\Lambda$, with certain conditions on $\Gamma$, implements an ungauged operator supported on only the vertices of $\Gamma$.

\subsection{The isometry exhibits a global/gauge duality}

We now show that global symmetry transformations are implemented by NGC symmetry transformations (see \cref{fig:SymmetryBulkToBoundary}).

\begin{figure}[t]
  \centering
    \includegraphics[width=1.0\textwidth]{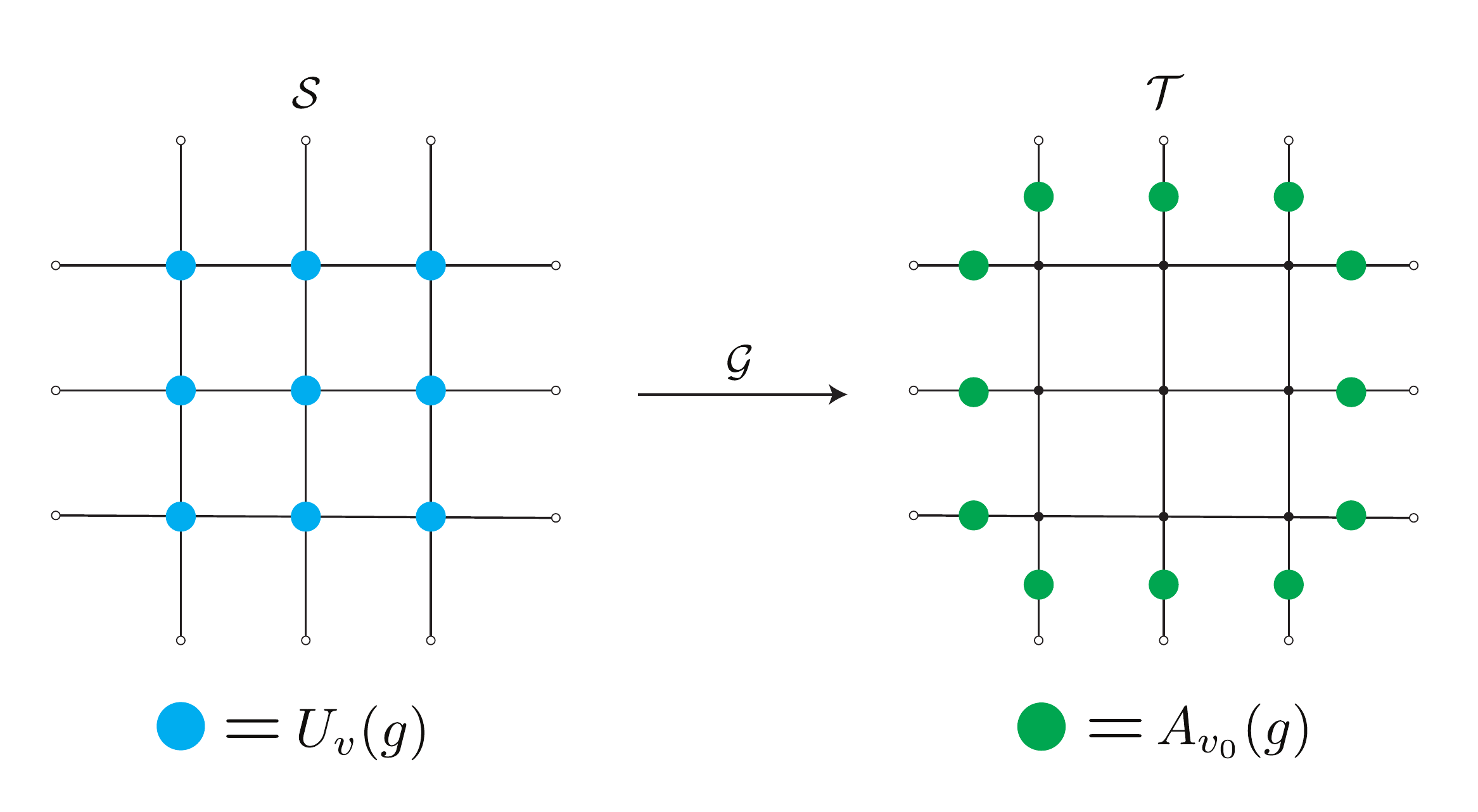}
    \caption{A global symmetry action acting on the input to the gauging map acts as an NGC symmetry transformation acting on its output. Note that we are omitting the action on NGC vertices (represented by white dots) for consistency with future diagrams. This is because for holographic codes we require in the Bulk that all $V_0$ Hilbert spaces are trivial.}
  \label{fig:SymmetryBulkToBoundary}
\end{figure}

\begin{theorem}
 \label{thm:gauge-map-duality}
Consider a system transforming under a global symmetry, $(\mathcal{S}, G, \{U_v\})$  with $\mathcal{S}=(V, \{\mathcal{H}_v\})$ with $\Lambda=(V,E)$ and labeled by $\alpha$ such that $V_0 \neq \varnothing $. Let $\mathcal{T}$ be obtained from gauging $\mathcal{S}$. Let $\mathcal{G}$ be the corresponding gauging isometry. NGC symmetry transformations on the output implement global symmetry transformations on the input, i.e.\ 
\begin{align*}
    \forall h\in G, \quad \mathcal{G}U_{V}(h) &= A_{V_0}(h)\mathcal{G}.
\end{align*}
\end{theorem}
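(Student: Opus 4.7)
The plan is to exploit the identity $A_V(h) := \prod_{v \in V} A_v(h) = A_{V_0}(h)\, A_{V_1}(h)$, and observe that although $A_V(h)$ contains nontrivial edge operators, its action on the reference state $\bigotimes_{e \in E}\ket{I}_e$ trivializes. Concretely, each edge $e$ is acted on by exactly one $U^R_e(h)$ coming from the vertex $v = e^+$ and one $U^L_e(h)$ coming from the vertex $u = e^-$, and using the definitions in \cref{eq:UL,eq:UR} one checks that $U^L_e(h)U^R_e(h)\ket{I}_e = U^L_e(h)\ket{h^{-1}}_e = \ket{I}_e$. Thus
\begin{align*}
    A_V(h)\!\left(\ket{\psi}_V \bigotimes_{e\in E}\ket{I}_e\right) = \left(U_V(h)\ket{\psi}_V\right) \bigotimes_{e\in E}\ket{I}_e.
\end{align*}

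Applying $\Pi_{GI}$ to both sides turns the right-hand side into $\Gmc\, U_V(h)\ket{\psi}_V$. To turn the left-hand side into $A_{V_0}(h)\,\Gmc\ket{\psi}_V$, the key step is to commute the $V_1$ part of $A_V(h)$ through $\Pi_{GI}$ and absorb it. This rests on two observations, both of which I would verify at the start: first, gauge transformations at distinct vertices commute (the only possible conflict is on a shared edge, where one contributes $U^R_e$ and the other $U^L_e$, which commute), so $A_V(h) = A_{V_0}(h)\,A_{V_1}(h)$ and each factor commutes with every $A_v(g)$ for $v \in V_1$; second, by left-invariance of the Haar measure, $A_v(h)\,\Pi_v = \int\! dg\, A_v(hg) = \Pi_v$ for every $v \in V_1$, so $A_{V_1}(h)\,\Pi_{GI} = \Pi_{GI}$ and similarly $\Pi_{GI}\,A_{V_1}(h) = \Pi_{GI}$.

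Combining these, I would conclude
\begin{align*}
    \Gmc U_V(h)\ket{\psi}_V
    = \Pi_{GI}\,A_V(h)\!\left(\ket{\psi}_V \bigotimes_{e\in E}\ket{I}_e\right)
    = A_{V_0}(h)\,\Pi_{GI}\,A_{V_1}(h)\!\left(\ket{\psi}_V \bigotimes_{e\in E}\ket{I}_e\right)
    = A_{V_0}(h)\,\Gmc\ket{\psi}_V,
\end{align*}
which is the desired intertwining relation for all $\ket{\psi}_V \in \mathcal{H}_V$ and hence establishes the theorem.

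I do not expect any real obstacle here: the content is essentially bookkeeping about which operator acts on which edge, combined with the elementary invariance property of Haar integration. The only place that warrants some care is checking the direction of the edge cancellation $U^L_e(h)U^R_e(h)\ket{I} = \ket{I}$ (and not, say, $\ket{h^2}$ or $\ket{h^{-2}}$), since the roles of $e^+$ versus $e^-$ and the $g^{-1}$ convention in $U^R$ must line up correctly; this is the same bookkeeping that made the isometry calculation in \cref{thm: gauging-map-is-isometry} collapse to $\braket{I}{g^{}_{e^-}g^{-1}_{e^+}}$, so it should go through without incident.
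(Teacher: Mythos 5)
Your proof is correct. It reaches the same conclusion as the paper but organizes the computation differently: the paper expands $\Pi_{GI}$ explicitly as a Haar integral in the group-element basis, applies $U_V(h)$, performs the change of variables $g_v \mapsto g_v h$, and then recognizes the leftover $U_{V_0}(h)\prod_{e\in E_0}U^R_e(h)$ as $A_{V_0}(h)$; you instead work at the operator level, using the identity $A_V(h)\bigl(\ket{\psi}_V\otimes\bigotimes_e\ket{I}_e\bigr)=U_V(h)\ket{\psi}_V\otimes\bigotimes_e\ket{I}_e$ together with the decomposition $A_V(h)=A_{V_0}(h)A_{V_1}(h)$ and the absorption $\Pi_{GI}A_{V_1}(h)=\Pi_{GI}$. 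The two arguments are mathematically equivalent --- your absorption of $A_{V_1}(h)$ into $\Pi_{GI}$ via Haar invariance is exactly the paper's change of variables in the integral --- but your version buys a cleaner bookkeeping: all edges are treated uniformly (the $A_{V_0}(h)$ factor automatically supplies the operators on dangling edges, so you never need to split $E$ into $E_0$ and $E_1$ or track which $g$'s are implicitly set to $I$), and the intermediate identity that the product of \emph{all} local gauge transformations acts on the reference edge configuration purely as the global symmetry on vertices is a reusable observation the paper leaves implicit. The one place you flagged as needing care --- the orientation of the cancellation $U^L_e(h)U^R_e(h)\ket{I}_e=\ket{I}_e$ --- does indeed line up with the conventions in \cref{eq:UL,eq:UR}, and your commutation claims ($[U^R_e,U^L_e]=0$, hence $A_u$ and $A_v$ commute for $u\neq v$, hence $A_{V_0}(h)$ commutes with $\Pi_{GI}$) all hold, so there is no gap.
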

\begin{proof}
In order to prove the equality, we show that for every $\ket{\psi}_V\in \mathcal{H}_V$ we have 
\[ \mathcal{G}U_{V}(h)\ket{\psi}_V = A_{V_0}(h)\mathcal{G}\ket{\psi}_V.\]
%
Expanding the left hand side we get
\begin{align*}
    LHS=\int \prod_{v\in V_{1}}\dd{g_v} \left(\prod_{v\in V_1}U_v(g_vh)\right)
    U_{V_0}(h)
    |\psi\rangle_V\bigotimes_{e\in E_1 }|g^{}_{e^-}g^{-1}_{e^+}\rangle_e\bigotimes_{e\in E_0 }|g_{e^-}\rangle_{e}
\end{align*}
with $E_0$, $E_1$ as defined in \cref{subsubsec: graph-theory-background}. 
Changing variables to $g'_v=g_vh$ for $v\in V_1$ gives
\begin{align*}
    LHS&=\int \prod_{v\in V_{1}}\dd{g'_v} \left(\prod_{v\in V_1}U_v(g_v')\right)
    U_{V_0}(h)
    |\psi\rangle_V\bigotimes_{e\in E_1 }|(g')^{}_{e^-}(g')^{-1}_{e^+}\rangle_e\bigotimes_{e\in E_0 }|g'_{e^-}h^{-1}\rangle_{e} \\
    &=U_{V_0}(h) \prod_{e\in E_0 } U^R_{e}(h) \mathcal{G}\ket{\psi}_V = A_{V_0}(h)\mathcal{G}\ket{\psi}_{V}.
\end{align*}
\end{proof}

\subsection{The image of the isometry is the flux-free sector}
\label{sec:Image-Isometry-Is-Flux-Free}
We now show that the image of the gauging isometry is the flux-free sector of the gauge-invariant Hilbert space, when the $V_1$ induced subgraph is connected.

\begin{lemma}
\label{lem: gauge-map-wilson-link}
For a gauging map $\mathcal{G}$ and a Wilson link operator $W_{ij}^{e}$
\begin{align}
    W^{{e}}_{ij}\mathcal{G} &= \delta_{ij} \mathcal{G}\\
    W^{\overline{e}}_{ij}\mathcal{G} &= \delta_{ij} \mathcal{G}
\end{align}
    
\begin{proof}
Consider an arbitrary state $\ket{\psi}_{V}$, then $\mathcal{G}\ket{\psi}_V = \Pi_{GI} \left( \ket{\psi}_V \otimes \bigotimes_{e\in E} \ket{I}_e\right)$. As the Wilson link operators are gauge-invariant, we can move them past the projector and act them on $\ket{I}_e$ to give the desired result.
\end{proof}
\end{lemma}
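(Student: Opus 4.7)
The plan is to unpack the definition of the gauging map and exploit the fact that the Wilson link operator is diagonal in the group-element basis. Since $W^e_{ij}\ket{g}_e = r_{ij}(g)\ket{g}_e$ and $r(I)$ is the identity matrix of the chosen representation, $W^e_{ij}$ acts on the reference state $\ket{I}_e$ simply by the scalar $\delta_{ij}$. The whole argument thus reduces to showing that $W^e_{ij}$ can be legitimately brought past the gauge projector $\Pi_{GI}$ so as to land on the $\ket{I}_e$ factor sitting inside the definition of $\mathcal{G}$.

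Concretely, I would proceed as follows. First, substitute $\mathcal{G}\ket{\psi}_V = \Pi_{GI}\bigl(\ket{\psi}_V \otimes \bigotimes_{e' \in E}\ket{I}_{e'}\bigr)$. Second, commute $W^e_{ij}$ through $\Pi_{GI}$, justified by the commutation relation $[W^e_{ij},\Pi_{GI}] = 0$. Third, act with $W^e_{ij}$ on the edge factor $\ket{I}_e$ to pull out the scalar $r_{ij}(I) = \delta_{ij}$, leaving precisely $\delta_{ij}\mathcal{G}\ket{\psi}_V$. The companion statement $W^{\bar e}_{ij}\mathcal{G} = \delta_{ij}\mathcal{G}$ then follows by the identical argument, noting that $W^{\bar e}_{ij} = (W^e_{ji})^\dagger$ also acts on $\ket{I}_e$ by the scalar $\overline{r_{ji}(I)} = \delta_{ji} = \delta_{ij}$.

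The main obstacle is justifying the commutation $[W^e_{ij},\Pi_{GI}] = 0$. For any $v \in V_1$ not incident to $e$, $W^e_{ij}$ and $A_v(g)$ act on disjoint subsystems and trivially commute, so only the gauge constraints at the endpoints of $e$ are potentially problematic. For an endpoint of $e$ lying in $V_1$, the local gauge transformation acts on the edge via $U^L_e(g)$ or $U^R_e(g)$, which shift the group label in the basis where $W^e_{ij}$ is diagonal and would naively modify its eigenvalue, so more care is required. I would handle this by invoking the gauge-invariance properties of Wilson operators that the paper defers to the appendix, or alternatively by a direct calculation showing that the transformation of $W^e_{ij}$ under conjugation by $U^{L/R}_e(g)$ is compensated inside $A_v(g)$ by the vertex-level representation $U_v(g)$ acting on the $\ket{\psi}_V$ factor. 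Once this commutativity is in hand, the three-step computation above closes the proof immediately.
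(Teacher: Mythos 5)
Your strategy is the same as the paper's: substitute the definition of $\mathcal{G}$, move the link operator past $\Pi_{GI}$, and evaluate it on $\ket{I}_e$. You are also right to single out the commutation $[W^e_{ij},\Pi_{GI}]=0$ as the crux --- but that is exactly the step that fails, and neither of your proposed fixes closes it. A single Wilson link operator is \emph{not} gauge-invariant: if $v=e^-\in V_1$, conjugation by the local gauge transformation rotates the matrix indices, $A_v(g)^\dagger\, W^e_{ij}\, A_v(g)=\sum_k r_{ik}(g)\,W^e_{kj}$, and the vertex factor $U_v(g)$ cannot compensate this because it acts on the tensor factor $\mathcal{H}_v$, on which $W^e_{ij}$ is the identity. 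The appendix result you point to only establishes gauge invariance for Wilson \emph{loops} and NGC-to-NGC Wilson \emph{lines}, precisely because there the representation indices at each interior vertex are contracted so that the $r(g)$ and $r(g^{-1})$ factors from the two incident links cancel; no such cancellation is available for an isolated link.

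Indeed the identity $W^e_{ij}\mathcal{G}=\delta_{ij}\mathcal{G}$ is false as literally stated. For a single edge $e=(u,v)$ with $u\in V_0$ and $v\in V_1$ one has $\mathcal{G}\ket{\psi}=\int\dd{g}\,U_v(g)\ket{\psi}\otimes\ket{g}_e$, hence $W^e_{ij}\mathcal{G}\ket{\psi}=\int\dd{g}\,r_{ij}(g)\,U_v(g)\ket{\psi}\otimes\ket{g}_e$; since the edge states $\ket{g}_e$ are mutually orthogonal, equality with $\delta_{ij}\mathcal{G}\ket{\psi}$ would force $r_{ij}(g)=\delta_{ij}$ for every $g$, contradicting faithfulness of $r$ (for $G=\mathbb{Z}_2$ and $r$ the sign representation the two sides differ by a sign on half the terms). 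To be fair, the paper's own one-line proof asserts the same gauge invariance of single links, so you have reproduced its gap rather than introduced a new one. What is actually true --- and is all that the downstream theorem on the image being the flux-free sector needs --- is the corresponding statement for the gauge-invariant composites: $W^l\mathcal{G}=\Tr[r(I)]\,\mathcal{G}$ for every Wilson loop $l$ and $W^p_{ij}\mathcal{G}=\delta_{ij}\mathcal{G}$ for every NGC-to-NGC Wilson line $p$. For those operators your three-step computation goes through verbatim, because they genuinely commute with $\Pi_{GI}$ and evaluate to products of $r(I)$ on $\bigotimes_{e}\ket{I}_e$.
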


\newcommand{\fluxfreegauge}{%
Consider a system transforming under a global symmetry, $(\mathcal{S}, G, \{U_v\})$  with $\mathcal{S}=(V, \{\mathcal{H}_v\})$ and a connected graph $\Lambda=(V,E)$ with $\alpha$ labeling such that $\Lambda\setminus(V_0\cup E_0)$ is still connected. Let $\mathcal{T}$ be obtained from gauging $\mathcal{S}$. Let $\mathcal{G}$ be the corresponding gauging isometry. The image of $\mathcal{G}$ is $\Pi_{FF} \mathcal{H}_\Lambda$, the entire flux-free sector of the gauge-invariant Hilbert space.
}
\begin{theorem}
\label{thm: gauging-map-image-is-flux-free}
\fluxfreegauge
\end{theorem}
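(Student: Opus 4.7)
My plan is to prove the two inclusions of the claimed equality separately.

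For the forward inclusion $\text{Image}(\mathcal{G}) \subseteq \Pi_{FF}\mathcal{H}_\Lambda$, I would build directly on Lemma \ref{lem: gauge-map-wilson-link}, which gives $W^{\bar{e}}_{ij}\mathcal{G} = \delta_{ij}\mathcal{G}$ for every edge. Since Wilson link operators on distinct edges all act diagonally in the group-element basis on their respective edges, they mutually commute. Expanding a Wilson loop $W^l = \text{Tr}\bigl(W^{\overline{(v_1,v_2)}} W^{\overline{(v_2,v_3)}} \cdots W^{\overline{(v_n,v_1)}}\bigr)$ into a sum over representation indices, I can then slide each factor past the others and apply the lemma to replace each $W^{\bar{e}}_{i_k i_{k+1}}$ acting on $\mathcal{G}$ by $\delta_{i_k i_{k+1}}$. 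The contracted index sum collapses to $\sum_{i_1} 1 = d = \text{Tr}(r(I))$, giving $W^l \mathcal{G} = \text{Tr}(r(I))\mathcal{G}$. The same computation without the outermost trace yields $W^p_{ij}\mathcal{G} = \delta_{ij}\mathcal{G} = r_{ij}(I)\mathcal{G}$ for any NGC-to-NGC Wilson line. Hence $\mathcal{G}$ lands in every defining eigenspace of $\Pi_{FF}$.

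For the reverse inclusion $\Pi_{FF}\mathcal{H}_\Lambda \subseteq \text{Image}(\mathcal{G})$, I would take an arbitrary $|\phi\rangle \in \Pi_{FF}\mathcal{H}_\Lambda$ and expand in the group-element basis for edges, $|\phi\rangle = \int \prod_{e\in E} \mathrm{d}g_e\, |c_{\{g_e\}}\rangle_V \otimes |\{g_e\}\rangle_E$. Because the Wilson loops and NGC-to-NGC Wilson lines are diagonal in this basis and $r$ is a faithful unitary representation, every contributing configuration $\{g_e\}$ must have all loop and NGC-to-NGC path holonomies equal to $I$ (trace equal to $d$ forces $r(g)=\mathrm{Id}$, and faithfulness then forces $g=I$). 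A standard graph-theoretic argument---pick a spanning tree of $\Lambda$ rooted at some $v_0\in V_0$, define $g_v$ at each other vertex of $V_1$ by parallel transport along the tree, and note that every non-tree edge in $E_1$ closes a loop whose triviality is a flux-free constraint while every non-tree edge in $E_0$ closes an NGC-to-NGC path whose triviality is likewise constrained---shows such configurations are precisely of pure-gauge form $g_e = g_{e^-}(g_{e^+})^{-1}$ for some $\vec{g} = (g_v)_{v\in V_1}$ with $g_v = I$ for $v\in V_0$. The hypotheses that $\Lambda$ and $\Lambda\setminus(V_0\cup E_0)$ are both connected, together with $V_0\neq\varnothing$ (implicit in $\mathcal{G}$ being a gauging isometry), ensures this pure-gauge parameterization is bijective onto the flux-free edge configurations.

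After reparametrizing $|\phi\rangle = \int \prod_{v\in V_1} \mathrm{d}g_v\, |c_{\vec{g}}\rangle_V \otimes |\{g_{e^-}(g_{e^+})^{-1}\}\rangle_E$, I would impose gauge invariance $A_v(h)|\phi\rangle = |\phi\rangle$ for each $v \in V_1$, change integration variables, and match coefficients to obtain $|c_{\vec{g}}\rangle_V = \bigl(\prod_{v\in V_1} U_v(g_v)\bigr) |c_{\vec{I}}\rangle_V$. Substituting this back, the right-hand side is exactly the orbit integral defining $\mathcal{G}|c_{\vec{I}}\rangle_V$, placing $|\phi\rangle$ in the image. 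The main obstacle will be making rigorous the ``flux-free iff pure-gauge'' identification: one must verify that the Wilson loops and NGC-to-NGC Wilson lines featured in the definition of $\Pi_{FF}$ really do capture all independent holonomies, which is where both connectedness hypotheses are essential---without a connected $\Lambda\setminus(V_0\cup E_0)$ one could have multiple independent gauge orbits per flux sector, and without a nonempty $V_0$ the parameterization by $\vec{g}$ would possess a residual global symmetry making it non-injective, breaking the argument in exactly the way that also breaks the previous theorem's isometry property.
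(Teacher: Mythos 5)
Your forward inclusion is the same as the paper's (it is an immediate consequence of \cref{lem: gauge-map-wilson-link} plus the definition of $\Pi_{FF}$; your explicit index-contraction just spells out what the paper leaves implicit). For the reverse inclusion, your route is genuinely different in execution, though it rests on the same core fact. The paper starts from the spanning set $\Pi_{GI}\bigl(\ket{\psi}_V \otimes \bigotimes_e \ket{g_e}\bigr)$ with flux-free edge configurations, invokes Lemma~3.3 of Ref.~\cite{Cui_2020} to relate such a configuration to the trivial one by a product of local gauge transformations, absorbs the GC transformations into $\Pi_{GI}$, and then uses the connectedness of $(V_1,E_1)$ to show the NGC transformations all carry the same group element, so that their product is an NGC symmetry transformation which is absorbed via the global/gauge duality of \cref{thm:gauge-map-duality}. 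You instead expand an arbitrary flux-free state in the edge basis, prove ``flux-free $\Leftrightarrow$ pure gauge with $g_v=I$ on $V_0$'' directly by a spanning-tree/holonomy argument (correctly noting that $\Tr r(g)=\Tr r(I)$ plus faithfulness and unitarity forces $g=I$, which is exactly the subtlety the paper flags after \cref{def:flux-free}), and then extract $\ket{c_{\vec g}}=\prod_v U_v(g_v)\ket{c_{\vec I}}$ by matching coefficients under the gauge constraint. Your version is self-contained — it replaces the external citation and the appeal to the duality theorem with elementary graph theory and an injectivity check on the pure-gauge parametrization — and it makes transparent exactly where $V_0\neq\varnothing$ and connectedness enter. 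The paper's version buys reusability: the same ``relate configurations by gauge transformations'' scaffolding carries over nearly verbatim to the twisted maps of \cref{thm: NGC-map-image-is-full-sector-sometimes}, where the conjugation ambiguity in Cui's lemma is no longer vacuous and genuinely obstructs surjectivity for nonabelian flux sectors; your direct parametrization would need real modification there. The only point to tighten in your write-up is the measure bookkeeping when you trade $\int\prod_e \dd{g_e}$ restricted to pure-gauge configurations for $\int\prod_{v\in V_1}\dd{g_v}$ — for finite groups this is a relabeling justified by the injectivity you establish, but it should be stated rather than assumed.
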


\begin{proof}
By \cref{lem: gauge-map-wilson-link} and the definition of flux-free sector, any state in the image of $\mathcal{G}$ is in the flux-free sector, i.e. 
\[ \image[\mathcal{G}] \subseteq  \image[\Pi_{FF}]. \]

The other inclusion is more difficult to prove, thus we sketch it here but relegate a full proof to the appendices under \cref{proof:gauging-map-image-is-flux-free}.
By the comments at the end of \cref{def:flux-free}, note that 
states of the form $$\Pi_{GI} \left(  \ket{\psi}_V \otimes \bigotimes_{e\in E} \ket{ g_e } \right)$$ form a basis for the flux-free sector%
, with $\bigotimes_{e\in E} \ket{ g_e } $ living in the support of $\Pi^E_{FF}$ (i.e.\ the flux-free sector when all vertex degrees of freedom are trivial).
Thus we just need to show that every such state lies in the image of $\mathcal{G}$.

If the edge configuration $\bigotimes_{e\in E} \ket{g_e} $ were the trivial one, $\bigotimes_{e\in E} \ket{I_e} $, then we would be done -- this would simply be equal to $\mathcal{G} \ket\psi_V$.
Lemma 3.3 of Ref.~\cite{Cui_2020} allows us to relate the above configuration to the trivial one.
It shows that because these edge configurations are both in the flux-free sector, they can be related by some product of gauge transformations, up to corrections that act just on vertices and not edges (and thus can just be absorbed into a redefinition of $\ket\psi_V$). 
The gauge transformations associated with GC vertices, by the way that the gauge invariant sector was constructed, are simply absorbed into the gauge-invariant constraint $\Pi_{GI}$.

The gauge transformations associated with NGC vertices are slightly more complicated.
Because the edge configuration in question is in the flux-free sector, it turns out that all gauge transformations on NGC vertices used to relate it to the trivial configuration must actually correspond to the same group element. 
Thus the product of all such gauge transformations is just an NGC symmetry transformation, which via \cref{thm:gauge-map-duality} is dual to a global symmetry on the input system.
This global symmetry can also just be absorbed into a redefinition of the vertex configuration, $\ket\psi_V$.
Thus we can effectively replace the edge configuration above with the trivial one, and conclude that the state $\Pi_{GI} \left(  \ket{\psi}_V \otimes \bigotimes_{e\in E} \ket{ g_e } \right)$ is in the image of the gauging map.

For more details, see the proof of \cref{proof:gauging-map-image-is-flux-free} in the appendices.

\end{proof}

\subsection{NGC flux sectors and beyond}

\label{subsec: NGC-flux-maps}

In this section, we consider a family of gauging maps related to the original by NGC gauge transformations.
We show that each such map displays the four major properties shown above for the standard gauging map, all of which are relevant for later sections.
Namely, it is an isometry, it implements local operators via dressed ones, it exhibits a gauge/global duality, and it is surjective on the relevant flux sector. 
In \cref{subsec:connection-to-harlow-ooguri} we find that this family of generalized maps is related to acting with restricted global symmetry actions on the Boundary of a holographic code. 
We state some results for even more general gauging maps but leave their proofs to \cref{subsec: other-flux-sectors-proofs}. 

\begin{definition}
\label{def: generalized-gauging-map}
Consider an unconstrained system transforming under a  global symmetry, $(\mathcal{S}, G, \{U_v\})$  with $\mathcal{S}=(V, \{\mathcal{H}_v\})$ and an oriented graph $\Lambda=(V,E)$ with $\alpha$ labeling. Let $\mathcal{T}=(\mathcal{S},\Pi_{GI})$ be the constrained system obtained by gauging $\mathcal{S}$. For any function $\phi: E \rightarrow G$, we define the twisted gauging map as 
\begin{equation}\label{eq: twisted-gauing-map}
    \mathcal{G}_{\phi}: \ket{\psi}_{V} \longmapsto \Pi_{GI} \left( \ket{\psi}_{V}\bigotimes_{e\in E} \ket{\phi_{e}}_{e}\right)
\end{equation}
\end{definition}
\begin{definition}
    We call the subset of such maps that are related to the original gauging map by NGC gauge transformations \textbf{NGC flux gauging maps}. More explicitly such maps can be written as 
    \begin{align*}
        \mathcal{G}_\phi=U_0\mathcal{G}=\prod_{v\in V_0} A_{v}(h_v)\mathcal{G}
    \end{align*}
    for some set of group elements $\{h_v\}_{v\in{V_0}}$. Notice that any such choice of group elements gives a legitimate twisted gauging map, since they can be commuted past $\Pi_{GI}$ to act on the edge degrees of freedom.
\end{definition}

Showing that an NGC flux map $\mathcal{G}_\phi$ has the same properties as $\mathcal{G}$ is essentially just a matter of stating that they are unaffected by a local basis change. 
Note that, we can reach an even larger class of maps by acting acting on the output of the flux free map $\mathcal{G}$ with local unitaries that act as central elements of $G$ on the $E_1$ edge degrees of freedom.
Because they act as central elements, they commute with the gauge projector and are thus equivalent to changing the flux configuration $\phi_e$ for $e\in E_1$ before the gauging map was applied.

That the map $\mathcal{G}_\phi$ is an isometry (when $\Lambda$ is connected and $V_0\neq\varnothing$) for NGC flux maps is clear. In \cref{thm: general-twisted-map-is-an-isometry} we show that it holds for any choice of $\phi$.

Local operators in the ungauged system also continue to be implemented by dressed operators in the gauged system, which can be seen as follows. Let $O_u$ be an $\mathcal{A}_\mathcal{S}(u)$ operator, and let $O_\Gamma$ be the $\mathcal{A}_\mathcal{T}(\Gamma)$ operator with arbitrary dressing along the path $\Gamma$ which one may use to implement $O_u$ using the original gauging map, i.e. $\mathcal{G}O_u =O_{\Gamma}\mathcal{G}$. Then 
\begin{align*}
    \mathcal{G}_\phi O_u=U_0 O_\Gamma U_0^\dagger\mathcal{G}_\phi=O'_\Gamma \mathcal{G}_\phi,
\end{align*}
where $O'_\Gamma = U_0 O_\Gamma U_0^\dagger$ is supported on $\Gamma$ because $U_0$ is a product of local unitaries which preserves the support of an operator. 

The reverse dressing property also holds. More explicitly, for any subgraph $\Gamma$ satisfying the conditions in \cref{thm: undressing-local-ops}, an operator $O_\Gamma\in\mathcal{A}_\mathcal{T}(\Gamma)$ which preserves the image of $\mathcal{G}_\phi$\footnote{For the original gauging map we required that it preserve the flux-free sector. It is easier for the sake of the proof to require only the image is preserved, and as we soon see these are the same anyway.} implements an operator $O_{V_\Gamma}$ supported on the vertices of $\Gamma$. This can be proven as follows. First, notice that
\begin{align*}
    U^\dagger_0 O_{\Gamma} \mathcal{G}_{\phi} = U^\dagger_0 O_{\Gamma} U_0\mathcal{G}.
\end{align*}

This means $U^\dagger_0O_\Gamma U_0$ preserves the image of $\mathcal{G}$. Furthermore this operator is also supported only on $\Gamma$. Thus by the reverse dressing property for the original gauging map, it implements an operator $O_{V_\Gamma}$ supported on only the vertices of $\Gamma$, i.e. 

\begin{align*}
    U^\dagger_0 O_\Gamma\mathcal{G}_\phi=\mathcal{G}O_{V_\Gamma}.
\end{align*}
Multiplying the left by $U_0$ gives the desired result.\footnote{We speculate that the dressing property and reverse dressing property break down in general, but we leave concrete results in this direction to future work.}

The global/gauge property holds for NGC flux maps up to a local basis change of the symmetry actions. More explicitly,
\begin{align*}
    \mathcal{G}_\phi U_V(h)
    =\prod_{v\in V_0}U_0 A_{v}(h_v)U_0^\dagger\mathcal{G}_\phi
    =\prod_{v\in V_0} \tilde{A}_{v}(h_v)\mathcal{G}_\phi
\end{align*}
with $\tilde{A}_{v}$ still representations of $G$.
In \cref{thm: general-twisted-gauge-global-duality} we show that for general $\phi$ we get 
\begin{align*}
    \mathcal{G}_{\phi}U_{V}(h) = A_{V_0}(h)\mathcal{G}_{^{h}\!\phi}.
\end{align*}
where $^{h}\!\phi: e \longmapsto  h^{-1}\cdot \phi_{e} \cdot h^{}$. Interestingly, 
this means that the gauge/global duality still holds for $h$ in the centralizer of $\phi(E)$.

We provide the proof that the image of an NGC flux map is the full relevant flux sector in \cref{thm: NGC-map-image-is-full-sector-sometimes}, where we additionally assume that $\Lambda\setminus(V_0\cup E_0)$ is still connected. For general $\phi$ we no longer have that the image is the full flux sector, because as discussed in the end of \cref{subsubsec: gauge-sym-fixed-flux-sectors}, for groups with outer class automorphisms there are additional pure gauge degrees of freedom besides Wilson loops. This means that it is not possible in general to reach every configuration of gauge degrees of freedom in a given nontrivial fixed flux sector by acting with gauge transformations. Since the gauging map essentially takes the superposition of all gauge transformations, there are some states in such a flux sector outside its image. 
Although the image of a single gauging map may not be surjective in a fixed nonabelian flux sector, there are many gauging maps with image in the same flux sector and the union of all their images coincides with the full fixed flux sector. 

\section{Holographic codes} \label{sec: holographic-codes}

A holographic code is a type of error correcting code which encodes a logical ``Bulk'' system living on discretized $d$ dimensional geometry into a physical ``Boundary'' system which lives on a discretized $d-1$ dimensional geometry, with the geometry of the Boundary inherited from the boundary of the Bulk geometry~\cite{almheiriBulkLocalityQuantum2015,HaPPY}. 

We define more specific requirements for a holographic code in this section, and give two examples: 1) the HaPPY code~\cite{HaPPY}, which has an unconstrained Bulk system and exhibits a $\mathds{Z}_2$ global/global duality, and 2) what we have dubbed the gauged LOTE code~\cite{Marolf}, whose Bulk system is a full $\mathds{Z}_2$ gauge constrained system. 
We then show that any holographic code whose Bulk system is a gauge-constrained system with group $G$, and whose constrained subspace is the full gauge-invariant sector (i.e.\ the constraint is $\Pi_{GI}$), exhibits a gauge/global duality. 
The definition of holographic codes, as well as this result on the existence of gauge/global dualities, is used in section~\ref{sec:applications-to-holography} to demonstrate an equivalence between holographic codes with various symmetry dualities. 

\subsection{Working definition}

For clarity we restrict our discussion to Bulk geometries that are two dimensional. This geometry determines which subsystems of the Bulk system are encoded into which subsystems of the Boundary system~\cite{almheiriBulkLocalityQuantum2015}. This is usually done via a Ryu-Takayanagi (RT)-like prescription~\cite{Ryu2006a,Ryu2006b,Hubeny2007}, but for the sake of both clarity and generality we only require prescriptions that obey a property we call \textit{near-boundary probing}. For the remainder of the work we describe the Bulk geometry with a planar graph with quantum degrees of freedom living on its nodes and edges. In higher dimensions, it would be more appropriate to use something like a CW complex \cite{Hatcher}. 

\begin{figure}[t]
  \centering
    \includegraphics[width=1.0\textwidth]{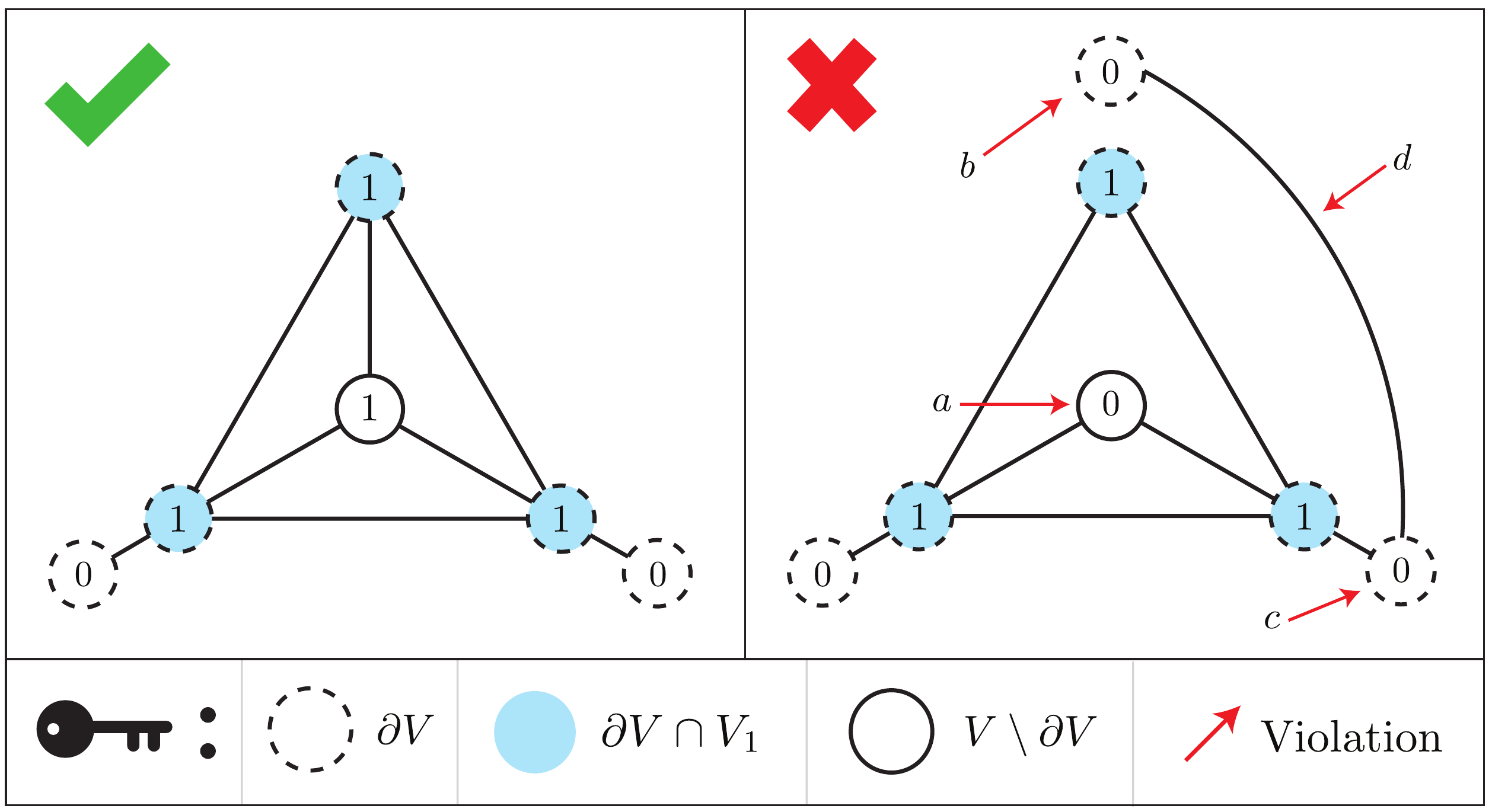}
    \caption{(Left) a planar graph with dangling edges. 
    (Right) a graph which does not meet the requirements for a planar graph with dangling edges because of the following issues: a) there is a node in $V_0$ which is not in $\partial V$, b) there is a node in $V_0$ which is not connected to a node in $\partial V\cap V_1$, c) there is a node in $V_0$ which participates in more than one edge, and d) there is an edge connecting two nodes in $V_0$.}
  \label{fig:dangling-edges}
\end{figure}

\begin{definition}\label{def:dangling-edges}
(illustrated in \cref{fig:dangling-edges})
Given a planar graph $\Lambda=(V,E)$ with boundary vertices $\partial V\subseteq V$\footnote{A boundary vertex is one from which it is possible to draw a path reaching to infinity on the embedding plane without intersecting another part of the graph.} and bit labeling $\alpha:V\rightarrow\{0,1\}$, we say that $(\Lambda,\alpha)$ is a \textbf{planar graph with dangling edges} if the following conditions hold
\begin{enumerate}
        \item $V_0 \subseteq \partial V$
        \item For all $u\in V_0$ there exists exactly one vertex $v\in V$ connected by an edge to $u$. Furthermore this edge is $(u,v)$ and $v\in V_1 \cap \partial V$.
\end{enumerate}
In the context of holographic codes, we sometimes refer to $E_0$ (that is, the set of edges containing one vertex in $V_0$) as the set of \emph{dangling edges}.
\end{definition}
For holographic codes, the $V_0$ vertices comprise the Boundary system\footnote{%
Recall that Boundary is capitalized when referring to the system and its vertices $V_0$ to distinguish from other uses of ``boundary'', such as the boundary vertices of the planar graph, $\partial V$.%
}, so these conditions can be phrased as follows: all Boundary vertices are on the boundary, and connected to exactly one other vertex which is a Bulk vertex also on the boundary (with the edge pointing from Boundary to Bulk).

\begin{figure}[t]
  \centering
    \includegraphics[width=1.0\textwidth]{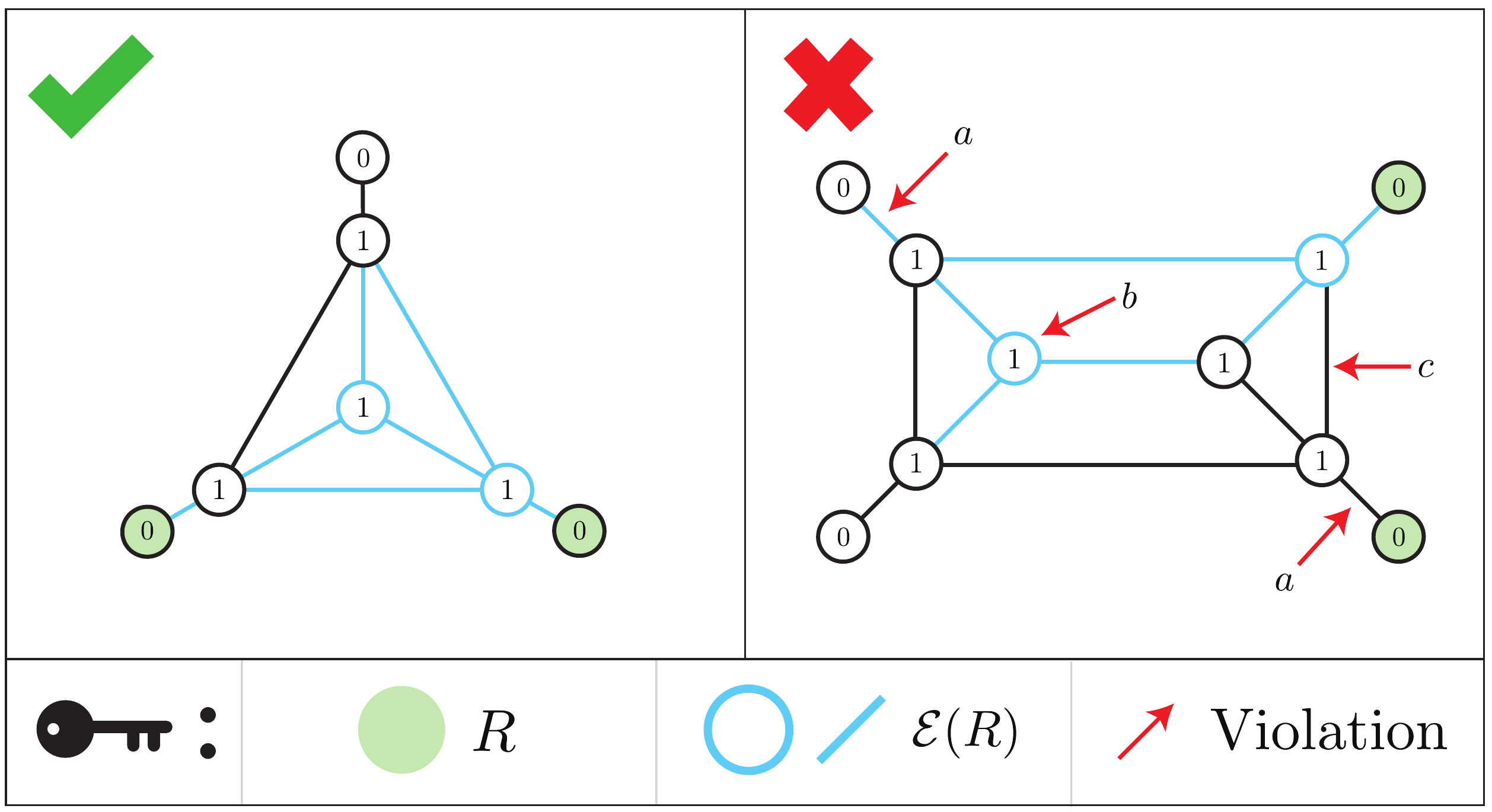}
    \caption{(Left) An example of a map $\mathcal{E}$ which, for the $R$ shown, has a $\mathcal{E}(R)$ which is allowed for a near-boundary probing map. (Right) an example of an $\mathcal{E}(R)$ which is prohibited for a near-boundary probing map because of the following issues: a) There are elements of $E_0\cap\mathcal{E}(R)$ not connected to vertices in $R$ and elements of $E_0$ connected to edges in $R$ that are not in $\mathcal{E}(R)$, both of which violate property 1 of the definition. b) there are vertices in $\mathcal{E}(R)$ not connected to an element of $R$ by a path in $\mathcal{E}(R)\cup R$, and c) there are non-dangling edges $e\notin \mathcal{E}(R)$ that are incident on vertices in $\mathcal{E}(R)$ which are themselves not in $\mathcal{E}(R)$.}
  \label{fig:near-boundary-probing}
\end{figure}


We now define the necessary requirements for the \emph{entanglement wedge map}, which plays a crucial role in the definition of a holographic code.
{{
\begin{definition} \label{def:near-boundary-probing - B}
(illustrated in \cref{fig:near-boundary-probing})
Consider a connected planar graph with dangling edges $(\Lambda=(V,E),\alpha)$. We say that a map $\mathcal{E}:2^{V_0}\rightarrow 2^{V_1\cup E}$ \footnote{Recall that for a set $S$ its power set $2^S$ is the set of all of its subsets.} is $\textbf{near-boundary probing}$ if it possesses the following three properties. 
\begin{enumerate}
    \item The dangling edges in $\mathcal{E}(R)$ are exactly those that connect to vertices in $R$; that is, $\mathcal{E}(R) \cap E_0 = \{ (u,v) \in E_0 | u \in R \}$ .
    \item Every vertex in $\mathcal{E}(R)$ is connected by a path in $\mathcal{E}(R)\cup R$ to an element of $R$.
    \item A non-dangling edge is in $\mathcal{E}(R)$ if and only if it is incident on at least one vertex in $\mathcal{E}(R)$.
\end{enumerate}
Furthermore, we define the \emph{exterior} of $\mathcal{E}(R)$ to be $\exterior{\mathcal{E}(R)} = \{ e \in \mathcal{E}(R) | \exists v\in e | v \notin R \cup \mathcal{E}(R) \}$, i.e.\ those edges in $\mathcal{E}(R)$ that connect to only one vertex in $R\cup \mathcal{E}(R)$.
We also define its \emph{interior} $\interior{ \mathcal{E}(R)} $ to be all other edges and vertices.
\end{definition}

}}

\begin{definition} \label{def:holographic-code}

A \textbf{holographic code} $(\mathcal{T},\Lambda,\alpha,\mathcal{B},\mathcal{E},\mathcal{V})$ consists of
\begin{enumerate}
    \item A logical constrained system $\mathcal{T}=(\mathcal{U},\Pi)$ called the \textbf{Bulk} and a connected planar graph with dangling edges $(\Lambda=(V,E),\alpha)$ such that  $\mathcal{U}=(V_1\cup E,\{\mathcal{H}_{v\in V_1}\}\cup\{\mathcal{H}_{e\in E}\})$, where $V_1$ as usual refers to the subset of $V$ labeled $1$ by $\alpha$. 
    
    \item A physical system $\mathcal{B}=(V_0,\{\tilde{\mathcal{H}}_{v\in V_0}\})$ called the \textbf{Boundary}.
    \item A map $\mathcal{E}:2^{V_0}\rightarrow 2^{V_1\cup E}$ called the \textbf{entanglement wedge} which is near-boundary probing; and such that the map $R \mapsto \mathcal{E}(R^c)^c \cup \exterior{\mathcal{E}(R^c)}$ is also near-boundary probing.
    \item An algebra $\mathcal{A}_{\mathcal{E}(R)}$ for each Boundary region $R\subseteq V_0$ called the \textbf{entanglement wedge algebra}, which is a subalgebra of the physical algebra restricted to $\mathcal{E}(R)$; formally, $\{\mathcal{A}_{\mathcal{E}(R)} \subseteq \mathcal{A}_\mathcal{T} (\mathcal{E}(R)) \}_{R\subseteq V_0}$.
    Furthermore, we require it to contain the full algebra of physical operators supported on the interior of $\mathcal{E}(R)$, i.e.\ $\mathcal{A}_{\mathcal{T}} (\interior{\mathcal{E}(R)}) \subseteq \mathcal{A}_{\mathcal{E}(R)} $.
    \item An \textbf{encoding isometry} $\mathcal{V}:\Pi\mathcal{H}_{V_1 \cup E} \rightarrow \tilde{\mathcal{H}}_{V_0}$ from the Bulk to the Boundary.
\end{enumerate}
Furthermore it obeys a condition known as ``entanglement wedge reconstruction''~\cite{dongReconstructionBulkOperators2016,Cotler2019}, that is for any $R\subseteq V_0$, any Bulk operator localized to the entanglement wedge of $R$ which is in the corresponding entanglement wedge algebra, i.e.\ every $O\in\mathcal{A}_{\mathcal{E}(R)}$ must be implementable by a Boundary operator localized to $R$, $\tilde{O}\in\mathcal{A}_\mathcal{B}(R)$, i.e. 
    \begin{align*}
        \mathcal{V}O=\tilde{O}\mathcal{V}
    \end{align*} 
and $[\tilde{O},\mathcal{V}\mathcal{V}^\dagger]=0$.\footnote{The entanglement wedge reconstruction condition for a given Boundary region $R$ and the algebra associated to its entanglement wedge $\mathcal{A}_\mathcal{T} (\mathcal{E}(R))$ is equivalent, via the so called ``cleaning lemma'', to the condition that $V_0\setminus R$ is ``correctable'' in the error correction sense with respect to $\mathcal{A}_\mathcal{T} (\mathcal{E}(R))$. See \cref{sec:qecc} for further details.}
\end{definition}


Though our definition of holographic codes and results about them naturally generalize to higher dimensions, in that case there are additional types of gauging prescriptions corresponding to placing degrees of freedom on simplices with dimension greater than one, which we have not explored. In the language of field theory: though our results apply to 1-form gauge fields in any dimension, we expect similarly interesting results for higher-form gauge fields~\cite{Gaiotto2015}. 


\subsection{Visualizing many body systems with tensor networks}

\begin{figure}[t]
     \centering
     \begin{subfigure}[t]{0.30\textwidth}
         \centering
\includegraphics[width=\textwidth]{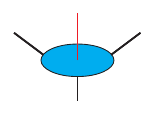}
         \caption{Depiction of a four legged tensor. The legs may be colored differently in order to emphasize some grouping among them.}
         \label{fig:example-tensor}
     \end{subfigure}
     \hfill
     \begin{subfigure}[t]{0.30\textwidth}
         \centering
         \includegraphics[width=\textwidth]{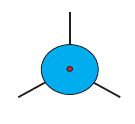}
         \caption{The same tensor ``viewed from the top'' so that the red leg is coming out of the page.}
         \label{fig:example-tensor-top}
     \end{subfigure}
     \hfill
     \begin{subfigure}[t]{0.30\textwidth}
         \centering
         \includegraphics[width=0.5\textwidth]{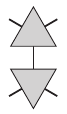}
         \caption{Two three legged tensors with one leg from each contracted together.}
         \label{fig:contraction-example}
     \end{subfigure}
        \caption{Some basic diagrammatic representations of tensor networks.}
        \label{fig:tensor-networks}
\end{figure}

Before introducing examples of holographic codes, we first introduce the tools used to build and visualize all examples known to date, namely tensor networks.
We only do so briefly, and readers unfamiliar with tensor networks are encouraged to refer to Refs.~\cite{HaPPY,handwaving} for more details.

The basic building block of a tensor network is, unsurprisingly, a tensor. A tensor with $n$ indices is represented diagrammatically by some shape with lines, or ``legs'' coming out of it, with each leg representing one index of the tensor, see figures \ref{fig:example-tensor} and \ref{fig:example-tensor-top}. 
A tensor network consists of a set of tensors, each with some number of indices, with some indices contracted\footnote{A contraction between two indices refers to a summation over that index in both tensors, e.g.\ for two two-legged tensors $M_{ij}$ and $N_{kl}$, contracting the first index of the former with the second of the latter would give $\sum_i M_{ij} N_{ki}$.} together among the various tensors. Diagrammatically, this contraction is represented by the two corresponding legs joining together into a single line -- see \cref{fig:contraction-example}.
Legs can also remain unconnected, in which case they represent an uncontracted index of the resultant tensor.\footnote{Such as $j$ and $k$ in the example of the previous footnote.}

In our context, legs are generally identified with corresponding Hilbert spaces.
For example, a code encoding a single qubit into five qubits can be represented as a six-legged tensor $M_{i}^{o_1o_2o_3o_4o_5}$, with one input leg and five output legs.
This is defined implicitly by the encoding isometry $V:\mathcal{H}_2 \to \mathcal{H}_2^{\otimes 5}$, as
\begin{align}
    V \ket{i} = \sum_{o_1,o_2,o_3,o_4,o_5} M_{i}^{o_1o_2o_3o_4o_5} \ket{o_1} \otimes\ket{o_2} \otimes\ket{o_3} \otimes\ket{o_4} \otimes\ket{o_5},
\end{align}
according to some choice of basis for each individual Hilbert space.
In this way, we can identify the leg of $M$ indexed $o_2$ with the Hilbert space of the second physical qubit, and the leg indexed by $i$ with the logical Hilbert space. 

\subsection{Example with unconstrained Bulk: the HaPPY code}\label{subsubsec: holocodes-example-happy}

The first and most common example of a holographic code is the celebrated HaPPY code introduced in \cite{HaPPY}.
We describe it here, although we do not go in depth into its structure except as necessary to see how it fits into our definition of a holographic code.
We then discuss the global/global symmetry duality it exhibits and the mystery it presents. 

The specific family of HaPPY codes we discuss here requires two main ingredients: 1) a uniform pentagonal tiling of hyperbolic space with a radial cutoff, see \cref{fig: hyperbolic-tiling}, and 2) a $6$-legged so-called \textit{perfect tensor}. The code is constructed by contracting many copies of the perfect tensor in a manner which mimics the tiling, see \cref{fig:HaPPY-code}.

\begin{figure}[t]
  \centering
    \includegraphics[width=0.7\textwidth]{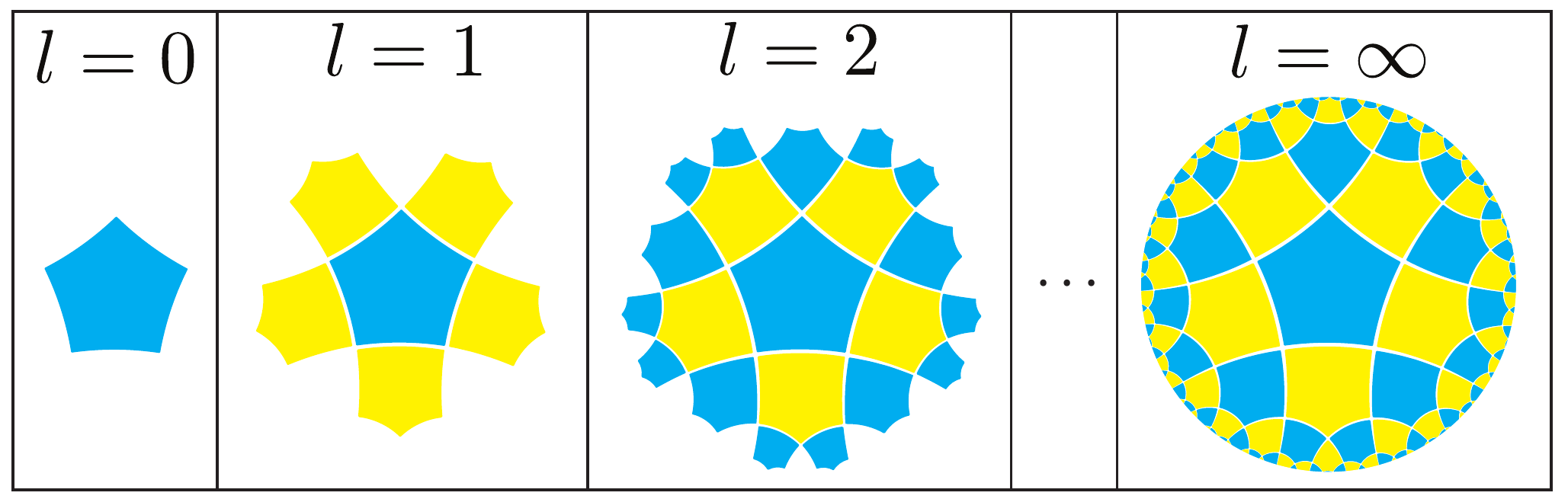}
    \caption{A tiling of the hyperbolic plane with a cutoff labeled by $l$. Increasing $l$ by one amounts to adding an additional ``layer'' of pentagons. }
  \label{fig: hyperbolic-tiling}
\end{figure}

To show that such a HaPPY code satisfies our definition, we first construct the corresponding planar graph with labelled dangling edges $(\Lambda_{\textrm{HaPPY}}=(V,E),\alpha)$.
Each input leg (e.g.\ those in red in \cref{fig:HaPPY-code}) is assigned a vertex labeled $1$, while each output leg (e.g.\ those in white in \cref{fig:HaPPY-code}) is assigned a vertex%
\footnote{Because we now assign output legs to vertices, but they originally arose as the faces of the tiling in \cref{fig: hyperbolic-tiling}, the graph associated with the Bulk system ends up being the dual of the tiling shown there.} 
 labeled $0$
.
Two vertices labeled $1$ are connected by an edge if the tensors of the legs they are assigned are contracted together.
A vertex labeled $1$ and a vertex labeled $0$ are connected by an edge if the legs they are assigned belong to the same tensor.

In this case, the Bulk system is unconstrained, so $\mathcal{T}=(\mathcal{U},\id)$. In the $\mathcal{U}=(V_1\cup E,\{\mathcal{H}_{v\in V_1}\}\cup\{\mathcal{H}_{e\in E}\})$ system, for all $e\in E$ we have that $\mathcal{H}_e$ is trivial since no degrees of freedom are associated to edges in the HaPPY code. For $v\in V_1$ we have that $\mathcal{H}_v$ is taken to be the Hilbert space of the input leg that $v$ is assigned to, in this case a qubit. For the Boundary system, it is left only to specify $\tilde{\mathcal{H}}_v$ for $v\in V_0$, which is identified with the Hilbert space of the \emph{output} leg that $v$ is assigned to. The isometry $\mathcal{V}$ is the HaPPY code itself. \Cref{fig:flan-HaPPY-code} depicts $\mathcal{V}$ as a map from the Bulk to the Boundary while abstracting away the internal structure of the tensor network.

\begin{figure}[t]
  \centering
    \includegraphics[width=0.5\textwidth]{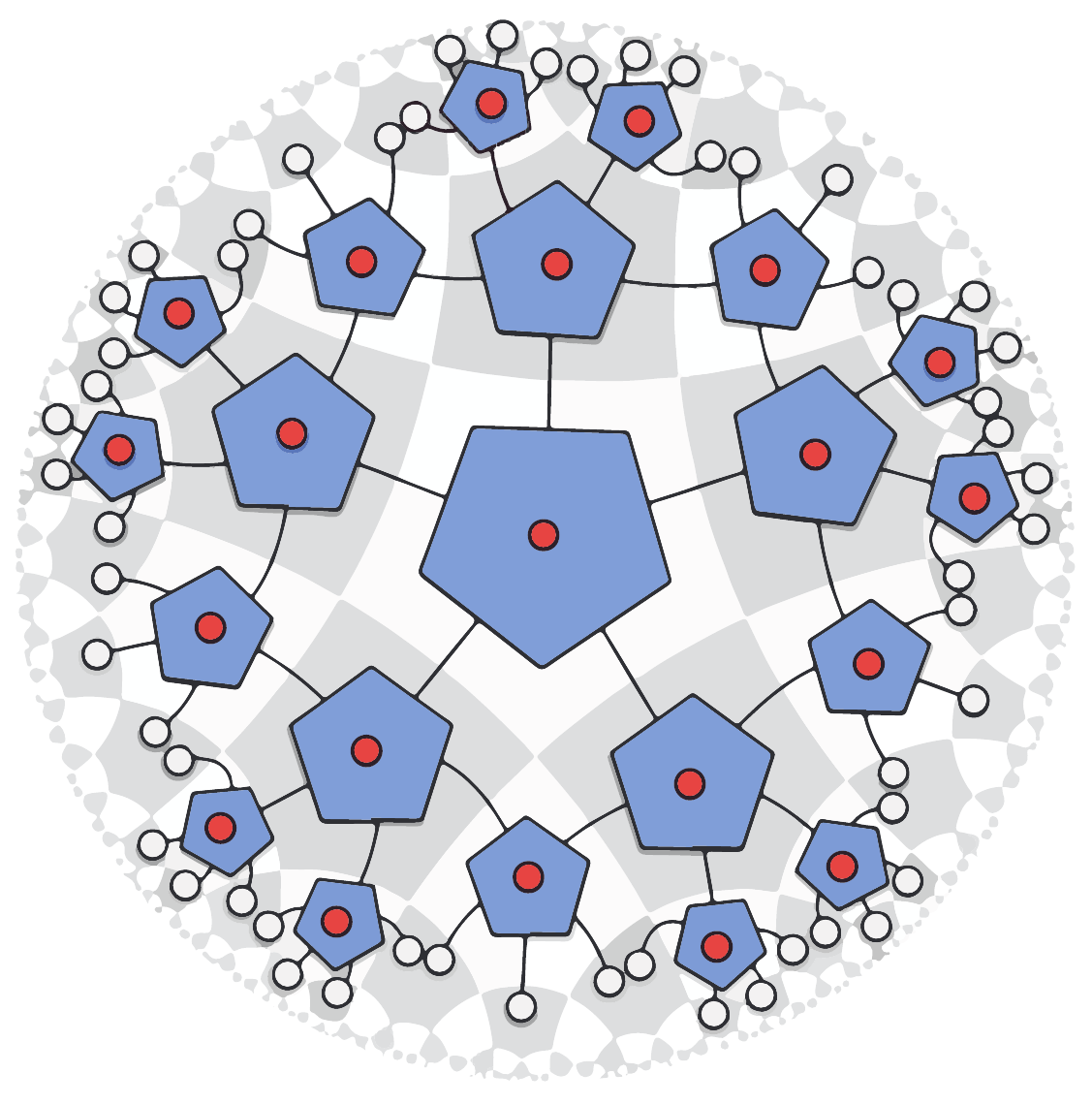}
    \caption{An example of a HaPPY code with cut off at $l=2$.
    A blue pentagon represents an individual tensor.
    Each tensor's six legs, each of which represents a single qubit Hilbert space, are represented by the red dot on top of it and the five black lines coming out of it. The red dot legs are treated as inputs, and the legs terminating in a white dot as outputs. Legs which terminate at tensors on both sides are refereed to as ``virtual legs''.
    (This and similar subsequent figures are adapted from those in Ref.~\cite{HaPPY}).} 
  \label{fig:HaPPY-code}
\end{figure}

\begin{figure}[t]
  \centering
    \includegraphics[width=0.7\textwidth]{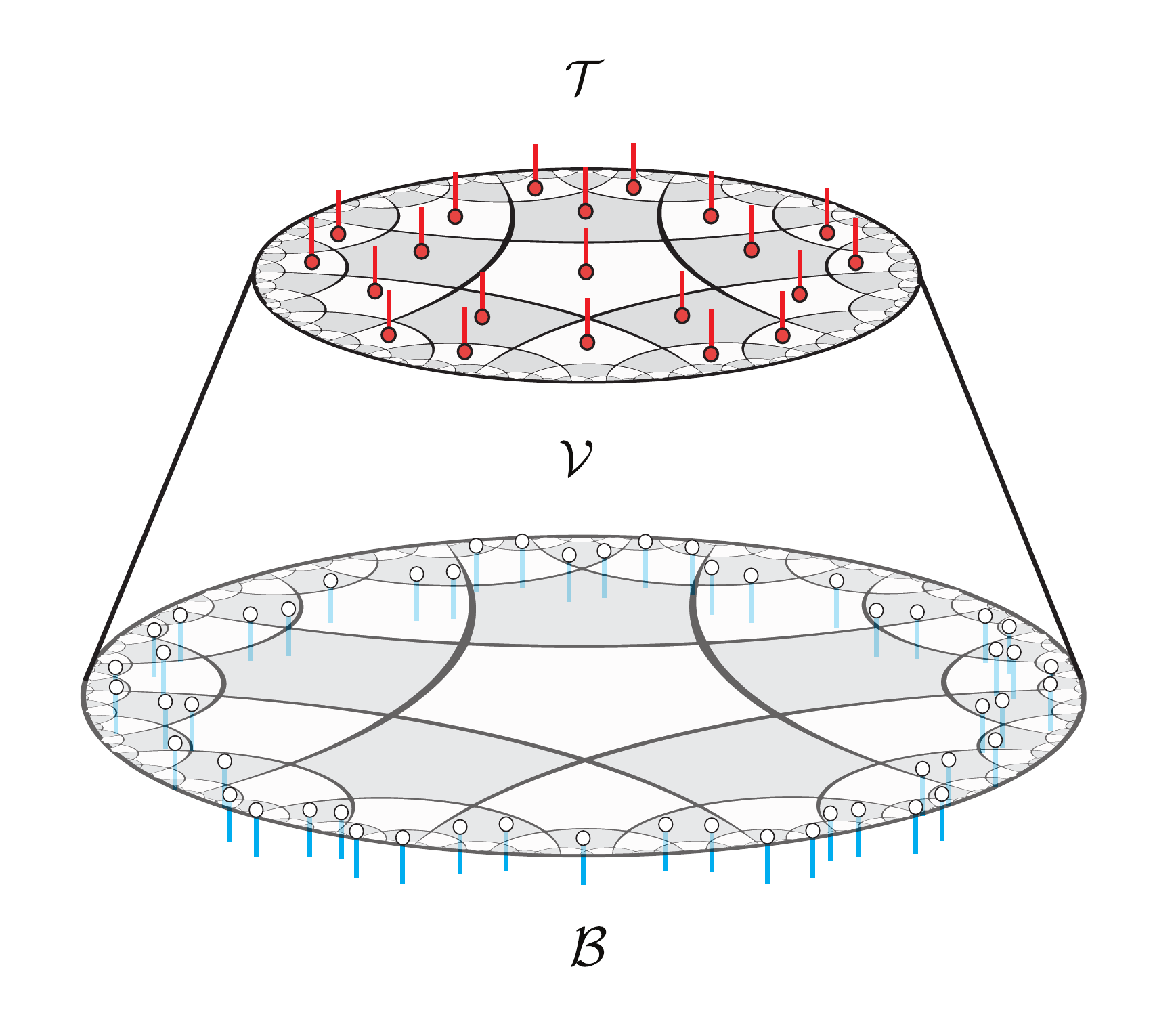}
    \caption{The isometry for the same HaPPY code as in \cref{fig:HaPPY-code} but without the internal structure of the tensor network. This pictorial depiction is useful in section~\ref{sec:applications-to-holography} when visualizing the composition of the HaPPY isometry with other maps. The red legs correspond to the Bulk/input legs, denoted by red circles in fig \ref{fig:HaPPY-code}, while the blue legs correspond to the Boundary/output legs, depicted there as white circles.}
  \label{fig:flan-HaPPY-code}
\end{figure}

The entanglement wedge $\mathcal{E}(R)\subseteq V_1\cup E$ of a Boundary region $R\subseteq V_0$ is given by the greedy entanglement wedge algorithm, defined in Ref.~\cite{HaPPY}, and which we review here in terms of the language we have defined.
To construct it, one initializes $\mathcal{E}(R)$ to contain just the dangling edges connecting to the Boundary region $R$.
We then iterate the following procedure.
For each vertex labeled 1 not included in the entanglement wedge, check the number of incident edges already included in the entanglement wedge.
If it is least half of its total number of incident edges (at least three in this case), then add that vertex and every non-dangling edge it participates in into $\mathcal{E}(R)$.
Repeat this process until none of the remaining vertices labeled 1 satisfy this criterion.
It is argued in Ref.~\cite{HaPPY} that the order of adding vertices does not matter, and so this gives a uniquely defined algorithm.
Furthermore, by the properties of perfect tensors, it is guaranteed that an operator acting in the entanglement wedge can be reconstructed on the starting Boundary region as a codespace-preserving operator (i.e.\ one commuting with the projector onto the image of the isometry), so $\mathcal{A}_{\mathcal{E}(R)}=\mathcal{A}_\mathcal{T}(\mathcal{E}(R))$.

This map qualifies as near-boundary probing, which we argue as follows.
When adding new vertices, only non-dangling edges are added; so the set of dangling edges in $\mathcal{E}(R)$ does not change from the initial conditions, thus the first point is satisfied.
Secondly, any included vertex must be connected to $R$ by some path of in $\mathcal{E}(R)\cup R$, which can be argued as follows: start at the vertex in question.
If it is connected to a dangling edge, we are done. Otherwise move to the neighboring vertex which was earliest to be included by the algorithm.
Since every included vertex either is connected to a dangling edge or neighbors a vertex that was included earlier, this eventually reaches a vertex connected by a dangling edge to $R$.
Finally, the third requirement is satisfied because the set of non-dangling edges included contains exactly those that connect to included vertices.

Furthermore, the ``complement'' map $R \mapsto \mathcal{E}(R^c)^c \cup \exterior{\mathcal{E}(R^c)}$ is also near-boundary probing.
The first requirement follows from the same reasoning as above (noting that $\exterior{\mathcal{E}(R^c)}$ does not contain any dangling edges).
The third requirement follows from the inclusion of $\exterior{\mathcal{E}(R^c)}$.
The second requirement is more subtle, but can be argued by contradiction.
Suppose it does not hold; then there is a site $v\in V_1$ that is excluded from the entanglement wedge of $R^c$ but has no path to the boundary contained in $\mathcal{E}(R^c)^c$.
Then $v$ is contained in some region $ S\subseteq \mathcal{E}(R^c)^c$ that is fully surrounded by vertices in $\mathcal{E}(R^c)$.
But any region $S$ of a hyperbolic pentagonal tiling that doesn't extend to the Boundary must contain at least one vertex with three edges on the boundary of $S$ due to the negative curvature of the tiling.
That vertex should have been included in the entanglement wedge $\mathcal{E}(R^c)$ by the greedy algorithm; thus we have a contradiction.

Besides satisfying the definition of a holographic code, the HaPPY code exhibits another feature: a global/global symmetry duality. A Bulk operator that is a product of $X$ operators on every input leg is implemented on the Boundary system by an operator that is a product of $X$ operators on every output leg.\footnote{This is also true of $Z$ operators and together they generate a unitary representation of a larger group (that is projective on each site). There are also further symmetries coming from Clifford operators, see Ref.~\cite{noncliffords}. Here we ignore these additional symmetries so we can deal with the simpler group $\mathds{Z}_2$.} 
This can be seen from the fact that each perfect tensor itself has the same global/global duality with respect to its input and outgoing legs, or in other words each perfect tensor is stabilized by the action of $X$ on all legs (see Ref.~\cite{noncliffords} for a pictorial argument). This together with the identity operators make a $\mathds{Z}_2$ global/global symmetry duality.
This global symmetry is somewhat mysterious, because, as we discuss in \cref{subsec:connection-to-harlow-ooguri}, arguments from Ref.~\cite{harlow-ooguri} appear to imply it is impossible, and the resolution of this paradox in Ref.~\cite{Faist_2020} leaves an unexplained difference between AdS/CFT and the HaPPY code.
In the same section we discuss where this difference comes from.


\subsection{Example with gauge constrained Bulk: the LOTE code}\label{subsubsec: holocodes-example-lote}

We now discuss an example of a holographic code with a $\mathbb{Z}_2$ gauge constrained Bulk, which we dub the ``gauged LOTE'' code\footnote{As we soon explain, it is modified from the original ``LOTE'' code~\cite{Marolf}, which we refer to by an acronym for the title of that paper.}. We give a summary of the code, show explicitly how it is constructed, and finally, show how it satisfies our definition of a holographic code. We then comment about the symmetry dualities it exhibits. In \cref{subsec: building_holocodes_with_arbitrary_dualities} we give a generalization to finite groups and an approximate version for compact Lie groups.

Despite its great success, the HaPPY code fails to capture many of the features of AdS/CFT.
One such feature is the fact that some Bulk degrees of freedom can be recovered on both a Boundary region and its complement.
This circumvents the no-cloning theorem because these degrees of freedom are classical, or more precisely only a central (i.e.\ abelian) algebra is associated to them.
To remedy this, Donnelley et.\ al.\ introduced in Ref.~\cite{Marolf} a modification of the HaPPY code which we refer to as the LOTE code, that 1) gives the edges non-trivial Hilbert spaces and 2) has the property that an edge sitting on the boundary of some entanglement wedge (i.e.\ only one of its vertices is included in the wedge) has a central sub-algebra that can be reconstructed on the corresponding Boundary region and (often)\footnote{This is not always true because the HaPPY code also fails to perfectly capture the ``complementary recovery'' property of AdS/CFT, which is $\mathcal{E}(R^c)=\mathcal{E}(R)^c$ for any choice of $R$. However for many choices of $R$ this still holds.} its complement as well. Although the Bulk system clearly has the structure of a pre-gauging Hilbert space, the authors stopped short of considering gauge symmetries. We now take a step further and impose gauge constraints on the Bulk, resulting in the gauged LOTE code. As we see in section~\ref{sec:applications-to-holography}, this code exhibits additional properties that illuminate aspects of symmetries in AdS/CFT and offer a new method to build covariant codes. 

First, we provide an explicit construction of the (ungauged) LOTE code introduced in Ref.~\cite{Marolf}.
 The first step is to ``stack'' 6 copies of the HaPPY code, and to use the same graph $\Lambda_\textrm{HaPPY}$ as defined in \ref{subsubsec: holocodes-example-happy}.
Initially, we identify the Hilbert spaces for each component of this graph to just be six copies of the corresponding Hilbert spaces from HaPPY -- each Bulk vertex $v\in V_1$ and Boundary vertex $v\in V_0$ is associated with Hilbert spaces consisting of six qubits, while edge Hilbert spaces are taken to be trivial for now.

To introduce nontrivial Hilbert spaces for each edge, we now shuffle some of these assignments around as follows.
For each Bulk vertex $v \in V_1$, and for each of its neighboring Bulk vertices $v' \in V_1$, one leg from each is combined to form an input leg associated with the corresponding edge $(v,v')$ or $(v',v)$.
This is done by contracting the two legs with the so-called ``copying tensor'' -- the conjugate of the isometry $|i\rangle\rightarrow |i\rangle|i\rangle, \ i\in \{+,-\}$, see \cref{fig:LOTE-construction}. 
For Bulk vertices that do not connect to any dangling edges, only one of the six input legs remains after this procedure has assigned the other five to all neighboring edges.
How the legs are assigned is arbitrary -- for each vertex, the legs from any of the six layers can be arbitrarily distributed to the six assignments, namely five connections to adjacent copying tensors and one assignment to remain as a vertex leg. 

\begin{figure}
     \centering
     \begin{subfigure}[t]{0.49\textwidth}
         \centering
\includegraphics[width=0.9\textwidth]{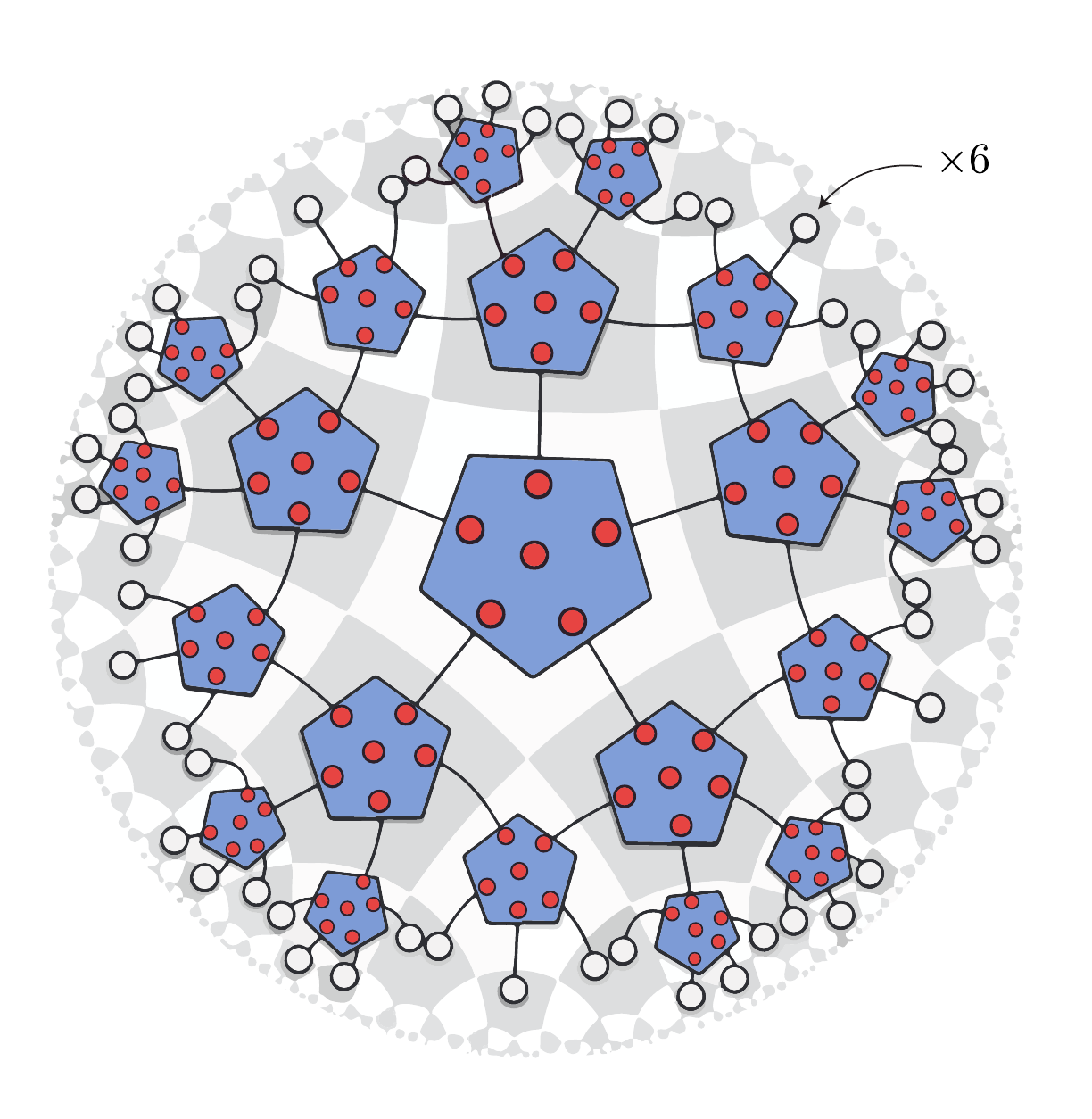}
         \caption{A stack of six HaPPY codes viewed from the top.
}
         \label{fig:stacked-HaPPY}
     \end{subfigure}
     \hfill
     \begin{subfigure}[t]{0.49\textwidth}
         \centering
         \includegraphics[width=\textwidth]{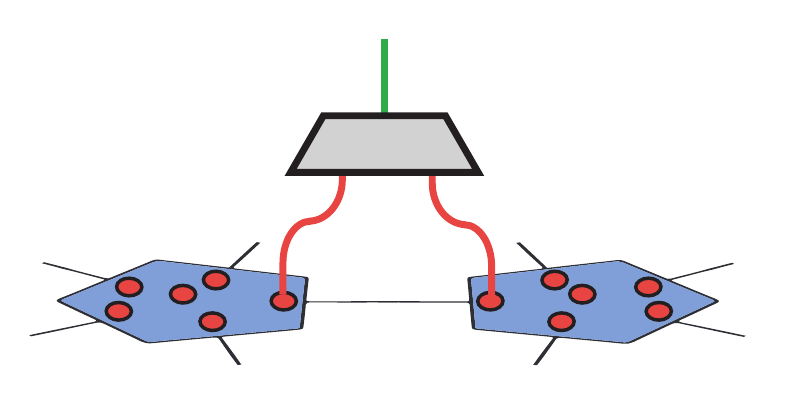}
         \caption{Logical legs of adjacent tensors are contracted by the ``copy tensor''.}
         \label{fig:copy-tensor}
     \end{subfigure}
     \hfill
     \begin{subfigure}[t]{0.49\textwidth}
         \centering
         \includegraphics[width=0.9\textwidth]{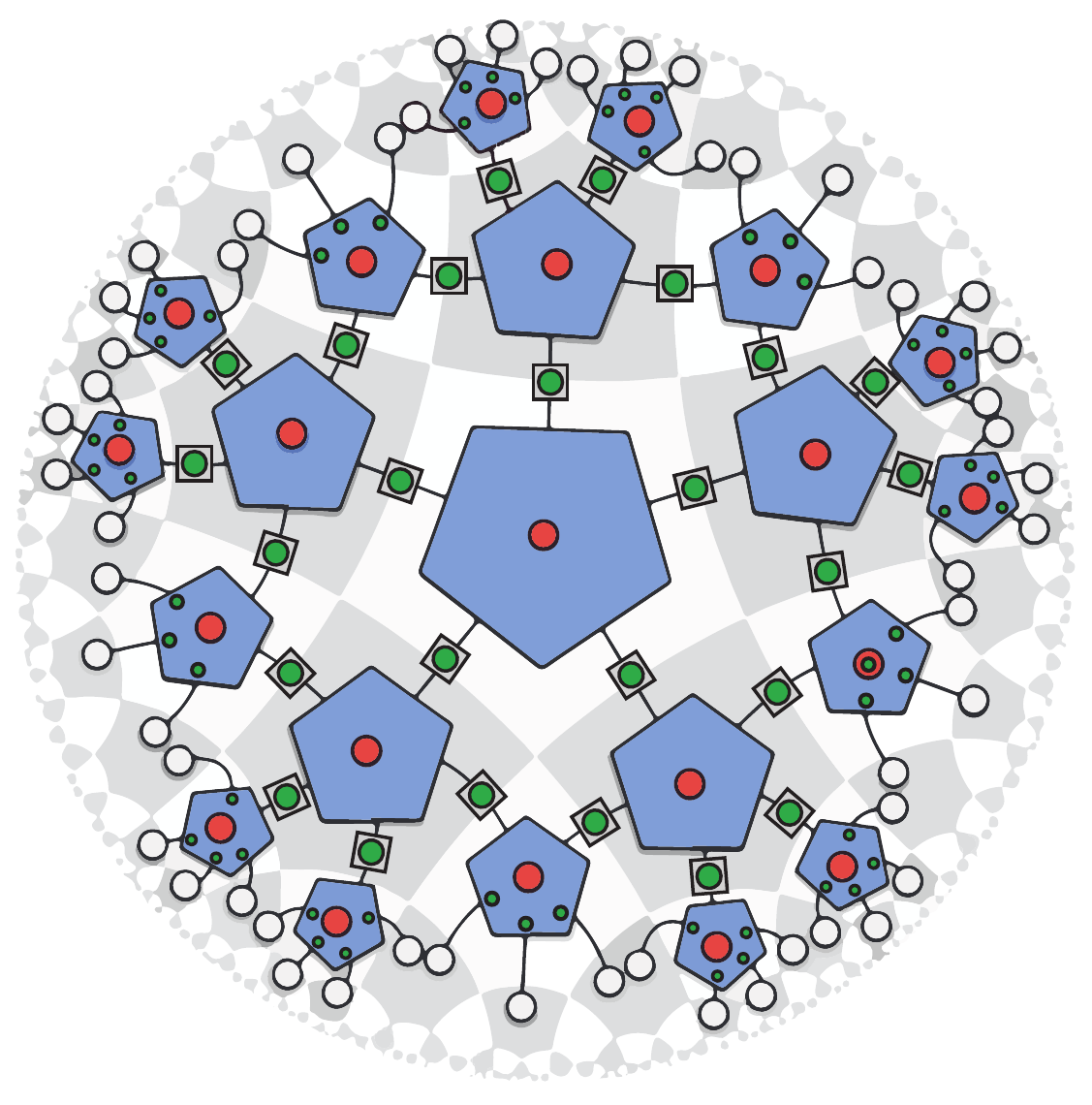}
         \caption{The LOTE code.}
         \label{fig:LOTE-final}
     \end{subfigure}
        \caption{Steps in the construction of the LOTE code. (a) Six HaPPY codes are stacked. The input legs are arranged to clarify how they are contracted.
        (b) Each pair of input legs from two adjacent tensors is contracted with the copy tensor to create the edge degree of freedom. (c) Final result: The LOTE code. Edge degrees of freedom are represented by green dots. The ones inside grey boxes came from the contractions. The small ones on the boundary tensors correspond to the dangling edge degrees of freedom, and come from  uncontracted input legs.}
        
        \label{fig:LOTE-construction}
\end{figure}


The procedure is slightly more complicated for Bulk vertices $v\in V_1$ connecting to dangling edges. 
In HaPPY, such vertices have only one or two neighboring Bulk vertices -- see \cref{fig:HaPPY-code}.
After assigning one leg to be the input leg for $\mathcal{H}_v$, this leaves either three or four unassigned legs.
Instead of using the copying tensor, we associate these legs directly with the Hilbert spaces of the adjacent dangling edges, $\mathcal{H}_e$ with $e\in E_0$.
We now deviate from the construction in Ref.~\cite{Marolf} in a small and superficial way. In order to make the entanglement wedge the same as the one used for HaPPY, we want that a Bulk dangling edge $\mathcal{H}_e$ be reconstructable on the Boundary on $\mathcal{H}_{v_0}$ alone, with $v_0$ being the Boundary vertex connected to the dangling edge $e$.
However, in the code's current state, reconstruction of $\mathcal{H}_e$ on the Boundary must use the reconstruction properties of the tensor at the adjacent Bulk vertex to implement it on \emph{three} Boundary vertices (i.e.\ $v_0$ and two more).
This can be fixed using a minor reshuffling of the assignments of output legs to vertices in the Boundary system.
Specifically, for each vertex $v_0\in V_0$, consider the adjacent vertex $v\in V_1$ to which it is connected by the dangling edge $e = (v_0,v)$.
The input leg associated to $\mathcal{H}_e$ belongs to a unique layer in the stack, in that it was originally $\mathcal{H}_{v}$ for that layer of HaPPY.
Thus $\mathcal{H}_e$ can be reconstructed on the set of all output legs neighboring $v$ in that layer.
Thus, these legs should all be reassigned to $\tilde{\mathcal{H}}_{v_0}$, so that they can be used to reconstruct the input leg at $\mathcal{H}_e$ as we desire.
By doing this for all vertices in $V_0$ one can guarantee the desired reconstruction properties.
We stress that this reshuffling of labels is unimportant to the central points of the paper; if we did not do so then each dangling edge leg would just be reconstructed on $\mathcal{O}(1)$ Boundary vertices instead of a single one.
We choose to include it in order to align as closely as possible the algebra associated to an entanglement wedge, $\mathcal{A}_{\mathcal{E}(R)}$ with the algebra of physical operators supported on that wedge, $\mathcal{A}_{\mathcal{T}} (\mathcal{E}(R))$.


This defines the ungauged/original LOTE code.
We now argue that the Bulk of the ungauged LOTE code has the structure of a $\mathds{Z}_2$ pre-gauging system as defined in \cref{def:pre-gauging}. It has a graph, $\Lambda_{\textrm{HaPPY}}$, Hilbert spaces associated with vertex degrees of freedom (via the input legs), and Hilbert spaces associated to each edge that are isomorphic to $L^2(\mathds{Z}_2)$ (i.e.\ they have dimension 2).
All that remains is to equip the vertex degrees of freedom with a global symmetry transformation. This can be any unitary representation of $\mathds{Z}_2$, i.e. $U_v({0\in \mathbb{Z}_2}) = \id$, and $U_v({1 \in \mathbb{Z}_2})$ any unitary squaring to $\id$.

Having constructed the ungauged LOTE code and shown that its Bulk system has the structure of a $\mathds{Z}_2$ pre-gauging Hilbert space, we are ready to define the \emph{gauged} LOTE code. 
For the logical constrained system we take $\mathcal{T}=(\mathcal{U},\Pi_{GI})$ with graph $\Lambda_{\textrm{HaPPY}}$, with $\mathcal{U}$ the Bulk system of the ungauged LOTE code described above, and with $\Pi_{GI}$ the projector onto the gauge-invariant Hilbert space as defined in \cref{def: gauging-definition}.
The construction of that projector requires an oriented graph, whereas our definition of holographic codes did not require directions to be assigned to each edge; this orientation can be chosen arbitrarily so long as dangling edges $e\in E_0$ point from the Boundary to the Bulk, i.e.\ take the form $e=(v_0,v_1)$ with $v_0\in V_0 $, $v_1 \in V_1$.
This gauge constraint is essentially the only difference from the original LOTE code: we use the same Boundary system; we use the same isometry but with the domain restricted to gauge-invariant states; and we use the same entanglement wedge map as defined for the HaPPY code in the previous section.



Since the entanglement wedge map is unchanged, it is still near-boundary probing.
For now, we choose the entanglement wedge algebra to be generated by the entire algebra of physical (i.e.\ gauge-invariant) operators supported on the interior of the entanglement wedge (as it must be), as well as the \emph{central flux operators} on the boundary, that is, those of the form $U_e^L(g)$ with $g$ in the centre of $G$ and $e$ on the boundary of $\mathcal{E}(R)$.
Note that this includes any gauge-invariant operators supported on $\mathcal{E}(R)$ that can be generated from Wilson loops, NGC-to-NGC Wilson lines, NGC to charge Wilson lines, and central flux operators.
We also conjecture that the entanglement wedge algebra can be chosen instead to simply be the entire algebra of physical operators supported on $\mathcal{E}(R)$.

First, we outline the reconstruction of arbitrary operators in the interior of $\mathcal{E}(R)$, before discussing central flux operators on its boundary.
Asymptotic gauge transformations are clearly recoverable just on the $V_0$ whose edge they act on, because of the superficial degree of freedom shuffling we did above.
For larger boundary regions, reconstruction of $E_1$ edge degrees of freedom makes use of the fact that an isometry such as the copy tensor can always implement an operator $\mathcal{O}$ on its input system via an operator $V\mathcal{O}V^\dagger$ on its output system.
Thus components of gauge-invariant operators supported on inputs to copying tensors can be reconstructed as operators on the output legs.
Then these output legs connect to the tensors located at the two adjacent vertices, so the operators can again be reconstructed using properties of perfect tensors.
If the initial edge leg and both its adjacent vertices are in the entanglement wedge, then this allows for reconstruction of the operator on $R$.
With this reasoning we can conclude that the the entanglement wedge algebras contain all gauge invariant operators supported on the interior of the entanglement wedge.

For operators supported on the boundary of $\mathcal{E}(R)$, this procedure would in general result in an implementation with support on a vertex just outside the entanglement wedge.
However, the only obvious non-trivial gauge-invariant operator supported on these boundary edges can also be reconstructed, namely the central flux operators $U^L_e(1\in \mathds{Z}_2)=U^R_e(1\in \mathds{Z}_2)=X$. 
This is because although the implementation defined by $\mathcal{O} \to V\mathcal{O}V^{\dagger}$ generally has support on both output legs, in special cases one can exploit redundancy in this reconstruction to implement an input operator via an output operator localized to one subsystem.
Because of the way that we constructed the copy tensor, for example, an $X$ operator on the input can be implemented via either $X\otimes \id$ or $\id \otimes X$ on the output.
Thus the $X$ operator defined above can always be chosen to be implemented on the neighboring vertex that lies \emph{inside} of the entanglement wedge.

As a final note, we would like to stress that the gauge symmetry exhibited by the gauged LOTE code is in no way related to the global/global symmetry dualities exhibited by HaPPY. As we see in the following subsection, this code does indeed display a $\mathds{Z}^2$ gauge/global duality, but this is a coincidence; at no point do we rely on the global/global duality in HaPPY. 
In fact, we could have constructed the gauge/global duality using a different idempotent unitary operator to emphasize this fact, but we chose the $X$ operator (and associated copying tensor) for simplicity.
Later, we also show how to generalize this construction to build Bulk systems gauged with respect to arbitrary finite groups, and these are certainly unrelated to the HaPPY $\mathds{Z}_2$ global/global duality. 

\subsection{Full gauge-invariant Bulk \texorpdfstring{$\implies$}{→} gauge/global duality}
\label{subsec: full_gauged_Bulk_to_gauge/global}

We now show that if the Bulk system of a holographic code has a full gauge-invariant Hilbert space, i.e.\ $\Pi=\Pi_{GI}$\footnote{In contrast to a holographic code constrained further to a fixed flux sector, such as the one we construct in \cref{subsec: gauging-holocodes}.}, then that code automatically exhibits a gauge/global duality.
We only require the bare-bones assumption that local asymptotic gauge transformations are recoverable, i.e.\ that the entanglement wedge algebra $\mathcal{A}_{\mathcal{E}(R)}$ of a single-site Boundary region $R = \{v_0\}$ includes local operators on the adjacent edge $e \ni v_0$.
This proof is adapted from a similar proof given in \cite{harlow-ooguri} for AdS/CFT.
We give an intuitive sketch of the proof, and then give a formal statement and proof.

\begin{figure}[t]
  \centering
    \includegraphics[width=0.6\textwidth]{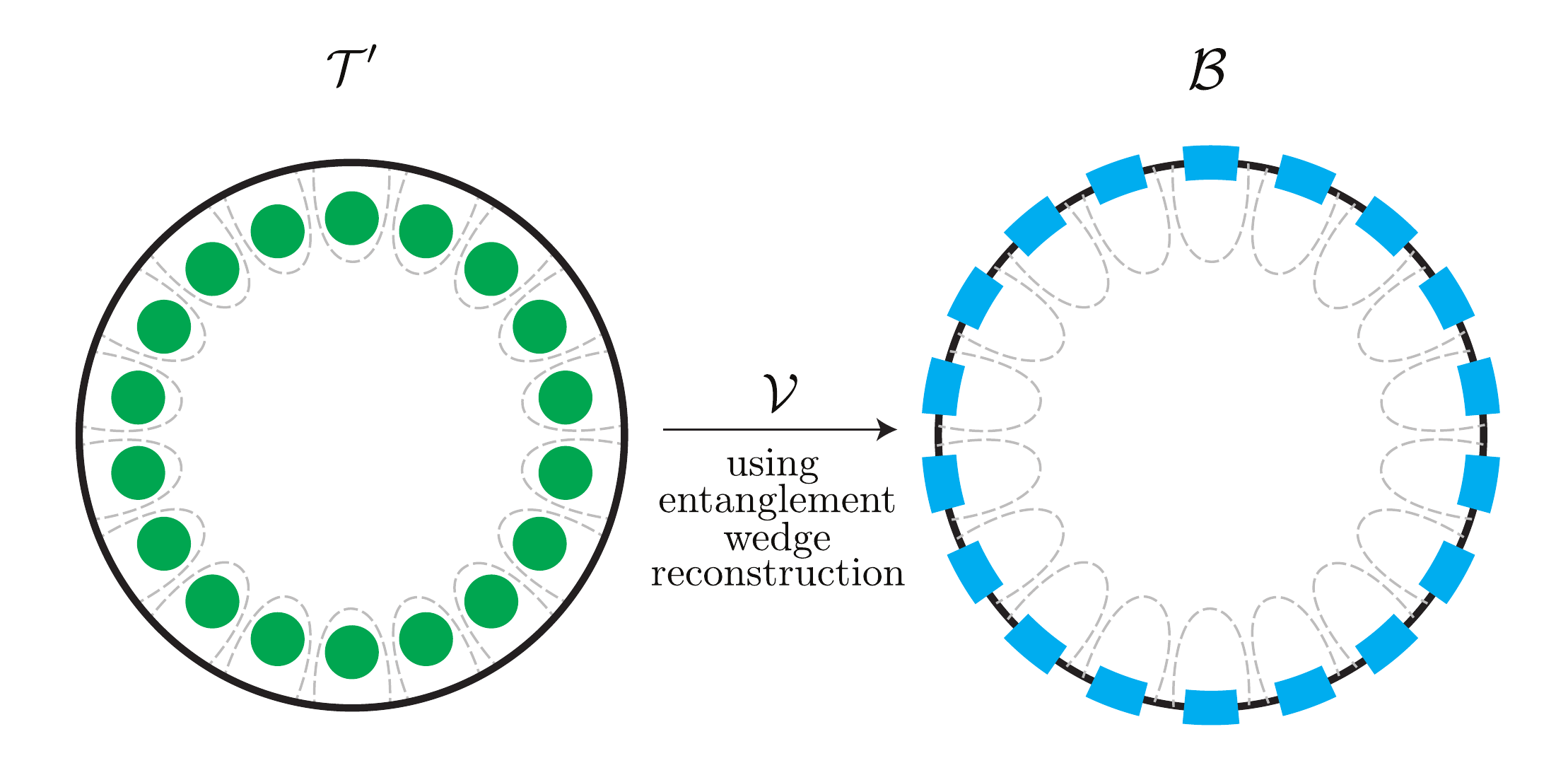}
    \caption{Depiction of the main part of the argument that holographic codes with full gauge-invariant Bulks automatically exhibit a gauge/global duality. The green dots represent the operators constituting the asymptotic symmetry transformation on the $\mathcal{T}$ system. They are each in an entanglement wedge of a small Boundary region and thus can be reconstructed as a product of operators along the $\mathcal{T}'$ system, depicted by the small blue rectangles, forming a global symmetry transformation. This part of the proof appears also in the proof of the same duality in AdS/CFT as shown in Ref.~\cite{harlow-ooguri}.}
  \label{fig:full-gauge-to-global-global}
\end{figure}

The sketch of the proof proceeds as follows (see also \cref{fig:full-gauge-to-global-global}).
Consider an asymptotic symmetry transformation in the Bulk.
It consists of a product of local asymptotic gauge transformations, which act only all dangling edges, as there are no Bulk degrees of freedom associated with vertices in $V_0$.
Because the Bulk has a full gauge-invariant Hilbert space, each local asymptotic gauge transformation in the product is also a physical operator, as it commutes with each local gauge constraint\footnote{This would not be the case if the Bulk system were further constrained to a fixed-flux sector, since restricted asymptotic symmetry transformations generically mix flux sectors}.
We now need to make the mild assumption that each of these operators can be reconstructed on the Boundary vertex to which the edge it lives on is attached.
Reconstructing all of these operators in this way gives a product of operators along single site regions of the Boundary system. It can be shown that these operators are unitary and form a representation of the same group, and thus the code exhibits a gauge/global duality. This is an adaptation of the proof strategy employed in Ref.~\cite{harlow-ooguri} to show that in AdS/CFT a Bulk long-range gauge symmetry implies a Boundary global symmetry. Other than explicit realization in a toy model, the difference in the proofs is how we show that the reconstructed operators can be chosen to be unitary representations.

\begin{theorem}
\label{thm: full_gauge_to_gauge_global}
Suppose there exists a holographic code $(\mathcal{T},\Lambda,\alpha,\mathcal{B},\mathcal{E},\{\mathcal{A}_{\mathcal{E}(R)}\},\mathcal{V})$ such that $\mathcal{T}$ is a constrained system obtained by gauging a system with global symmetry $(\mathcal{S},G,\{U_v\})$, and $ A_{v_0}(g) \in \mathcal{A}_{ \mathcal{E}(\{v_0\})}$.
It is possible to equip the Boundary system $\mathcal{B}$ with a global symmetry $(\mathcal{B},G,\{\tilde{U}_{v\in V_0}\})$ such that $\mathcal{V}$ exhibits a gauge/global duality between $\mathcal{T}$ and $(\mathcal{B},G,\{\tilde{U}_{v\in V_0}\})$. 
\end{theorem}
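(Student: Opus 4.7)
The strategy is the one sketched just before the theorem statement: for each Boundary vertex reconstruct the local asymptotic gauge transformation via entanglement wedge reconstruction, assemble the results into a global symmetry, and verify the duality. The only nontrivial issue is arranging the site-local implementations to form a genuine unitary representation of $G$ on the Boundary, which I would handle by appealing to the result referenced in \cref{sec:qecc}.

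First, I would fix $v_0 \in V_0$ together with its unique adjacent dangling edge $e=(v_0,v_1)\in E_0$. Because the Bulk carries no degrees of freedom at $v_0$ and $E^-(v_0)=\varnothing$ under the dangling-edge orientation convention, the local asymptotic gauge transformation reduces to $A_{v_0}(g) = U^R_e(g)$, a unitary representation of $G$ supported on the single edge $e$. Since $\Pi = \Pi_{GI}$, each $A_{v_0}(g)$ commutes with every local gauge constraint and is therefore physical. By hypothesis, $A_{v_0}(g) \in \mathcal{A}_{\mathcal{E}(\{v_0\})}$, so entanglement wedge reconstruction produces for each $g$ some $\tilde{O}_{v_0}(g) \in \mathcal{A}_\mathcal{B}(\{v_0\})$ with $\tilde{O}_{v_0}(g)\mathcal{V} = \mathcal{V} A_{v_0}(g)$ and $[\tilde{O}_{v_0}(g), \mathcal{V}\mathcal{V}^\dagger] = 0$. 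At this point the $\tilde{O}_{v_0}(g)$ are only determined up to additions that annihilate the code subspace, so compositions $\tilde{O}_{v_0}(g)\tilde{O}_{v_0}(h)$ agree with $\tilde{O}_{v_0}(gh)$ only on the image of $\mathcal{V}$, not as operators on the full Boundary space.

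The main obstacle is lifting this family to a genuine unitary representation $\tilde{U}_{v_0}: G \to \mathcal{A}_\mathcal{B}(\{v_0\})$. I would invoke the QECC result advertised in \cref{sec:qecc}: whenever a subsystem is correctable for a subalgebra (here $V_0\setminus\{v_0\}$ is correctable for $\mathcal{A}_{\mathcal{E}(\{v_0\})}$, which is equivalent to entanglement wedge reconstruction on $\{v_0\}$ via the cleaning lemma), any unitary representation of $G$ inside that subalgebra can be implemented by a unitary representation supported on the complement. Applied to $A_{v_0}$, this yields $\tilde{U}_{v_0}$ satisfying simultaneously $\tilde{U}_{v_0}(g)\mathcal{V} = \mathcal{V} A_{v_0}(g)$, $\tilde{U}_{v_0}(g)\tilde{U}_{v_0}(h) = \tilde{U}_{v_0}(gh)$, and $\tilde{U}_{v_0}(g)^\dagger \tilde{U}_{v_0}(g) = \id$ everywhere on $\tilde{\mathcal{H}}_{v_0}$. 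I expect this step to be the technical heart of the proof: the ``naive'' choice of representative need not multiply correctly off the codespace, and some form of covariant Stinespring-style construction will be needed to repair this.

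Finally, I would set $\tilde{U}_{V_0}(g) \defi \prod_{v_0 \in V_0} \tilde{U}_{v_0}(g)$, which is well-defined because the factors live on disjoint Boundary sites and hence commute. Using the intertwining property of each factor inductively gives
\begin{align*}
\tilde{U}_{V_0}(g)\,\mathcal{V} \;=\; \left(\prod_{v_0 \in V_0} \tilde{U}_{v_0}(g)\right)\mathcal{V} \;=\; \mathcal{V}\prod_{v_0 \in V_0} A_{v_0}(g) \;=\; \mathcal{V}\, A_{V_0}(g),
\end{align*}
which is exactly the gauge/global duality of \cref{def:gauge-global} with respect to the Boundary global symmetry $(\mathcal{B}, G, \{\tilde{U}_{v_0}\})$ of \cref{def:global-symmetry-transformation}. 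With the representation-lifting tool in hand, everything else follows from site-locality of the asymptotic transformation and a straightforward composition of single-site entanglement wedge reconstructions.
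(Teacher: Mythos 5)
Your proposal is correct and follows essentially the same route as the paper's proof: reconstruct each local asymptotic gauge transformation $A_{v_0}(g)$ on its single Boundary site via entanglement wedge reconstruction, then invoke the error-correction lemmas of \cref{sec:qecc} (\cref{lem:logical-unitary-to-physical-unitary} and \cref{lem: logical-representation to physical representation}) to upgrade the reconstructed operators to a genuine unitary representation before taking the product over $V_0$. You correctly identify the representation-lifting step as the only nontrivial obstacle, exactly where the paper places the technical weight.
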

\begin{proof}
Recall that $A_{V_0}(g)\equiv\prod_{v_0\in V_0} A_{v_0}(g)$ and that $A_{v_0}(g)\in \mathcal{A}_{\mathcal{T}}(\{v_0,e\})$\footnote{Notice this would not be true if $\mathcal{T}$ were only a flux-free sector since an asymptotic symmetry transformation restricted to act on only a subset of $V_0$ generically changes the flux sector.}, with $e$ the unique dangling edge incident to $v_0$.
By the definition of a holographic code, since $ A_{v_0}(g) \in \mathcal{A}_{ \mathcal{E}(\{v_0\})}$, there exists a codespace-preserving operator $\tilde{U}_{v_0}(g)\in\mathcal{A}_\mathcal{B}(v_0)$ such that $\tilde{U}_{v_0}(g)\mathcal{V}=\mathcal{V}A_{v_0}(g)$.
By \cref{lem:logical-unitary-to-physical-unitary}, $\tilde{U}_{v_0}(g)$ can be chosen to be unitary and by \cref{lem: logical-representation to physical representation} it can also be chosen to be a representation.
Thus $\{\tilde{U}_{v_0}(g)\}$ is a global symmetry transformation of the type whose existence was to be shown. 
\end{proof}

\section{The gauging isometry applied to holographic codes}
\label{sec:applications-to-holography}
Having defined and examined the gauging isometry, we turn to its application in constructing, and mapping between, holographic codes with various symmetry dualities. We show with a simple application of the gauging isometry that any holographic code with a global/global duality can be made into one with a gauge/global duality, and vice versa. We then use these results to make several observations about holographic codes and AdS/CFT. First, we discuss connections between this work and Ref.~\cite{harlow-ooguri}, in particular we use our results to gain a better understanding of why holographic codes seem to violate their no global symmetries result, and point out a sufficient condition for holographic codes to exhibit gauge symmetry. We then argue that our toy models can be seen as a rudimentary model of time evolution, and can be used to illustrate the relationship between metric fluctuations and approximate error correction in AdS/CFT. Finally, we explicitly construct a holographic code with a global/global duality for an arbitrary finite symmetry group, as well as an approximate version for any compact Lie groups.

\subsection{Gauging: global/global \texorpdfstring{$\implies$}{→} fixed sector gauge/global}
\label{subsec: gauging-holocodes}

We now show that given a holographic code with a global/global symmetry duality, one may ``gauge'' its Bulk via the gauging map to obtain a holographic code with a fixed sector gauge/global symmetry duality with the same symmetry group. We first give an intuitive sketch of the argument in the next paragraph and \cref{fig:gauging-collective}. We then give a formal proof in \cref{thm:global_global_to_gauge_global}. Finally we give an example using the HaPPY code in \cref{example: gauging-HaPPY}.

The construction of the gauge/global code is accomplished by composing the inverse of the gauging map with the encoding isometry of the given holographic code, see \cref{fig:gauging-flan}. The resulting map inherits the entanglement wedge recovery properties of the original encoding isometry because all ungauged operators in an entanglement wedge may be implemented by gauged ones supported strictly within same entanglement wedge, see \cref{fig:gauging-ewedge-preserved}. Lastly, the resulting map exhibits a gauge/global symmetry duality because the inverse of the gauging map ``converts'' a Bulk global symmetry into an asymptotic symmetry transformation using its own gauge/global duality, see \cref{fig:gauging-symmetry-conversion}.
We present this more formally now.

\begin{figure}
     \centering
     \begin{subfigure}[t]{0.46\textwidth}
         \centering
\includegraphics[width=\textwidth]{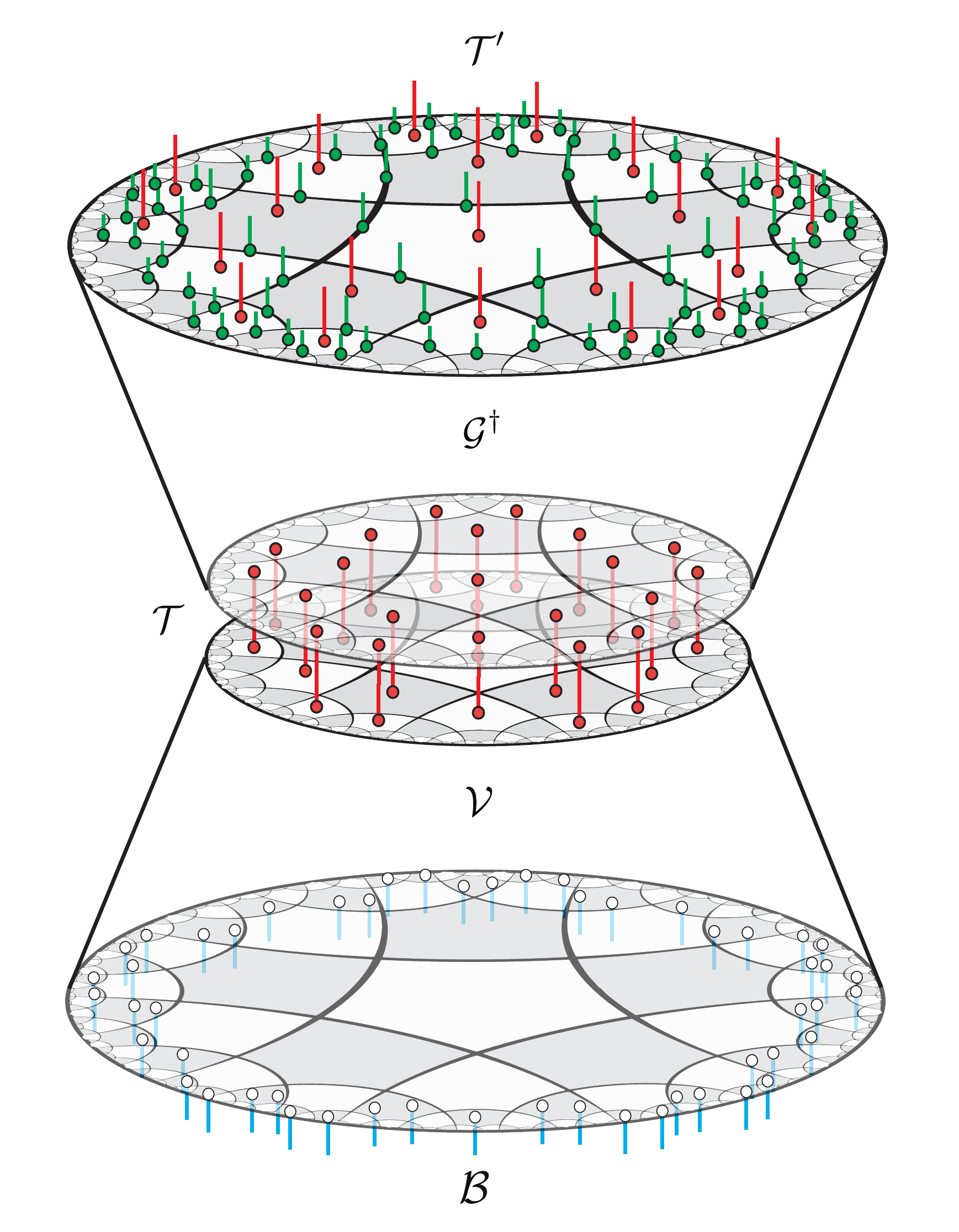}
         \caption{Depiction of $\mathcal{V} \circ \mathcal{G}^\dagger $, the encoding isometry of the code given by \cref{thm:global_global_to_gauge_global}.}
         \label{fig:gauging-flan}1
     \end{subfigure}
     \hfill
     \begin{subfigure}[t]{0.23\textwidth}
         \centering
         \includegraphics[width=\textwidth]{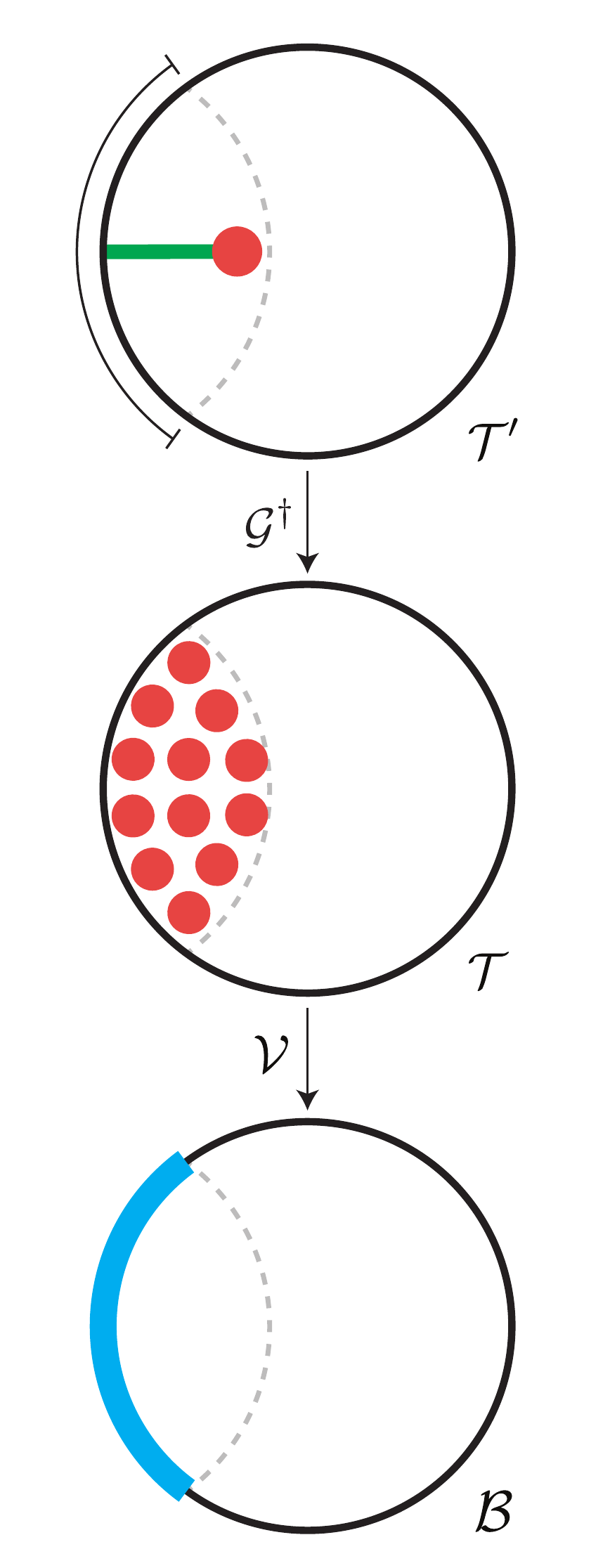}
         \caption{$\mathcal{V} \circ \mathcal{G}^\dagger $ inherits holographic reconstruction from $\mathcal{V}$.}
         \label{fig:gauging-ewedge-preserved}
     \end{subfigure}
     \hfill
     \begin{subfigure}[t]{0.23\textwidth}
         \centering
         \includegraphics[width=\textwidth]{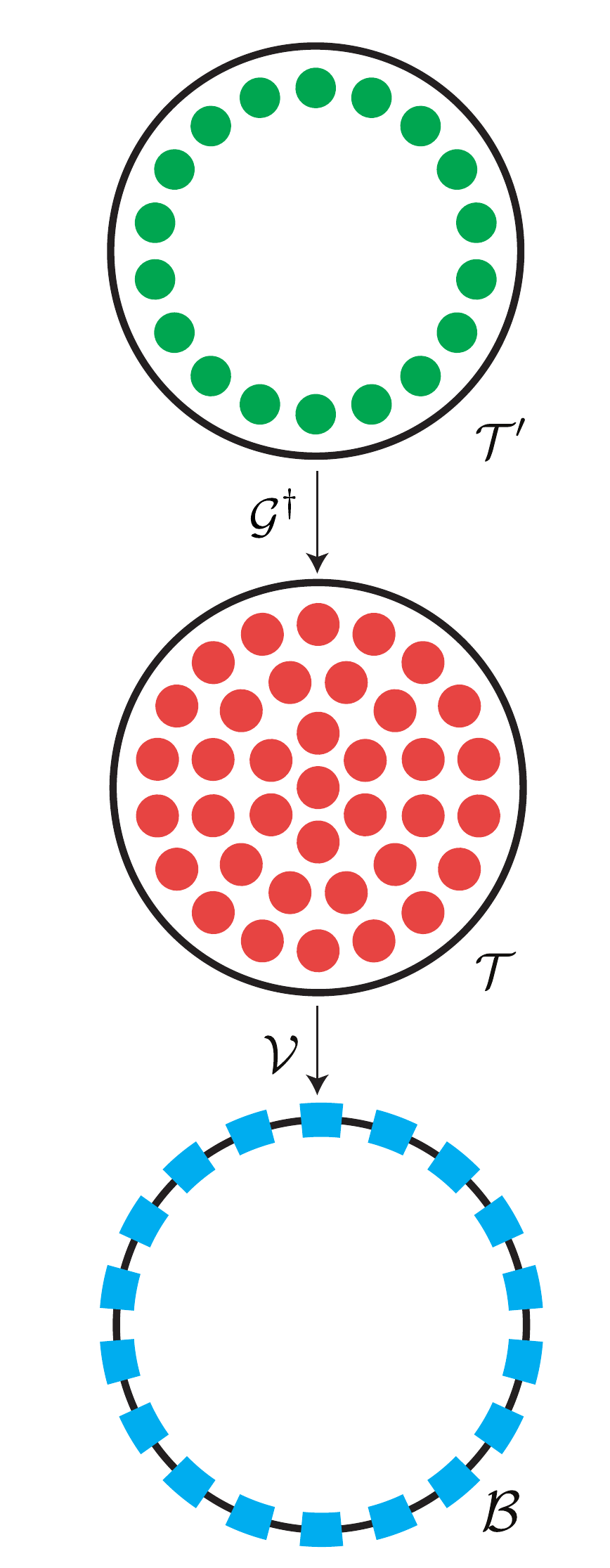}
         \caption{$\mathcal{V} \circ \mathcal{G}^\dagger $ exhibits a gauge/global duality.}
         \label{fig:gauging-symmetry-conversion}
     \end{subfigure}
        \caption{Basic components of the proof that any holographic code with a global/global symmetry duality can be converted into one with a gauge/global symmetry duality via \cref{thm:global_global_to_gauge_global}. (a) The encoding isometry of the new code is constructed by composing the the Hermitian conjugate of the gauging map with the encoding isometry of the given code, i.e. $\mathcal{V}'=\mathcal{G}^\dagger\circ\mathcal{V}$. Each map is depicted in tensor network notation with their legs arranged to reflect the locality of the degrees of freedom they represent. The composition is depicted by the contraction of legs in the $\mathcal{T}$ system. Red lines represent Bulk vertex degrees of freedom, green lines represent Bulk edge degrees of freedom, and blue lines represent the Boundary degrees of freedom. (b) $\mathcal{V}\circ\mathcal{G}^\dagger$ inherits the code properties of $\mathcal{V}$ because $\mathcal{T}'$ operators (red dot with green line) in the entanglement wedge of a region can be implemented by $\mathcal{T}$ operators (many red dots left of dashed line) which have support in the same wedge, which in turn can be reconstructed on the Boundary region (blue arc) using the original code. The thin black line shows the Boundary region and the dashed line its entanglement wedge. (c) $\mathcal{V}\circ\mathcal{G}^\dagger$ exhibits a gauge/global duality because an asymptotic symmetry transformation (green dots) on $\mathcal{T}'$ is implemented as a global symmetry transformation (red dots) on $\mathcal{T}$, which is in turn implemented by a global symmetry transformation on the Boundary (blue dots).} 
        \label{fig:gauging-collective}
\end{figure}

\begin{theorem}
\label{thm:global_global_to_gauge_global}

\item Consider a holographic code $(\mathcal{T},\Lambda,\alpha,\mathcal{B},\mathcal{E},\{\mathcal{A}_{\mathcal{E}(R)}\},\mathcal{V})$ as per \cref{def:holographic-code}, with unconstrained logical Bulk system and trivial edge degrees of freedom -- that is, with $\mathcal{T}=(\mathcal{U},\mathds{1})$ and $\mathcal{U}=(V\cup E,\{\mathcal{H}_{v\in V}\}\cup\{\mathcal{H}_{e\in E}=\mathds{C}\})$.
Suppose that we can equip $\mathcal{T}$ and $\mathcal{B}$ with global symmetry transformations $(\mathcal{T}, G, U_{v\in V})$ and $(\mathcal{B}, G, \tilde{U}_{v\in V_0})$, as per \cref{def:global-symmetry-transformation}, such that with respect to these transformations $\mathcal{V}$ exhibits a global/global duality as in \cref{def:global-global}.

Then there exists a holographic code $(\mathcal{T}',\Lambda,\alpha,\mathcal{B},\mathcal{E},\{\mathcal{A}'_{\mathcal{E}(R)}\},\mathcal{V}')$ whose encoding isometry $\mathcal{V}'$ exhibits a gauge/global duality, and with logical Bulk system $\mathcal{T}'=(\mathcal{U}',\Pi_{FF})$ such that $\mathcal{T}'$ is the flux-free sector of the system obtained by gauging $(\mathcal{T}, G, U_{v\in V})$.
\end{theorem}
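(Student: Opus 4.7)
The plan is to take the new encoding isometry to be $\mathcal{V}' \defi \mathcal{V} \circ \mathcal{G}^{\dagger}$, where $\mathcal{G}: \mathcal{H}_V \to \Pi_{GI}\mathcal{H}_\Lambda$ is the gauging map built from the pre-gauging data dictated by $(\mathcal{T}, G, \{U_v\})$ and the graph $\Lambda$. By \cref{thm: gauging-map-image-is-flux-free} the image of $\mathcal{G}$ is exactly $\Pi_{FF}\mathcal{H}_\Lambda$, so $\mathcal{G}^\dagger$ restricted to $\Pi_{FF}\mathcal{H}_\Lambda$ is the genuine (isometric) inverse of $\mathcal{G}$. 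Composing with the isometry $\mathcal{V}$ then gives an isometry $\mathcal{V}': \Pi_{FF}\mathcal{H}_\Lambda \to \tilde{\mathcal{H}}_{V_0}$, which I take as the encoding map of the candidate code. I keep the graph $\Lambda$, the labeling $\alpha$, the Boundary system $\mathcal{B}$, and the entanglement wedge map $\mathcal{E}$ exactly as in the original code, so the near-boundary-probing conditions required by \cref{def:holographic-code} are inherited for free.

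Next I would define the new entanglement wedge algebra $\mathcal{A}'_{\mathcal{E}(R)}$. For each original generator $O \in \mathcal{A}_{\mathcal{E}(R)}$ supported on some $v \in V_1 \cap \mathcal{E}(R)$, near-boundary probing guarantees a path $\Gamma$ inside $\mathcal{E}(R) \cup R$ from $v$ to some $v_0 \in R$; \cref{thm: operator-dressing-theorem} then produces a dressed gauge-invariant operator $O_\Gamma \in \mathcal{A}_\mathcal{T}(\Gamma)$ with $\mathcal{G} O = O_\Gamma \mathcal{G}$ and $[O_\Gamma, \mathcal{G}\mathcal{G}^\dagger] = 0$. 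I take $\mathcal{A}'_{\mathcal{E}(R)}$ to be generated by all such $O_\Gamma$ together with the physical operators $\mathcal{A}_{\mathcal{T}'}(\interior{\mathcal{E}(R)})$. For the inclusion $\mathcal{A}_{\mathcal{T}'}(\interior{\mathcal{E}(R)}) \subseteq \mathcal{A}'_{\mathcal{E}(R)}$ required by \cref{def:holographic-code}, I would appeal to the reverse-dressing statement mentioned in the paper (cf.\ the reverse of \cref{thm: operator-dressing-theorem}) to argue that a gauge-invariant operator supported on the interior descends through $\mathcal{G}^\dagger$ to an ungauged operator supported on $V_1 \cap \interior{\mathcal{E}(R)}$, which in turn belongs to $\mathcal{A}_{\mathcal{E}(R)}$ of the original code; thus its Boundary reconstruction exists.

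Entanglement wedge reconstruction for $\mathcal{V}'$ is then a two-line computation. For $O' \in \mathcal{A}'_{\mathcal{E}(R)}$, pick the corresponding ungauged operator $O \in \mathcal{A}_{\mathcal{E}(R)}$ with $O' \mathcal{G} = \mathcal{G} O$; since $O'$ preserves $\mathrm{Image}(\mathcal{G})$ and $\mathcal{G}^\dagger\mathcal{G} = \id$, we get $\mathcal{G}^\dagger O' = O \mathcal{G}^\dagger$ on $\Pi_{FF}\mathcal{H}_\Lambda$. Hence, using the original reconstruction $\mathcal{V} O = \tilde{O} \mathcal{V}$ with $\tilde{O} \in \mathcal{A}_\mathcal{B}(R)$,
\begin{equation*}
\mathcal{V}' O' = \mathcal{V} \mathcal{G}^\dagger O' = \mathcal{V} O \mathcal{G}^\dagger = \tilde{O} \mathcal{V} \mathcal{G}^\dagger = \tilde{O} \mathcal{V}',
\end{equation*}
and $[\tilde O, \mathcal{V}'\mathcal{V}'^\dagger] = [\tilde O, \mathcal{V}\mathcal{G}^\dagger\mathcal{G}\mathcal{V}^\dagger] = [\tilde O, \mathcal{V}\mathcal{V}^\dagger] = 0$. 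The gauge/global duality then follows directly from \cref{thm:gauge-map-duality}: taking adjoints of $\mathcal{G} U_V(h) = A_{V_0}(h) \mathcal{G}$ and using that $U_V$ and $A_{V_0}$ are unitary representations yields $U_V(h)\mathcal{G}^\dagger = \mathcal{G}^\dagger A_{V_0}(h)$ on $\Pi_{FF}\mathcal{H}_\Lambda$, so composing with the original global/global duality $\tilde{U}_{V_0}(h)\mathcal{V} = \mathcal{V} U_V(h)$ gives $\mathcal{V}' A_{V_0}(h) = \tilde{U}_{V_0}(h)\mathcal{V}'$, exactly \cref{def:gauge-global}.

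The main obstacle I anticipate is not the symmetry or isometry statements, which are essentially formal once \cref{thm: gauging-map-image-is-flux-free,thm:gauge-map-duality} are in hand, but rather the careful bookkeeping for the entanglement wedge algebra: one must verify that the dressing paths supplied by \cref{thm: operator-dressing-theorem} can always be taken inside $\mathcal{E}(R)$ (which is where property~2 of near-boundary probing enters), and that the reverse-dressing direction is strong enough to pull arbitrary physical operators on $\interior{\mathcal{E}(R)}$ back through $\mathcal{G}^\dagger$ to ungauged operators belonging to $\mathcal{A}_{\mathcal{E}(R)}$. Getting this compatibility clean is what makes the choice of $\mathcal{A}'_{\mathcal{E}(R)}$ nontrivial, but it is precisely the content of the dressing/undressing dictionary developed in \cref{subsec: dressing-local-operators}.
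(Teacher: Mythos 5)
Your proposal is correct and follows essentially the same route as the paper: the encoding map $\mathcal{V}'=\mathcal{V}\circ\mathcal{G}^\dagger$, isometry via \cref{thm: gauging-map-is-isometry,thm: gauging-map-image-is-flux-free}, the duality via \cref{thm:gauge-map-duality}, and the required inclusion $\mathcal{A}_{\mathcal{T}'}(\interior{\mathcal{E}(R)})\subseteq\mathcal{A}'_{\mathcal{E}(R)}$ via the reverse-dressing result (\cref{thm: undressing-local-ops}) applied with $\Gamma=\interior{\mathcal{E}(R)}$. The one detail to fix in your closing paragraph is which connectivity hypothesis does the work: for this (gauging) direction the forward-dressing paths inside $\mathcal{E}(R)$ are not what is needed; rather, the hypothesis of \cref{thm: undressing-local-ops} requires every vertex \emph{outside} $\Gamma$ to have a path to $V_0$ avoiding $\Gamma$, which the paper obtains from the requirement that the complement map $R\mapsto\mathcal{E}(R^c)^c\cup\exterior{\mathcal{E}(R^c)}$ is near-boundary probing, applied to $R^c$.
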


\begin{proof}
The proof is carried out by constructing the promised holographic code from the given one. 
Let $\Pi_{GI}\mathcal{H}'_\Lambda$ denote the gauge-invariant Hilbert space of the system obtained by gauging $(\mathcal{T}, G, U_{v\in V})$ as per \cref{def: gauging-definition}, and let $\mathcal{G}:\mathcal{H}_\Lambda\rightarrow\Pi_{GI}\mathcal{H}'_\Lambda$ be the gauging map from \cref{def: gauging-map}.
Since $\mathcal{G}$ is invertible on the flux-free sector via \cref{thm: gauging-map-is-isometry,thm: gauging-map-image-is-flux-free}, the map $\mathcal{V}'\equiv\mathcal{V}\circ\mathcal{G}^{\dagger}$ isometrically encodes the Hilbert space of $\mathcal{T}'$ into that of $\mathcal{B}$.
Furthermore, this map exhibits a global/gauge duality as per \cref{def:gauge-global} since $\mathcal{V}\mathcal{G}^{\dagger}A_{V_0}(g)=\mathcal{V}U_V(g)\mathcal{G}^{\dagger}=\tilde{U}_{V_0}(g)\mathcal{V}\mathcal{G}^{\dagger}$.
It remains only to show that we can choose $\mathcal{A}'_{\mathcal{E}(R)}$ to include $\mathcal{A}_{\mathcal{T}'}(\interior{\mathcal{E}(R)})$, i.e.\ any operator $O'\in\mathcal{A}_{\mathcal{T}'}(\interior{\mathcal{E}(R)})$ can be reconstructed on $R\subseteq  V_0$.
Since $O'$ is supported on $\interior{\mathcal{E}(R)}$, we can use \cref{thm: undressing-local-ops} with $\Gamma = \interior{\mathcal{E}(R)}$, to conclude that there exists an operator $O$ supported on the vertices of $\mathcal{E}(R)$ such that $\mathcal{G}^{\dagger}O'=O\mathcal{G}^{\dagger}$. 
The fact that $\Gamma$ satisfies the assumptions of \cref{thm: undressing-local-ops} follows from our requirement that $R \mapsto \mathcal{E}(R^c)^c$ is near-boundary probing, applied to $R^c$; namely, it guarantees that after removal of $\mathcal{E}(R)$, all Bulk vertices remain path-connected to sites in $R^c$.
Now, by definition of a holographic code there must exist $\tilde{O}\in\mathcal{A}_{\mathcal{B}}(R)$ that preserves the image of $\mathcal{V}$ such that $\mathcal{V}O=\tilde{O}\mathcal{V}$.
Thus $\mathcal{V}\circ\mathcal{G}^{\dagger}O'=\tilde{O}\mathcal{V}\circ\mathcal{G}^{\dagger}$.
Finally, it preserves the image of $\mathcal{V}^\prime$ since $\mathcal{G}^\dagger$ is surjective.


\end{proof}
We can show an example of this construction in the context of a HaPPY code.
\begin{example}
\label{example: gauging-HaPPY}
Consider the HaPPY code. It is a holographic code with an unconstrained Bulk system and trivial edge degrees of freedom.
Furthermore, as discussed in \cref{subsubsec: holocodes-example-happy} it exhibits a global/global symmetry duality with group $G=\mathbb{Z}_2$. 
Thus we can apply the above reasoning to gauge HaPPY with respect to this $\mathbb{Z}_2$ symmetry -- that is, to produce a holographic code with a $\mathbb{Z}_2$ fixed sector gauge/global duality. 
\end{example}

\subsubsection{Comment on entanglement entropy}

A primary motivation for including gauge-like degrees of freedom ``living on the edges'' in the LOTE code from Ref.~\cite{Marolf} was to capture higher order corrections to the entanglement entropy than was possible in HaPPY. 
In this subsection we examine the entanglement entropy of gauged HaPPY and find that no similar correction term linear in the bulk state appears, as the contribution to entanglement entropy from edge degrees of freedom on the gauge invariant Hilbert space is stripped off as the bulk state passes through the ungauging map. 

The entanglement entropy of a holographic code 
constructed from the above theorem is most easily understood by viewing the gauged code as a conventional holographic code with encoding isometry $\mathcal{V}$ with the ungauged state $\mathcal{G}^{\dagger}\ket{\psi_{\text{Bulk}}}$ in the Bulk, where $\ket{\psi_{\text{Bulk}}}$ is a gauge invariant Bulk state within the zero flux sector. 
For simplicity consider a decomposition of the Boundary into $R,R^{c}$ with respective Bulk entanglement wedges $\mathcal{E}(R),\mathcal{E}(R)^c$, i.e.\ satisfying complementary recovery. This is realized by certain regions for example in the HaPPY code using the greedy entanglement wedge~\cite{HaPPY,noncliffords}. 
The entanglement entropy of the boundary region $R$ is  
\begin{align}
\label{eq:HoloCodeRT}
    S([\mathcal{V} \rho \mathcal{V}^\dagger]_R) = S(\rho_{\mathcal{E}(R)})  +  |\gamma_R|\ln \chi  \, , 
\end{align}
where $\rho = \mathcal{G}^\dagger \ket{\psi_{\text{Bulk}}}\bra{\psi_{\text{Bulk}}} \mathcal{G}$ is the ungauged Bulk state, and $\gamma_R$ is the RT surface for $R$, counted in units of how many legs with bond dimension $\chi$ are cut in the holographic tensor network. From this formula we see that the Bulk gauge fields do not directly contribute to the boundary entropy as only the ungauged Bulk state appears in Eq.~\eqref{eq:HoloCodeRT}. 
More generally when complementary recovery is not strictly satisfied, as is ubiquitous in exact holographic codes, there may be some residual region not contained within the entanglement wedge of $R$ or $R^c$ which leads to a further correction. 


In Appendix C of Ref.~\cite{VanAcoleyen2016} a formula was derived for the entanglement entropy of a gauged state in terms of the ungauged state and a state independent perimeter term (due to the gauge degrees of freedom), assuming the gauging map is abelian and has no NGC vertices.
The explicit formula is 
\begin{align}
\label{eq:GaugedS}
    S( \rho^\mathcal{G}_A ) = (|\partial A|-1) \ln |G| + S(\rho_A)
\end{align}
where $\rho_A = \tr_{A^c} (\ket{\psi} \bra{\psi} )$, $\rho^\mathcal{G}_A = \tr_{A^c}(\mathcal{G} \ket{\psi} \bra{\psi} \mathcal{G}^\dagger)$, and $|\partial A|$ counts the number of edges between $A$ and $A^c$. 
In this case the entropy of the ungauged state clearly differs from the entropy of the gauged state by a term that is proportional to the boundary. This boundary term is of a different nature to the boundary term appearing in the calculation of entanglement entropy in Ref.~\cite{Marolf} as it does not depend on the Bulk state. 
There is an additional complication, as the simple formula above does not straightforwardly generalize to the family of gauging maps considered here due to the presence of NGC vertices. 
Thus, we do not draw an explicit comparison with the entropy decomposition found in Ref.~\cite{Marolf} for the gauging maps we consider.

\subsection{Ungauging: gauge/global \texorpdfstring{$\implies$}{→} global/global }
We can also show a result that mirrors \cref{thm:global_global_to_gauge_global}, by constructing a code with a global/global duality given a theory with either a fixed sector or full gauge/global duality. The intuitive picture is similar, and is shown in \cref{fig:ungauging-collective}. We prove the statement formally in theorem~\ref{thm:gauge_global_to_global_global}, and exemplify it with example~\ref{example: ungauging-HaPPY}. 

\begin{figure}
     \centering
     \begin{subfigure}[t]{0.46\textwidth}
         \centering
\includegraphics[width=1.1\textwidth]{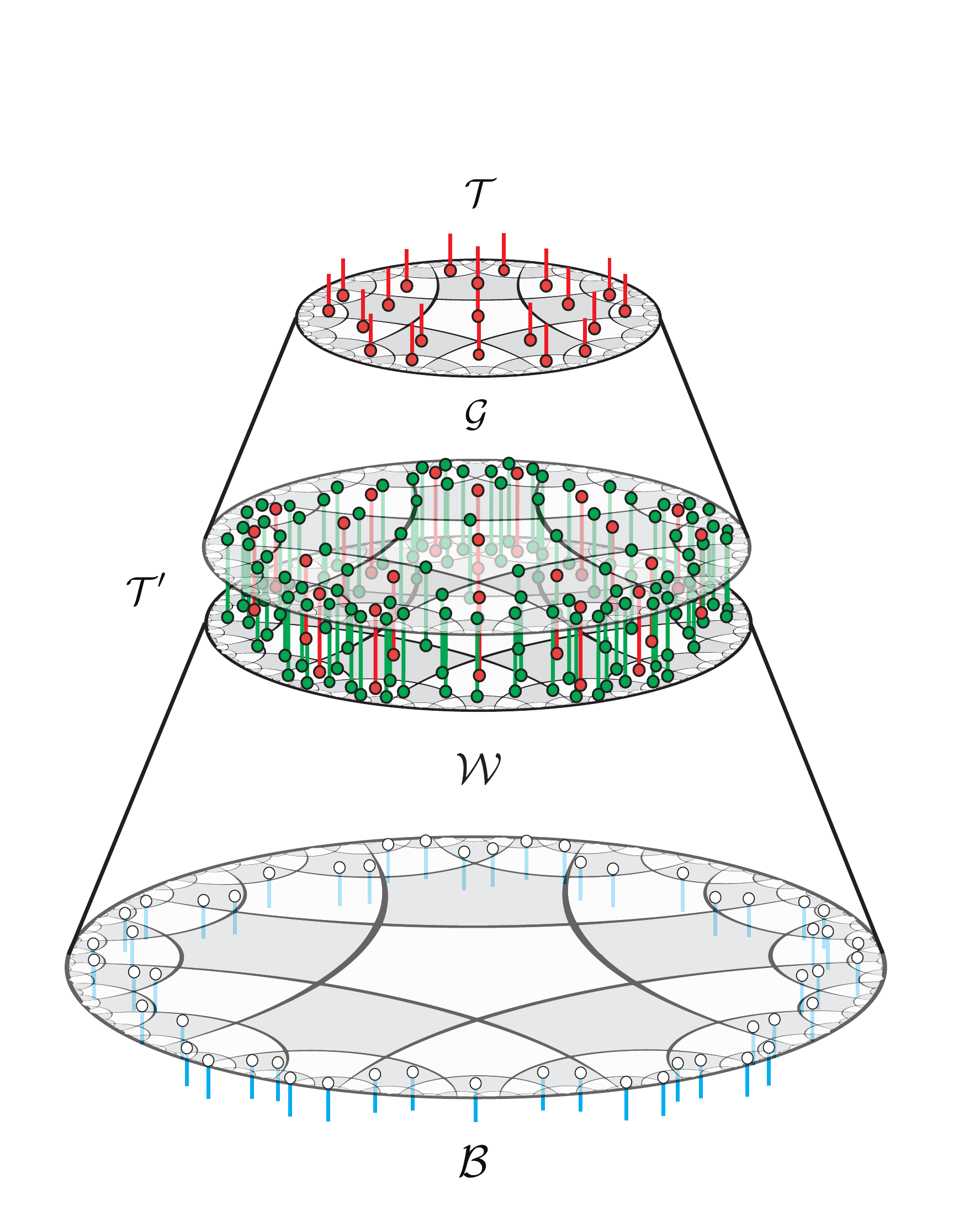}
         \caption{Depiction of $\mathcal{W} \circ \mathcal{G}$, the encoding isometry of the code given by \cref{thm:gauge_global_to_global_global}.
}
         \label{fig:}
     \end{subfigure}
     \hfill
     \begin{subfigure}[t]{0.23\textwidth}
         \centering
         \includegraphics[width=\textwidth]{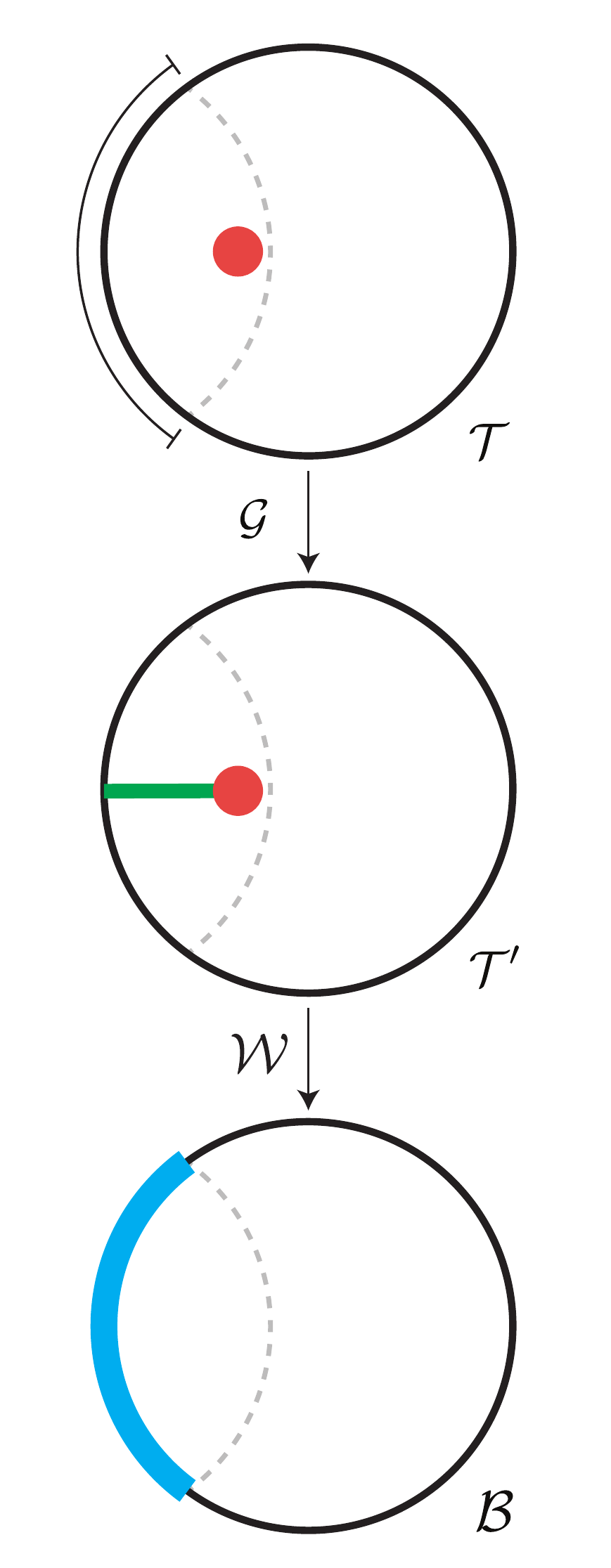}
         \caption{$\mathcal{W} \circ \mathcal{G}$ inherits holographic reconstruction from $\mathcal{W}$.}
         \label{fig:ungauging-flan}
     \end{subfigure}
     \hfill
     \begin{subfigure}[t]{0.23\textwidth}
         \centering
         \includegraphics[width=\textwidth]{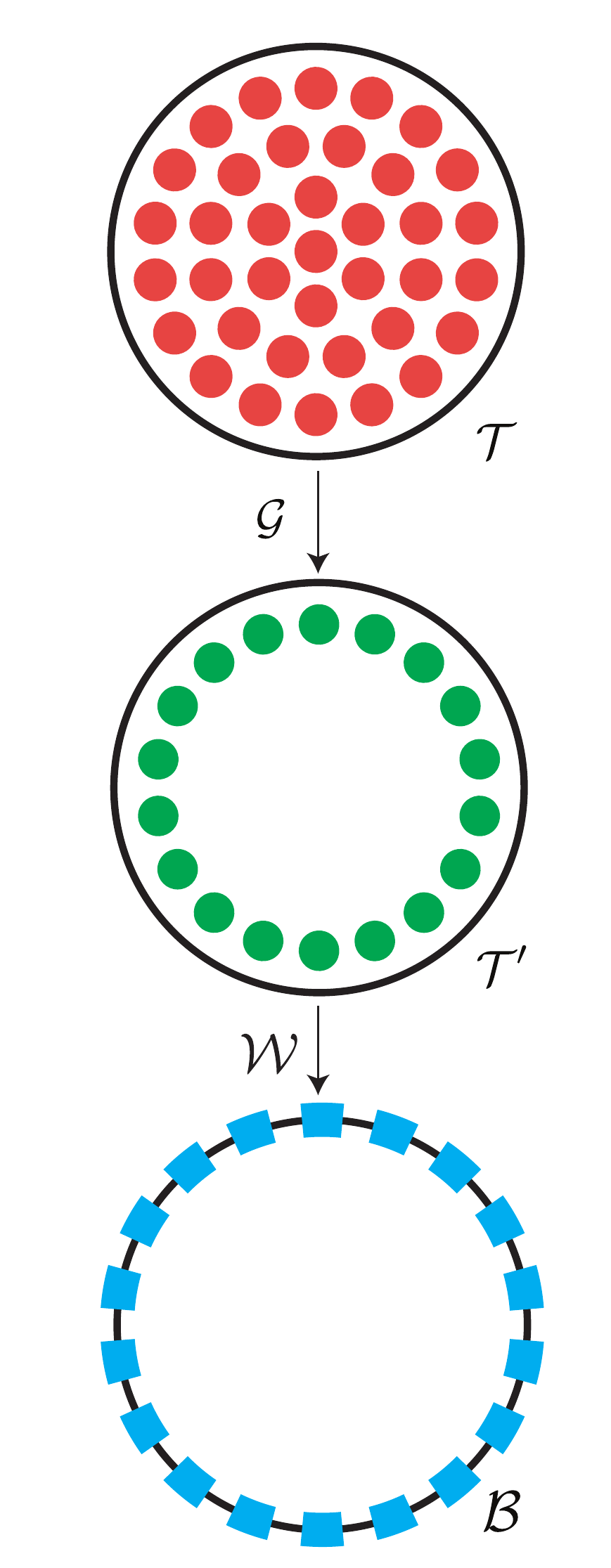}
         \caption{$\mathcal{W} \circ \mathcal{G} $ exhibits a global/global duality.}
         \label{fig:ungauging-ewedge-preserved}
     \end{subfigure}
        \caption{Basic components of the proof that any holographic code with a gauge/global symmetry duality can be converted into one with a global/global symmetry duality via \cref{thm:gauge_global_to_global_global}. The structure is essentially the same as in  \cref{thm:global_global_to_gauge_global} except that the $\mathcal{T}$' may now have a full rather than fixed sector gauge-invariance.}
        
        \label{fig:ungauging-collective}
\end{figure}

\begin{theorem}
\label{thm:gauge_global_to_global_global}
Suppose there exists a holographic code $(\mathcal{T}',\Lambda,\alpha,\mathcal{B},\mathcal{E},\{\mathcal{A}_{\mathcal{E}(R)}\},\mathcal{W})$ such that $\mathcal{T}'=(\mathcal{U}',\Pi)$ is a constrained system obtained by gauging a system with global symmetry $(\mathcal{T},G,\{U_v\})$\footnote{Not to be confused with a system obtained by applying the gauging map $\mathcal{G}$; see discussion after \cref{def: gauging-map}.}, potentially further constrained to its flux-free sector, i.e. $\Pi=\Pi_{GI}$ or $\Pi=\Pi_{FF}$. Suppose $\mathcal{W}$ exhibits a gauge/global duality with some bulk global symmetry $(\mathcal{B},G,\{\tilde U_v\}_{v\in V_0})$. Then there exists a holographic code $(\mathcal{T},\Lambda,\alpha,\mathcal{B},\mathcal{E},\{\mathcal{A}'_{\mathcal{E}(R)}\},\mathcal{W}')$ with $\mathcal{T}$ unconstrained and with trivial edge degrees of freedom which exhibits a global/global symmetry duality.
\end{theorem}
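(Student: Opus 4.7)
The plan is to set $\mathcal{W}' \defi \mathcal{W} \circ \mathcal{G}$, where $\mathcal{G}$ is the gauging map from \cref{def: gauging-map} associated to the ungauged system $(\mathcal{T},G,\{U_v\})$, with trivial $\mathcal{H}_v$ on the $V_0$ vertices. By \cref{thm: gauging-map-is-isometry,thm: gauging-map-image-is-flux-free}, $\mathcal{G}$ is an isometry from $\mathcal{H}_V$ onto the flux-free sector $\Pi_{FF}\mathcal{H}_\Lambda$, which sits inside $\Pi\mathcal{H}_\Lambda$ whether $\Pi=\Pi_{FF}$ or $\Pi=\Pi_{GI}$. Thus $\mathcal{W}'$ is a well-defined isometry from $\mathcal{H}_V$ into $\tilde{\mathcal{H}}_{V_0}$ that serves as the encoding of the new code; I would keep the graph, labeling, Boundary system, and entanglement wedge map inherited from the given code, and take $\mathcal{A}'_{\mathcal{E}(R)} \defi \mathcal{A}_{\mathcal{T}}(\mathcal{E}(R))$.

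Next I would check entanglement wedge reconstruction for an arbitrary $O\in\mathcal{A}_\mathcal{T}(\interior{\mathcal{E}(R)})$. Since the new Bulk has trivial edge Hilbert spaces, $O$ is supported on vertices in $V_1\cap\interior{\mathcal{E}(R)}$, so by linearity it suffices to handle $O = O_u$ on a single vertex $u$. Apply \cref{thm: operator-dressing-theorem} to produce a dressed, gauge-invariant operator $O_\Gamma$ satisfying $O_\Gamma \mathcal{G} = \mathcal{G} O_u$ and $[O_\Gamma,\mathcal{G}\mathcal{G}^\dagger]=0$, where $\Gamma$ is any path of $V_1$ vertices (together with its edges) running from $u$ to some $v_0 \in V_0$. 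The key point is that near-boundary probing of $\mathcal{E}$ guarantees $u$ can be connected to some $v_0 \in R$ by a path lying in $\mathcal{E}(R)\cup R$; conditions~1 and~3 of \cref{def:near-boundary-probing - B}, combined with the dangling-edge structure of \cref{def:dangling-edges}, then force every element of $\Gamma$ to sit in $\interior{\mathcal{E}(R)}$. Hence $O_\Gamma \in \mathcal{A}_{\mathcal{T}'}(\interior{\mathcal{E}(R)}) \subseteq \mathcal{A}_{\mathcal{E}(R)}$, and entanglement wedge reconstruction for the given code produces $\tilde{O}\in\mathcal{A}_\mathcal{B}(R)$ with $\tilde{O}\mathcal{W} = \mathcal{W}O_\Gamma$ and $[\tilde{O},\mathcal{W}\mathcal{W}^\dagger]=0$. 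Composing gives $\tilde{O}\mathcal{W}'=\mathcal{W}'O$; preservation of the image of $\mathcal{W}'$ follows by applying the same construction to $O^\dagger$ and using that $\mathcal{G}$ is surjective onto its image.

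The duality statement is a one-line composition. Equip $\mathcal{T}$ with the original $\{U_v\}_{v\in V}$ and the Boundary with the $\{\tilde{U}_v\}_{v\in V_0}$ coming from the gauge/global hypothesis on $\mathcal{W}$. Then for each $g\in G$, \cref{thm:gauge-map-duality} gives $\mathcal{G}U_V(g) = A_{V_0}(g)\mathcal{G}$, and by hypothesis $\mathcal{W} A_{V_0}(g) = \tilde{U}_{V_0}(g)\mathcal{W}$, so
\begin{align*}
\mathcal{W}'U_V(g) \;=\; \mathcal{W}\mathcal{G}U_V(g) \;=\; \mathcal{W}A_{V_0}(g)\mathcal{G} \;=\; \tilde{U}_{V_0}(g)\mathcal{W}\mathcal{G} \;=\; \tilde{U}_{V_0}(g)\mathcal{W}',
\end{align*}
which is the required global/global duality.

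The main obstacle, as in the twin result \cref{thm:global_global_to_gauge_global}, is showing that the dressing path $\Gamma$ stays inside $\interior{\mathcal{E}(R)}$ rather than leaking into $\exterior{\mathcal{E}(R)}$ or into the complement of $\mathcal{E}(R)$ altogether. This is exactly what the near-boundary-probing hypothesis on $\mathcal{E}$ is designed to control, but verifying it requires a careful graph-theoretic case analysis treating interior vertices, internal edges, and the terminal dangling edge separately. Once that graph-theoretic lemma is in hand, the remaining work --- the isometry check, the duality computation, and the reduction from general to single-vertex operators --- is essentially formal.
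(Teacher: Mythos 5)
Your proposal is correct and follows essentially the same route as the paper: define $\mathcal{W}'=\mathcal{W}\circ\mathcal{G}$, obtain the global/global duality by composing $\mathcal{G}U_V(g)=A_{V_0}(g)\mathcal{G}$ with the assumed gauge/global duality of $\mathcal{W}$, and establish entanglement wedge reconstruction via \cref{thm: operator-dressing-theorem} together with the path-connectedness clause of the near-boundary-probing condition to keep the dressing inside $\interior{\mathcal{E}(R)}$. The paper disposes of the graph-theoretic step you flag as the main obstacle in a single sentence (the path exists in $\mathcal{E}(R)$ by near-boundary probing and lies in the interior because $O$ is supported only on vertices), but the substance of the argument is identical.
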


\begin{proof}
Let $\mathcal{G}:\mathcal{H}_\Lambda\rightarrow\Pi\mathcal{H}'_\Lambda$ be the gauging map for the Hilbert space $\mathcal{H}_\Lambda$ associated with system $\mathcal{T}$.
The map $\mathcal{W}'=\mathcal{W}\circ\mathcal{G}$ isometrically encodes the Hilbert space of $\mathcal{T}$ into that of $\mathcal{B}$.
$\mathcal{W}'$ has a global/global duality since $\mathcal{W}\mathcal{G}U_{V}(g)=\mathcal{W}A_{V_0}(g)\mathcal{G}=\tilde{U}_{V_0}(g)\mathcal{W}\mathcal{G} $, and by assumption, $\tilde{U}_{V_0}(g)$ preserves the image of $\mathcal{W}$.
In the last equation, we used that if $\Pi = \Pi_{GI} $ or $\Pi_{FF}$, the asymptotic symmetry transformation $A_{V_0}$ is a physical operator and thus can be reconstructed on the Boundary.

It remains to show that this construction is a valid holographic code, i.e.\ for any Boundary region $R\subseteq  V_0$ any operator $O\in\mathcal{A}_{\mathcal{T}}(\interior{\mathcal{E}(R)})$ can be reconstructed on $R$.
In this case, the interior restriction is irrelevant as there are no edge degrees of freedom.
By \cref{thm: operator-dressing-theorem} there exists an operator $O'$ such that $\mathcal{G}O=O'\mathcal{G}$ with the same support as $O$ plus an arbitrary dressing along a path to the Boundary.
By definition of a holographic code (namely the path-connected requirement for a near-boundary probing map), there exists such a path purely contained in $\mathcal{E}(R)$.
Furthermore, it can be chosen to be in $\interior{\mathcal{E}(R)}$ because the support of $O$ only contains vertices.
Choosing the dressing to have support along this path, we have that $O'$ can be reconstructed on $R$, i.e.\ there exists an operator $\tilde{O}\in\mathcal{A}_\mathcal{B}(R)$ such that $\mathcal{W}O'=\tilde{O}\mathcal{W}$.
Thus $\mathcal{W}\mathcal{G}O=\tilde{O}\mathcal{W}\mathcal{G}$. Finally, one can also show that $\mathcal{W}O'^\dagger=\tilde{O}^\dagger\mathcal{W}$ and thus $\tilde{O}$ preserves the image of $\mathcal{W}$.
\end{proof}

\begin{example}
\label{example: ungauging-HaPPY}
The $gauge/global$ code obtained by gauging the HaPPY code, as in example \ref{example: gauging-HaPPY}, can be ungauged to re-obtain HaPPY and its $\mathds{Z}_2$ symmetry.
\end{example}

\begin{example}
\label{example: ungauging-LOTE}
Consider the gauged LOTE code.
It is a holographic code with a constrained Bulk system, whose Hilbert space is one obtained by gauging a system with global symmetry, and whose encoding isometry, by \cref{thm: full_gauge_to_gauge_global}, exhibits a gauge/global duality.
Thus it can be ungauged to obtain a system with a global/global duality.
\end{example}

\subsubsection{Comment on entanglement entropy}
Let us briefly discuss what can be said about the entanglement entropy of boundary regions in a code constructed via \cref{thm:gauge_global_to_global_global}.
If one gauges and ungauges in succession, i.e.\ applies \cref{thm:global_global_to_gauge_global} and then \cref{thm:gauge_global_to_global_global}, the entropy simply has the same bulk/boundary relationship as in the original holographic code
(note with the boundary conditions we pick there is no projector onto the symmetric sector due to gauging and ungauging i.e. $\mathcal{G}^\dagger \mathcal{G}=\mathbbm{1}$).

For a less trivial example, consider a holographic code with gauge/global duality produced by projecting a Bulk pre-gauged Hilbert space onto the gauge invariant subspace, such as gauged LOTE. 
The holographic encoding isometry of the LOTE code~\cite{Marolf} is given by composing six copies of the HaPPY code isometry, $\mathcal{V}$, with an isometry associated to each edge of the Bulk, $W^{\otimes |E|}$ illustrated in Fig.~\ref{fig:LOTE-construction} (b), to form a combined encoding isometry $\mathcal{V}^{\otimes 6}W^{\otimes |E|}$. In what follows we allow each $\mathcal{V}$ to be a stack of multiple HaPPY codes to achieve the bond dimension required to support the group representations of a finite group $G$ we consider. 

For simplicity of presentation, in this subsection we choose the edge isometry to be the group multiplication tensor
\begin{align}
    W = \frac{1}{\sqrt{|G|}} \sum_{g,h\in G} \ket{g,h} \bra{ g h^{-1} } \, ,
\end{align}
which satisfies 
\begin{align}
\label{eq:WULRG}
    \big(U^L(g)\otimes \mathbbm{1}\big) W = W U^{L}(g) \, ,
    &&
    \big(\mathbbm{1} \otimes U^L(g)\big) W = W U^{R}(g) \, ,
    &&
    W \ket{1} = \ket{\Phi} \, ,
\end{align}
where 
\begin{align}
    \ket{\Phi} =  \frac{1}{\sqrt{|G|}} \sum_{g \in G} \ket{g,g} \, ,
\end{align}
is a maximally entangled state. 
The tensor $W$ on edge $e$ is oriented such that left multiplication moves through to act on the vertex $e^-$ and right multiplication on $e^+$.

Projecting the bulk of LOTE onto the gauge invariant subspace produces a holographic code with encoding map $\mathcal{V}^{\otimes 6}W^{\otimes |E|}\Pi_{GI}$ that exhibits gauge/global duality. 
Ungauging the bulk produces a new code with encoding map $\mathcal{V}^{\otimes 6}W^{\otimes |E|}\mathcal{G}$ that exhibits global/global duality. 

The entropy of a boundary region $R$ that satisfies complementary recovery in the ungauged LOTE code is similar to Eq.~\eqref{eq:HoloCodeRT},
\begin{align}
\label{eq:HoloCodeRT3}
    S([\rho_{\text{Bdry}}]_R) &= S([ \rho'_{\text{Bulk}'} ]_{\mathcal{E}( R)}) + |\gamma_R| \ln \chi \, 
    \\
    &= S([\rho_{\text{Bulk}}]_{\mathcal{E}(R)}) + |\gamma_{R}| \ln |G| + |\gamma_R| \ln \chi
\end{align}
where $\chi$ measures the total bond dimension of the stack of HaPPY tensor networks and 
\begin{align}
    \rho_{\text{Bdry}} &= |G|^{|V|} \, \mathcal{V}^{\otimes 6}W^{\otimes |E|}\mathcal{G}\, \rho_{\text{Bulk}} \mathcal{G}^\dagger W^{\dagger \otimes |E|}   \mathcal{V}^{\dagger \otimes 6} \, ,
    \\
    \rho'_{\text{Bulk}'} &= |G|^{|V|}\,  W^{\otimes |E|}\mathcal{G}\, \rho_{\text{Bulk}} \mathcal{G}^\dagger W^{\dagger \otimes |E|} \, , 
\end{align}
for $\rho_{\text{Bulk}}$ the state on the ungauged Bulk Hilbert space. The prefactors $|G|^{|V|}$ are included to correctly normalize the state. 
We point out that the $|\gamma_{R}| \ln |G|$ term in the entropy can be absorbed into the RT term by making the coefficient $(\ln \chi+\ln |G|)$, which reflects an increase in the bond dimension of the encoding isometry due to the gauging map $\mathcal{G}$. 

To explain the decomposition of the $S([ \rho'_{\text{Bulk}'}]_{\mathcal{E} (R)})$ term on the second line of Eq.~\eqref{eq:HoloCodeRT3} it is useful to first rewrite the $W^{\otimes |E|}\mathcal{G}$ part of the encoding map by making use of the properties of $W$ introduced above in Eq.~\eqref{eq:WULRG}
\begin{align}
    W^{\otimes |E|}\mathcal{G}  = W^{\otimes |E|} \Pi_{GI} \bigotimes_{e} \ket{1}_e 
    = \Pi_{GI}' \bigotimes_{e} W \ket{1}_e 
    = \Pi_{GI}' \bigotimes_{e} \ket{\Phi}_e \, ,
    \label{eq:WGIPhi}
\end{align}
for 
\begin{align}
    \Pi_{GI}' = \prod_{v} \Pi_v' \, ,
    &&
    \Pi_v' = \int dg\,  U_v(g) \otimes U^L(g)^{\otimes 5} \, ,
\end{align}
where $\Pi_v'$ acts on the six legs of the HaPPY copies at vertex $v$ in the Bulk. 
Next, we note
\begin{align}
    [ \rho'_{\text{Bulk}'}]_{\mathcal{E} (R)} &= |G|^{|V|}
    \Tr_{\mathcal{E}(R)^c} \big( \Pi_{GI}' 
     \, \rho_{\text{Bulk}} \, \bigotimes_{e} \ket{\Phi}\bra{\Phi}_e \Pi_{GI}' \big) \, , \\
     &= |G|^{|\mathcal{E}(R)|} \Pi'_{\mathcal{E}(R)} \, [\rho_{\text{Bulk}}]_{\mathcal{E}(R)} \, \bigotimes_{e\in \mathcal{E} (R)} \ket{\Phi}\bra{\Phi}_e 
     \bigotimes_{e \in \gamma_R} \frac{1}{|G|} \mathbbm{1}_{v_e} \,
     \Pi'_{\mathcal{E}(R)}
     \, ,
\end{align}
where $v_e$ denotes the vertex in $\mathcal{E}(R)$ that is adjacent to $e$, and $\Pi'_{\mathcal{E}(R)} = \prod_{v\in \mathcal{E}(R)} \Pi'_v$.
We have used \cref{eq:WGIPhi} and successive applications of $\bra{\Phi} U^{L/R}(g) \otimes \mathbbm{1} \ket{\Phi}=\delta(g)$ starting from the rough boundary conditions to fix all the group elements from $\Pi_{GI}'$ acting on vertices within $\mathcal{E}(R)^c$ to identity. 

We now make use of the equality
\begin{align}
    \Tr([ \rho'_{\text{Bulk}'}]_{\mathcal{E} (R)}^n) = 
    \Tr ([\rho_{\text{Bulk}}]^n_{\mathcal{E}(R)}\bigotimes_{e \in \gamma_R} \frac{1}{|G|^n} \mathbbm{1}_{v_e})
\end{align}
which follows from the observation above that all group elements from $\Pi'_{\mathcal{E}(R)}$s appearing on the LHS are fixed to the identity by $\ket{\Phi}$ states. 
To finally arrive at the desired decomposition of the bulk entropy term 
\begin{align}
    S( [ \rho'_{\text{Bulk}'}]_{\mathcal{E} (R)}) 
    &= S([\rho_{\text{Bulk}}]_{\mathcal{E}(R)}\bigotimes_{e \in \gamma_R} \frac{1}{|G|} \mathbbm{1}_{v_e}) \, ,
    \\
    &= S([\rho_{\text{Bulk}}]_{\mathcal{E}(R)}) + |\gamma_{R}| \ln |G| \, .
\end{align}

\subsection{Connection to Harlow-Ooguri}
\label{subsec:connection-to-harlow-ooguri}

So far we have frequently discussed holographic codes with global/global symmetry dualities. This notion may cause alarm to those familiar with the results of Ref.~\cite{harlow-ooguri}, which show that AdS/CFT cannot exhibit such dualities. Adding further to the confusion is that these results appear to rely only on precisely those aspects of AdS/CFT that holographic codes, which certainly can exhibit such dualities, manage to capture. In Ref.~\cite{Faist_2020} the subtle but key distinction between the two was pointed out: for holographic codes, the action of Boundary global symmetries on restricted regions need not preserve the code subspace. We now argue that this distinction is potentially superficial, and propose a more complete criteria by which to judge toy models which deserves further exploration. 

We first clarify an additional subtlety in the distinction between AdS/CFT and holographic codes with global/global dualities. We then use the gauged LOTE code to point out that a holographic code displaying a global/global duality may just mean that its bulk is overly restricted. We proceed by arguing that any global/global holographic code is in fact overly restricted, and that there ought to be a way to lift its Bulk to a full gauge-like symmetry so that a restricted global symmetry keeps it in the code space. We discuss a way to do this for the HaPPY code, proposed in Ref.~\cite{Marolf}, that unfortunately destroys the entanglement wedge reconstruction property. This leads us to pose the question: when is it possible to lift the Bulk of a holographic code to a full gauge invariant Hilbert space while maintaining its holographic properties? Finally, we give a sufficient condition for holographic codes to both display Bulk gauge symmetry and to have restrictions of the Boundary global symmetry preserve the code space.

\subsubsection{Global/global dualities in holographic codes are benign} 

The additional subtlety mentioned above is that the restricted action is only defined up to local operators supported on a neighborhood of the domain walls, that cancel out when restricted actions are multiplied to form a global symmetry action. To state this more precisely we need to introduce the notion of symmetry domain walls for holographic codes.
\begin{definition}
A collection of \textbf{$g$-domain walls} of a global symmetry on the Boundary of a holographic code with encoding isometry $\mathcal V$ are created at the location $\partial R$ by a partial symmetry action $\prod_{v\in R}U_v(g)$ on a subregion $R$ of the Boundary. This defines a new holographic encoding map $\big(\prod_{v\in \mathcal R}U_v(g) \big) \mathcal V$ which contains $g$-domain walls on $\partial R$. 
\end{definition}

\begin{definition}
    The \textbf{superselection sector} within a sufficiently small neighborhood $\mathcal N$ of a $g$-domain wall at $\partial R$ is the equivalence class
    \begin{align}
        \{ \mathcal{D}_{ \mathcal N} \big(\prod_{v\in  R}U_v(g) \big) \mathcal{V} \ |\  \mathcal{D}_{ \mathcal N} \text{ \emph{a unitary operator supported on}  }\mathcal N    \}
        \, .
    \end{align}
\end{definition} 

\begin{remark}
\label{rem:BdryDW}
If a decomposition of the Boundary into disjoint regions $R_i$ is fixed, the restricted symmetry actions $U_{R_i}(g)=\prod_{v \in R_i} U_v(g)$ can be redefined up to any boundary unitaries $\mathcal{D}_{\partial R_i}(g)$ that preserve the global symmetry condition
\begin{align}
    \prod_{i} \big( \mathcal{D}_{\partial R_i}(g) U_{R_i}(g) \big)= U_V(g) \, .
\end{align}
Such redefinitions clearly preserve the superselection sectors of the domain walls created by restricted symmetry actions $\mathcal{D}_{\partial R_i}(g) U_{R_i}(g)$.
\end{remark}

The freedom in redefining the domain wall was effectively used in Ref.~\cite{harlow-ooguri} to argue that the restricted action of any continuous symmetry (or a finite group symmetry embedded into a continuous group) can be made code space preserving because the domain wall can be smeared out so as to become an arbitrarily low energy excitation, which, in AdS/CFT means it can be added to the code space if not already contained. In holographic codes with global/global dualities such as HaPPY, the symmetry domain walls lie in nontrivial superselection sectors, i.e. no redefinition of the domain wall by multiplication with a local operator can return the domain wall to the codespace.
This is guaranteed by the arguments in Ref.~\cite{harlow-ooguri} -- if such a redefinition were possible, such that the restricted symmetry was codespace-preserving, then the arguments presented there would apply and rule out the possibility of a global/global duality, which is inconsistent with the explicit presence of such a duality in HaPPY~\cite{noncliffords}.

The global/global code we have constructed from the gauged LOTE code evades this theorem for precisely the same reason -- its restricted symmetry actions are not codespace-preserving.
However, by looking ``under the hood'' of this model we see that this is only true in a subspace, and that the full ambient Hilbert space does not violate the Harlow-Ooguri theorem. More precisely, the global/global symmetry appears only when we restrict the Bulk system to a fixed-flux sector. This causes the action of a restricted global symmetry transformation to leave the codespace because such an action is mapped to restricted asymptotic symmetry transformations in the Bulk, and these generically change the flux sector from flux-free to an NGC-flux sector (see \cref{subsec: NGC-flux-maps}). As we have seen, we can ungauge these flux sectors using a gauging map tailored to them, and recover a different holographic code. This is consistent with the fact that a restricted boundary global symmetry action can be viewed as a change of basis, and thus should not change the entanglement wedge map. If we return to the full gauge theory description of the bulk (i.e.\ one with constraint $\Pi_{GI}$ rather than $\Pi_{FF}$), then a restricted boundary global symmetry action preserves the codespace because it now includes all flux sectors. 

Interestingly, we can conclude from this that it is impossible to isometrically ungauge a full gauge invariant Hilbert space using a map with the same properties as the original gauging map -- namely, being an isometry, exhibiting a gauge/global duality, and preserving the locality of vertex operators.
Such a map could be composed with HaPPY to create a holographic code with a global/global duality that \textit{does} exhibit all the properties necessary to use the Harlow-Ooguri theorem, meaning it cannot have a global/global duality, which is a contradiction.
In fact, this allows us to draw a general conclusion about gauge theory that is independent of holographic codes: there is no general map that isometrically maps a system with gauge symmetry into an unconstrained system such that the map exhibits a gauge/global duality and preserves the locality of operators supported on the vertices. 

It is natural now to wonder whether the global/global duality of the HaPPY code really just comes from an over-restriction of the Bulk. After all, as we discuss in the next section, even in AdS/CFT there emerges a global/global duality at sufficiently low energies.\footnote{That is, at low energies the emergence of bulk locality makes time evolution look like a spacetime version of a global/global duality (which still violates Harlow-Ooguri), albeit only approximate due to metric fluctuations}
For any global/global holographic code, it still holds that the action of restricted global symmetries on the boundary takes the code out of the codespace, resulting in another holographic code with the same entanglement wedge map as the original. So inside of the code there is secretly something resembling multiple flux sectors. In the case of HaPPY, it is possible to lift the Bulk to a gauge invariant like system so that such an action preserves the code space. This is done by inserting the copying tensor described in \cref{subsubsec: holocodes-example-lote} directly in between the virtual edges of the tensor network, as explained Ref.~\cite{Marolf}. However, in this lifted HaPPY code entanglement wedge reconstruction no longer holds outside of the fixed flux sectors, in contrast with the gauged LOTE code and AdS/CFT. 

We see then that before dismissing global/global holographic codes as irredeemably inaccurate models of AdS/CFT, it can be constructive to first ask whether or not it is possible to lift their Bulk to a gauge-like theory which maintains entanglement wedge reconstruction and has boundary domain walls preserving the codespace. We leave the characterization of conditions under which this is possible for future work. 

\subsubsection{A sufficient condition for gauge symmetry}

For the remainder of this subsection we focus on domain walls created by partial global symmetry actions within the Bulk and on the Boundary, and the consequences for the holographic encoding map when they can be created and removed by local operators. 

Recall that the arguments of Ref.~\cite{harlow-ooguri} imply a contradiction if we have a holographic isometry map $\mathcal{V}$ with an unconstrained Bulk Hilbert space satisfying both the global/global duality condition $\tilde{U}_{\tilde{V}}(g) \mathcal{V} = \mathcal{V} U_V (g)$ and that the restricted symmetry action $\tilde{U}_{\tilde{R}}$ on a region $\tilde{R}$ preserves the codespace, potentially after multiplication with a domain wall operator $\tilde{\mathcal{D}}_{\partial \tilde{R}}$. 
It is further argued in Ref.~\cite{harlow-ooguri} that in AdS/CFT with these assumptions the bulk must support a gauge symmetry. It would certainly be interesting to establish a similar result in the setting of holographic codes satisfying the above two symmetry conditions. However, it is unclear that giving up the unconstrained Bulk assumption is the only way to resolve the tension, it could also be interesting to look into whether abandoning the symmetry condition $\tilde{U}_{\tilde{V}}(g) \mathcal{V} = \mathcal{V} U_V (g)$ while maintaining the holographic structure of the code is possible. We leave these questions to future work and instead identify a natural sufficient condition on a holographic map, that we call \textit{locally removable domain walls}, that implies the two symmetry conditions introduced above. We go on to show that this condition automatically ensures the bulk is constrained to be invariant under local symmetries that we suggest can be interpreted as gauge symmetries. 

We first define symmetry domain walls in the Bulk of a holographic map analogously to the Boundary definition above. 
\begin{definition}
    A collection of $g$\textbf{-domain walls} of a global symmetry in the Bulk of a holographic code with encoding isometry $\mathcal{V}$ are created at the location $\partial {R}$ by a partial symmetry action $\prod_{v\in{R}} {U}_v(g)$ on a subregion ${R}$ of the Bulk. This defines a new holographic encoding map $\mathcal{V} \prod_{v\in{R}} {U}_v(g)$ which contains $g$-domain walls on $\partial {R}$.
\end{definition}

\begin{remark}
Part, or all, of $\partial {R}$ may coincide with the Boundary, we decompose it as ${\partial {R}= (\partial {R})_\partial \sqcup (\partial {R})_\circ}$ for $(\partial {R})_\partial $ the portion of $\partial {R}$ on the Boundary and $(\partial {R})_\circ$ the portion in the Bulk. See Fig.~\ref{fig:LRDW} for an illustration. 
\end{remark}

\begin{remark}
In HaPPY for the $\mathbb{Z}_2$ symmetry generated by $X$ operators the domain walls can be identified explicitly as $X$ operators acting on the virtual and/or Boundary legs of the tensors on $\partial {R}$. A similar property holds for tensor network holographic codes satisfying a local symmetry condition such as in Ref.~\cite{1367-2630-12-2-025010,Singh2013,williamson2014matrix,Bridgeman2017,NewSETPaper2017}, but our definition above does not require a tensor network structure. 
\end{remark}

Domain walls of the Bulk symmetry that coincide with the Boundary can be removed by acting on the boundary. In particular the domain wall of the global action ${U}_{{V}}$ in the bulk covers the whole Boundary and can be removed by the global action $\tilde{U}_{\tilde{V}}^\dagger$ on the Boundary. To see this notice that creating the domain wall from the bulk and then removing it on the boundary preserves the original encoding map $\tilde{U}_{\tilde{V}}^\dagger \mathcal{V} {U}_{{V}} = \mathcal{V}$. 
Hence the Boundary symmetry appears as a domain wall of the bulk symmetry, and consequently domain walls of the Boundary symmetry correspond to twist defects of the Bulk symmetry, i.e. locations where the bulk domain wall terminates, see Refs.~\cite{williamson2014matrix,Bridgeman2017,NewSETPaper2017} for a discussion of such twist defects in tensor networks. 
For a restricted Bulk symmetry action $\prod_{v\in{R}} {U}_v(g)$, any component of its domain wall that lies on the Boundary can be removed by the restricted Boundary symmetry action $\prod_{v\in(\partial {R})_\partial} \tilde{U}_v(g)$, up to some operator in the neighborhood of the twist defect at $\partial (\partial {R})_\partial$ on the Boundary, see Remark~\ref{rem:BdryDW}. 

Next we explore the consequences of a local symmetry condition that requires any domain wall in a holographic map to be removable by some operator acting within a neighborhood of the domain wall, this is analogous to the domain walls lying in trivial superselection sectors. 
\begin{definition}
    A holographic map $\mathcal{V}$ satisfying $ \mathcal{V} {U}_{{V}} = \tilde{U}_{\tilde{V}} \mathcal{V}$ has \textbf{locally removable domain walls} if for any restricted symmetry action $\prod_{v\in{R}} {U}_v(g)$ there exists an operator $\mathcal{D}_\circ(g)$ supported in a neighborhood of the bulk domain wall $(\partial{{R}})_\circ$ and an operator $\mathcal{D}_\partial(g)$ supported in a neighborhood of the boundary twist defect $\partial (\partial{{R}})_\partial $ such that 
    \begin{align}
    \label{eq:RemovableDW}
       \mathcal{D}_\partial(g) \prod_{v\in (\partial {R})_\partial} \tilde{U}_v(g) \,  \mathcal{V} \, \prod_{v\in{R}} {U}_v(g) \mathcal{D}_\circ(g) = \mathcal{V} \, .
    \end{align}
\end{definition}

\begin{figure}[t]
  \centering
    \includegraphics[width=.8\textwidth]{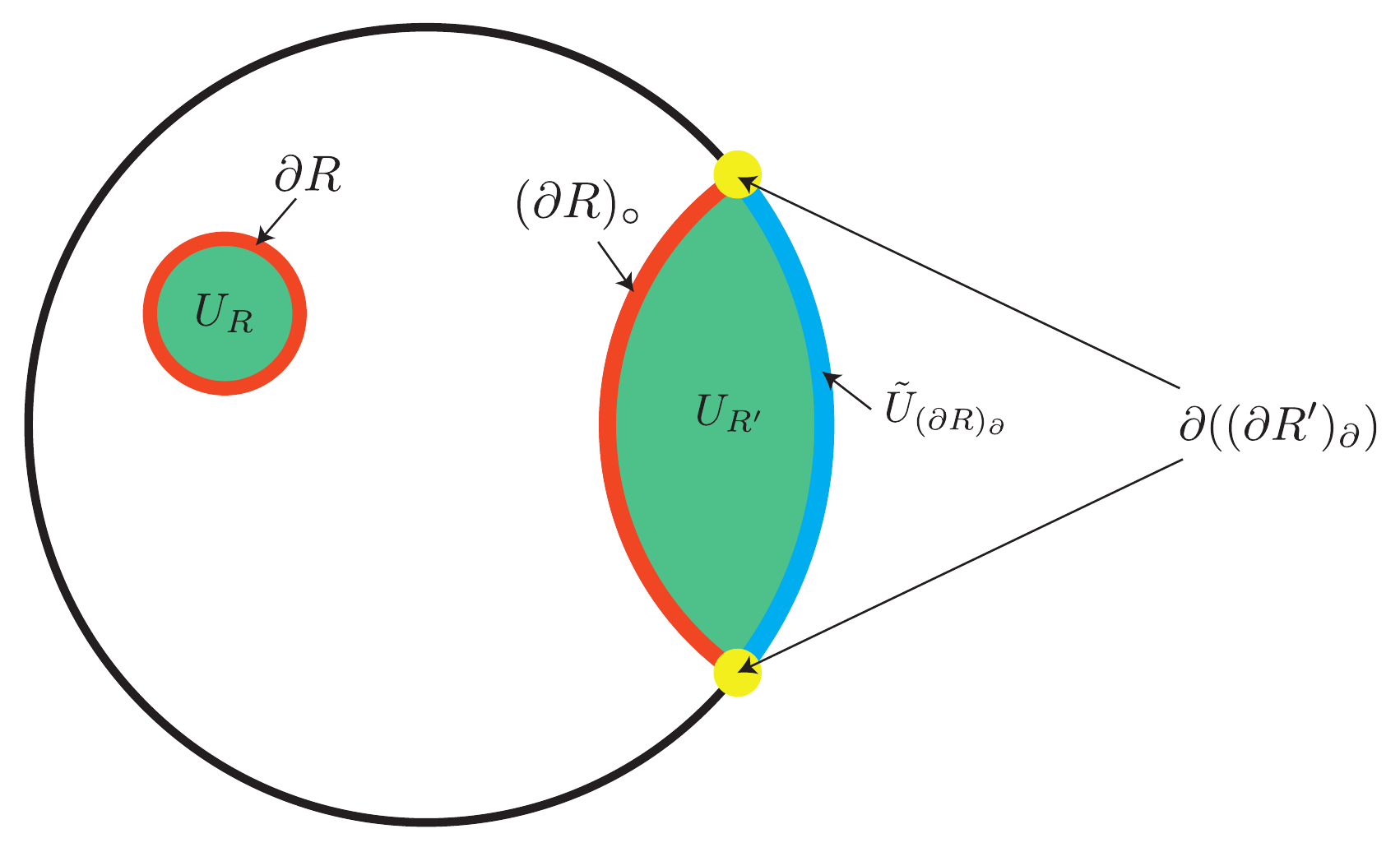}
    \caption{The symmetry action $U_R$ within a Bulk region $R$ introduces a domain wall along a neighborhood of its boundary. A symmetry action $U_{R'}$ in a Bulk region $R'$ along with a Boundary symmetry action $\tilde{U}_{(\partial R)_\partial}$ where $R'$ coincides with the Boundary introduces a domain wall along a neighborhood of the boundary $(\partial R')_\circ$ and a Bulk defect (or Boundary domain wall) in a neighborhood of $\partial ((\partial R')_\partial)$.  }
  \label{fig:LRDW}
\end{figure}

\begin{remark}
For the connected Bulk geometries we consider, it is sufficient to require the above condition to hold only for connected Bulk subregions that overlap nontrivially with the Boundary. By considering multiple Bulk subregions with the same Boundary overlap one can generate all purely Bulk constraints covered by the above definition. 
\end{remark}

\begin{remark}
The condition introduced in the definition above is sufficient to ensure that the domain walls of the Boundary symmetry lie in the trivial superselection sector and hence the arguments of Ref.~\cite{harlow-ooguri} apply. It further implies that the bulk is constrained to be invariant under local symmetries that have a similar form to gauge symmetries. To see this consider Eq.~\ref{eq:RemovableDW} for Bulk regions that consist of single sites.
\end{remark}

\begin{example}
The LOTE code after projection onto a gauge invariant Hilbert space satisfies the above definition, as does the modification of HaPPY via the introduction of edge copy tensors also considered in Ref.~\cite{Marolf}. 
\end{example}

\begin{remark}
A non-example that fails the locally removable domain walls condition is the original HaPPY code. Although the domain walls for the global symmetry generated by $X$ are simply represented by $X$ operators on the virtual indices, there is no way to locally remove them in the bulk, as this would lead to further constraints that are not present in HaPPY. More generally, the above condition must fail for any nontrivial holographic isometry map with an unconstrained Bulk. 
\end{remark}

\subsection{Analogies with gravity}

We now discuss some analogies that can be drawn between the application of the gauging map to make the gauged LOTE code global/global, and two different features of AdS/CFT: time evolution, and the approximate appearance of Bulk locality. In \cref{subsec: building_holocodes_with_arbitrary_dualities} we show how to generalize the gauged LOTE code to any group $G$. We use this here to improve our analogies. 

\subsubsection{Time evolution}
Arguably the most significant challenge holographic codes face in accurately modelling AdS/CFT is the inclusion of dynamics; specifically a \emph{dynamical duality} in which the time evolution in the bulk is equivalent to time evolution on the boundary.
Although proposals for such constructions exist~\cite{Kohler2019,Apel2021}, they fall short of obtaining desired features like having the appropriate speed of information propagation.
Since AdS/CFT is a model of quantum gravity, its Bulk geometry can change with time.
For low enough energy dynamics, the Bulk geometry can remain fixed but matter degrees of freedom may still undergo dynamical evolution. 

We now argue that the global/global code constructed by applying the gauging map to a $G$ gauged LOTE code can be thought of as a model of a simple type of time evolution where all degrees of freedom are stuck in place. Choosing $G=U(1)$, we can write the Boundary global symmetry transformation as $\prod_{v\in V_0}\tilde{U}_v(g)=e^{-i(\sum_v \tilde{H}_v)t}=e^{-iH_{\text{bdry}}t}$ by parametrizing the elements of $U(1)$ with $t$. We can do the same for the Bulk transformation to obtain a Bulk Hamiltonian composed of a sum of single site terms. Thus we have a local boundary Hamiltonian implementing a local Bulk Hamiltonian via a symmetry duality. Since nearby sites are decoupled, the degrees of freedom can not interact and are ``stuck in place''. Thus far this construction is not particularly new, as there already existed codes with a $U(1)$ global/global duality (see Ref.~\cite{Faist_2020}). What is new is that the time evolution comes about from an ambient gauge theory. Before applying the gauging map to the gauged LOTE code, a Boundary global symmetry transformation would implement a Bulk asymptotic symmetry transformation. The latter acts entirely on the boundary of the Bulk system. It is only by restricting the Bulk to a fixed flux sector that the transformation can act in the deep Bulk. The same thing occurs for time evolution in AdS/CFT; it only appears to act deep within the Bulk when we describe the Bulk using some local coordinates in a fixed geometry subspace (i.e.\ we have the appearance of Bulk locality using a ``gravitational'' analog of the gauging map). 

\subsubsection{Noise from metric fluctuations}


Another aspect of AdS/CFT that our model illuminates is why insisting on Bulk locality results in AdS/CFT functioning as only an approximate error correcting code.
In that context, the codespace consists of the Bulk ground state and states obtained by acting on it with a finite number of local operators~\cite{almheiriBulkLocalityQuantum2015}.
In the ground state, a certain fixed geometry dominates in the Bulk path integral, which fixes the metric to leading order.
However, a full description of the ground state includes contributions from fluctuations about this fixed geometry.
When seeking a local description of the system -- that is, a locally generated algebra of observables -- we approximate the ground state and its excitations in the codespace by ignoring these fluctuations and considering just the fixed-geometry background.
It is this approximation which results in the approximate nature of the associated code.

We can construct a conceptual toy model for this from a generalized gauged LOTE code with code isometry $\mathcal{V}_{GL}$ as follows.
First, we artificially introduce fluctuations by constraining the codespace from the full gauge-invariant Hilbert space to a subspace $\Pi_s$ that is slightly perturbed from the flux-free sector, i.e.\ $\Pi_s \approx \Pi_{FF}$, but is still gauge-invariant. 
This is done in such a way that all of states in the image of $\Pi_s$ are close to, but not equal to, a state in the flux-free sector (i.e.\ no states are left unperturbed).
In a gauge constrained system, the algebra of observables is not locally generated, but for a flux-free sector this can be resolved by ungauging the Bulk.
This is done by replacing the code isometry $\mathcal{V}_{GL}$ with $\mathcal{V}_{GL} \circ \mathcal{G}$, with $\mathcal{G}$ the gauging map, so that the input is now an unconstrained Bulk system.
For the perturbed system, we consider instead the coding map $\mathcal{V}=\mathcal{V}_{GL}\circ \Pi_s \circ \mathcal{G}$.
This is only an approximate isometry, as $\Pi_s\approx \Pi_{FF}$ and thus $\Pi_s \circ \mathcal{G}\approx \mathcal{G}$. That it approximately preserves entanglement wedge reconstruction follows from the fact that $\Pi_s \approx \Pi_{FF}$, and since $\Pi_{FF}$ exactly commutes with gauge-invariant operators, $\Pi_s$ does so approximately\footnote{See \cite{flammia} for details on relating approximate reconstruction of observables to traditional definitions of approximate error correcting codes.}.
It can be converted into an exact isometry with approximately the same reconstruction properties. For example, one way to do this is to notice that since $P=\mathcal{V}^\dagger{\mathcal{V}}$ is close to the identity, and $P$ is Hermitian, $P$ should have positive eigenvalues for a close enough approximation. Thus one can use the positive square root $\sqrt{P}$ to create the exact isometry $\mathcal{V}\circ(\sqrt{P})^{-1}$.

\subsection{Building holographic codes with arbitrary symmetry dualities}
\label{subsec: building_holocodes_with_arbitrary_dualities}

Using theorems \ref{thm:gauge_global_to_global_global} and \ref{thm: full_gauge_to_gauge_global}, we now explicitly construct a holographic code with a global/global duality for an arbitrary finite group. We then show how to modify this construction to include all compact Lie groups at the cost of making the code approximate, as it must be due to the results of Ref.~\cite{Faist_2020}.

Besides giving a concrete example of the theorems described above, this construction is interesting because it provides a new method for building approximate codes with $U(2^n)$ transversal gate sets, the first such holographic codes for $n>0$.

\subsubsection{The construction}

\label{subsubsec: the-construction}

We now construct, for any finite group $G$, a holographic code with global/global duality with respect to $G$. 
The construction is identical to that of the LOTE case described in \cref{subsubsec: holocodes-example-lote} except in two regards. First, all vertex legs, edge legs, and legs which are to be combined on edges in the LOTE code have dimension 2. In general, we need to increase the dimension of the edge legs to be $|G|$, which requires increasing the dimension of the legs to be combined to some number that depends on $G$. The dimension of the vertex legs equals that of whatever unitary representations we choose for them. Second, we would like to generalize the copy tensor so that we retain reconstruction of central flux operators living on the boundary of an entanglement wedge. This is not necessary to build the global/global code, but we would like that the gauged version of this generalized LOTE code still allows of the reconstruction of as many gauge-invariant operators as possible (again we conjecture that our construction includes all gauge-invariant operators with support on the exterior of the entanglement wedge). 

We begin by addressing the second concern. The natural choice for the generalized copy tensor is what we refer to here as the ``grand-orthogonality tensor,'' described in detail in Ref.~\cite{Tagliacozzo_2014}, from which we heavily base our review of it below.

Our construction of the tensor proceeds as follows: using the grand-orthogonality theorem, we can construct a unitary map between $L^2(G)$ and a direct sum over representations.
We then embed the latter space into a larger space, that can be written as a product of two subsystems of equal dimension.
This is necessary to obtain the desired reconstruction properties, namely that a central operator on the input leg can be reconstructed on either output leg.
These systems are then identified with the output legs of the copying tensor.

The grand-orthogonality theorem states that for any group $G$, 
\begin{align}
    \frac{\sqrt{n_rn_{r'}}}{|G|}\sum_{g\in G} \pi^r_{ij}(g)^*\pi^{r'}_{kl}(g)=\delta_{rr'}\delta_{ik}\delta_{jl} \label{eq:GOT}
\end{align}
where $r$ and $r'$ label unitary representations of $G$, with dimensions $n_r$ and $n_{r'}$ respectively; $\pi^r_{ij}(g)$ is the $i,j$ component of the representation matrix of $g$ for the representation $r$ in an arbitrary but fixed basis, and $\pi^r_{ij}(g)^*$ is the complex conjugate of $\pi^r_{ij}(g)$. We can build from $\pi^r_{ij}(g)$ an isometry from the Hilbert space $\mathcal{H}_{r}\otimes\mathcal{H}_{\bar{r}}$, with both components having dimension $n_r$, to $L^2(G)$ via the map
\begin{align*}
    \ket{i}_r\ket{j}_{\bar{r}}\mapsto\sqrt{\frac{n_r}{|G|}}\sum_{g\in G}\pi^r_{ij}(g)\ket{g}.
\end{align*}
If we take a direct sum of these maps, so that the domain is $\oplus_r (\mathcal{H}_{r}\otimes\mathcal{H}_{\bar{r}})$, then using the grand-orthogonality theorem \cref{eq:GOT}, as well as the representation theory result that $\sum_r n_r^2=|G|$, implies that it is a unitary map.

Now, we take the conjugate of the unitary map and extend its target to $(\oplus_r \mathcal{H}_{r})\otimes(\oplus_{r'}\mathcal{H}_{\bar{r}}')$ -- the product of two equal-dimension subsystems mentioned above.
This gives us the map
\begin{align*}
    \mathcal{W}:\ket{g}\mapsto \sqrt{\frac{n_r}{|G|}}\sum_{r,i_r,j_r}\pi^r_{ij}(g)^* \ket{i_r}_r\ket{j_r}_{\bar{r}},
\end{align*}
which is an isometry of the type we need, namely from $L^2(G)$, the edge degree of freedom, to two factors of equal dimension. 

To show that central flux operators can still be reconstructed properly, we use a special property of this tensor: that $U^L(g)$ can be reconstructed on the left factor, and $U^R(g)$ on the right one. This can be seen as follows:
\begin{align*}
    U^L(h)\ket{g}=\ket{hg}&\mapsto \sqrt{\frac{n_r}{|G|}}\sum_{r,i,j}\pi^r_{ij}(hg)^* \ket{i}_r\ket{j}_{\bar{r}}\\
    &= 
    \left(\sum_{r',k,l}\pi^r_{kl}(h)^*\ketbra{k}{l}_r\otimes\mathds{1}_{\bar{r}}\right)
    \left(\sqrt{\frac{n_r}{|G|}}
    \sum_{r,i,j}\pi^{r'}_{ij}(g)^* \ket{i}_r\ket{j}_{\bar{r}}\right)
\end{align*}
with a similar argument for $U^R$. Since the central flux operators $U^R(g)=U^L(g)$ for $g$ in the center of the group, we have shown that they can be reconstructed on either side. 

We may now turn our attention to the first concern.
From the above argument, the legs in the stack that are to be contracted need to be of dimension $\sum_r n_r$.
This can be achieved by making five codes in the stack themselves stacks of $\lceil \log_{2}(\sum_r n_r) \rceil$ and removing any extra states from the codespace.
Since the six codes in the stack may now have different dimensions, one must be slightly more careful than in the construction of the LOTE code; legs from the sixth code in the stack must always be assigned to the vertex degrees of freedom, and those from the other five codes (which are themselves stacks of codes) must be assigned to the edges, rather than these assignments being arbitrary as before.

Using these two alterations to the LOTE construction we obtain a holographic code with a $G$-gauge-constrained Bulk. By \cref{thm: full_gauge_to_gauge_global} this code has a gauge/global duality and by \cref{thm:gauge_global_to_global_global} we can convert it to a code with a global/global duality. Thus we have constructed a holographic code with a global/global duality for arbitrary finite groups, or in more information-theoretic terminology, transversal gates for an arbitrary finite group.

\subsubsection{Compact Lie groups}
\label{subsubsec: continuous-groups}

In order to implement our ideas for compact Lie groups, we use a truncation scheme which we argue, in principle, implements the full isometry in the appropriate limit. 
The tensor networks still only requires legs associated with finite dimensional Hilbert spaces. 
Throughout this section we are somewhat cavalier about assuming potential subtleties that appear when regularizing the $\delta$ distribution evaluated at group identity elements work out.

In this section we construct an isometry $ \mathcal{W}: L^2[G] \rightarrow \bigoplus_{r\in \Omega_G} \mathcal{H}_{r} \otimes \mathcal{H}_{\bar{r}} $\footnote{
Recall that for a continuous group, $L^2[G]$ is a Hilbert space corresponding to Lebesgue-square-integrable real functions on $G$.
A basis for such functions is given by Dirac delta functions for each group element, just as with real-space quantum wavefunctions spanned by states $\{\ket{x}\}$.%
}, where $\Omega_G$ is the set of all irreducible representations of $G$ \footnote{Strictly speaking we include one representative of each equivalence class of irreducible representations of the group $G$.}.
This isometry is used to construct the appropriate analog of the copying tensor, which is rendered finite by simply truncating the sum over representations.
Specifically we define $\mathcal{W}$ as a sum over maps to each representation, $\mathcal{W} = \bigoplus_{r \in \Omega_G} \mathcal{W}_r$, with $ \mathcal{W}_{r}: L^2[G]\rightarrow V_{r}\otimes V_{\bar{r}} $ defined as
\[ \mathcal{W}_{r}\ket{g}=\sqrt{n_r}\sum_{i,j=1}^{n_{r}} \pi_{ij}^{r}(g) \ket{i}_{r}\otimes \ket{j}_{\bar{r}}, \]
where $\pi_{ij}^r(g)$ are the coefficients of $g$ in representation $g$ as in the previous section.
The adjoint is simply
\[  \mathcal{W}_{r'}^{\dagger}\ket{i}_{r}\otimes\ket{j}_{\bar{r}}  = \delta_{r,r'} \sqrt{n_{r}} \int_{G}\!\dd{g}\pi_{ij}^{r}(g)^{*}\ket{g}.\]

To verify that $\mathcal{W}$ is an isometry, we need to introduce two results first -- the Peter-Weyl theorem, and the grand-orthogonality theorem for continuous groups.
The latter is a straightforward generalization of \cref{eq:GOT}, namely
\begin{equation} \label{eq:GOT-cont}
 \int_{G}\!\dd{g}\pi^r_{ij}(g)^{*}\pi^{r'}_{kl}(g) = \frac{\delta_{r,r'}\delta_{ik}\delta_{jl}}{\sqrt{n_r n_{r'}}}.
\end{equation}
The former is needed in place of our argument in the previous section that relied on equality of Hilbert space dimensions, which does not work for infinite dimensional Hilbert spaces.
Specifically, the Peter-Weyl theorem (see e.g. Ref.~\cite{bump}) serves as a generalization of the Fourier decomposition beyond the abelian case of $U(1)$ when $ G $ is a compact Lie group.
Specifically, for every state $ \ket{\Psi}\in L^{2}[G] $ there is a family of $ n_{r}\times n_{r} $ matrices $ \Psi_{r} $ indexed by $r\in\Omega_G$ (analogous to a set of Fourier coefficients) such that
\begin{align}
    \ket{\Psi}  = \int_{G}\!\dd{g} \sum_{r\in \Omega_{G}} \sqrt{n_r}\Tr[\pi^{r}(g)^{\dagger} \Psi_{r}] \ket{g} \label{eq:PWthm} ,
\end{align}
%
Applying both of these results to the definition of $\mathcal{W}_r$ gives
\[ {\mathcal{W}}_{r}\ket{\Psi} = \sum_{i,j=1}^{n_r}\left(\Psi_{r}\right)_{ij} \ket{i}_{r}\otimes \ket{j}_{\bar{r}}.\]
Then $\mathcal{W}$ is an isometry, because
\begin{align}
 \mathcal{W}^\dagger \mathcal{W} \ket{\Psi} &= \sum_{r,r' \in \Omega_G} {\mathcal{W}}_{r'}^\dagger {\mathcal{W}}_{r}\ket{\Psi} \\
 &= \sum_{r \in \Omega_G} \sum_{i,j=1}^{n_r} \sqrt{n_{r}} \int_{G}\!\dd{g}\pi_{ij}^{r}(g)^{*} \left(\Psi_{r}\right)_{ij} \ket{g} \\
 &= \ket\Psi, \ \text{ by \cref{eq:PWthm}.}
\end{align}
\newcommand{\Wmc}{\mathcal{W}}
We can then define a (truncation) operation $ \mathcal{W}_{\Sigma}: L^{2}[G]\rightarrow \oplus_{r\in\Sigma} \left(V_{r} \otimes V_{\bar{r}} \right) $, where $ \Sigma \subseteq  \Omega_{G}$, as follows
\[  \Wmc_{\Sigma} = \bigoplus_{r\in\Sigma} \Wmc_{r}. \]
By the definition of $\mathcal{W}_r$ and its adjoint, this operator satisfies 
\begin{equation} \label{eq:delta_Sigma}
    \bra{h}\mathcal{W}_{\Sigma}^{\dag}\mathcal{W}^{}_{\Sigma}\ket{g}=\delta_{\Sigma}(h^{-1}\cdot g)
\end{equation}
 where 
\[\delta_{\Sigma}(g)=\sum_{r\in\Sigma}n_{r}\Tr_{r}[\pi^{r}(g)]. \]
Furthermore, $\delta_{\Sigma}(g)$ converges to a Dirac delta distribution when $ \Sigma \rightarrow \Omega_{G}$ because ${\Wmc_{\Sigma} \rightarrow \Wmc }$. 
\cref{eq:delta_Sigma} implies that $\mathcal{W}^{\dagger}_{\Sigma}\mathcal{W}^{}_{\Sigma}  $ commutes with the left and right action of $ G $ because the RHS expression is invariant if we left or right multiply $g$ and $h$ by any group element, i.e.\
\begin{align} \label{eq:ww-commute}
   [\mathcal{W}^{\dagger}_{\Sigma}\mathcal{W}^{}_{\Sigma} , U^{L/R}(g) ] = 0.
\end{align}
Finally, we note that although $\mathcal{W}_\Sigma$ is not an isometry, $\mathcal{W}^{\dagger}_{\Sigma} $ is, which one can verify using the observation that 
\[ \mathcal{W}_{r'} \cdot \mathcal{W}_{r}^{\dagger}= \delta_{r,r'} \id_{V_{r}\otimes V_{\bar{r}}}.
\]
With this truncated isometry in hand, we can define a truncated gauging map for compact Lie group $ G $ and a finite set $\Sigma$ of irreducible representations of $ G $:
\[  \mathcal{G}_{\Sigma} =  \mathcal{W}^{E}_{\Sigma} \mathcal{G} ,\]
where
\begin{align*}
 \mathcal{G} \ket\psi_V =\prod_{v\in V_{1}}\int_{G}\!\dd{g_{v}}A_{v}(g_{v}) \ket{\psi}_{V}\bigotimes_{e\in E} \ket{I}_{e}
\end{align*}
is the standard generalization of the gauging map (which maps to an infinite-dimensional Hilbert space), and
\[\mathcal{W}^{E}_{\Sigma} = \prod_{e\in E} \mathcal{W}^e_{\Sigma} \]
is the truncation acting on every edge degree of freedom to render the target space finite-dimensional.

Unlike the original gauging map $\mathcal{G}$, this truncated version is not an isometry.
From its definition, one can see that
\[P_{\Sigma}=\mathcal{G}^{\dagger}_{\Sigma}\cdot \mathcal{G}_{\Sigma} =
\prod_{v\in V_{1}}\int_{G}\!\dd{g_{v}}U_{v}(g_{v}) \prod_{e\in E} \delta_{\Sigma}\left( g_{e_{-}}\cdot g_{e_{+}}^{-1}\right),
\]
is a Hermitian operator acting on $\mathcal{H}_V$ that is not necessarily proportional to the identity because the $\delta_{\Sigma}$ does not enforce all $g$ to be $I$. Nevertheless when $\Sigma \rightarrow \Omega_G$, we find that as in the finite $G$ case
\[ P = \lim_{\Sigma\rightarrow \Omega_G} P_{\Sigma} 
=
\left(
\prod_{v\in V_{1}} \int_{G}\!\dd{g_{v}} \prod_{e\in E} \delta_{}\left( g_{e_{-} }\cdot g_{e_{+}}^{-1}\right)\right) \idV.
\]
A caveat in the last expression is that if $\abs{E}>\abs{V_1}$ we are left with positive powers of $\delta(I)$ which are not well-defined numbers. A way to regularize the above is to understand it as the limit 
\[ \delta(I) = \lim_{\Sigma\rightarrow \Omega_G} \delta_{\Sigma}(I)=\lim_{\Sigma\rightarrow \Omega} \sum_{r\in\Sigma}n_r^2. \]

We can construct a regularized version $P_{\Sigma}$ by dividing by an appropriate power of $\delta_{\Sigma}(I)$ so that it converges to $\idV$, i.e. $\tilde{P}_{\Sigma} = \delta_{\Sigma}(I)^{-\alpha}P_{\Sigma}$ satisfies $\lim_{\Sigma\rightarrow \Omega_G}\tilde{P}_{\Sigma} =\idV $, where $\alpha \in \mathbb{R}$. Next notice that $\tilde{P}_\Sigma$ is a Hermitian operator on a finite Hilbert space $\mathcal{H}_{V}$ that converges to the identity. Thus for large enough $\Sigma$, all the eigenvalues of $\tilde{P}_{\Sigma}$ should be close to $1$ which in particular means that they are all positive. Therefore, for some (finite) $\Sigma$ one can define an operator $K_{\Sigma}$ which is the (positive) square root of ${P}_{\Sigma}$, i.e.
\[ K_{\Sigma} = \sqrt{P_{\Sigma}} = \delta^{\alpha/2}_\Sigma(I) \sqrt{\tilde{P}_{\Sigma}}.\]

What is more, as $ P_{\Sigma} $ commutes with $ U_{V}(h) $, its square root still commutes with $ U_{V}(h) $! Thus for an appropiate $\Sigma$, we can construct an isometry from $\mathcal{G}_\Sigma$ by 
\[ \tilde{\mathcal{G}}_{\Sigma} = \mathcal{G}_{\Sigma} \cdot K_{\Sigma}^{-1}.\]

Finally, we would like to show that this map implements a gauge/global symmetry, i.e.\
\begin{align}
\label{eq:isometry-covariant}
\tilde {\mathcal{G}}_{\Sigma} \cdot U_{V}(h) =A_{V_{0}}(h) \cdot \tilde{\mathcal{G}}_{\Sigma},
\end{align}
with $A_{V_0}(h)$ defined to act on the truncated, finite-dimensional space.
First, we show that this holds for the (non-isometric) $\mathcal{G}_\Sigma$, i.e.\ 
\begin{align}
\label{eq:truncated-covariant}    
 {\mathcal{G}}_{\Sigma} \cdot U_{V}(h) =A_{V_{0}}(h) \cdot {\mathcal{G}}_{\Sigma}.
 \end{align}
This holds as follows,
\begin{align*}
    \mathcal{W}_{\Sigma}^{E\dagger} \mathcal{G}_{\Sigma} \cdot U_{V}(h) &= \mathcal{W}_{\Sigma}^{E\dagger}\mathcal{W}_{\Sigma}^{E}  \cdot \mathcal{G} \cdot U_{V}(h) \\
    &= \mathcal{W}_{\Sigma}^{E\dagger}\mathcal{W}_{\Sigma}^{E}  \cdot A_{V_{0}}(h) \cdot \mathcal{G} \text{, using \cref{thm:gauge-map-duality}%
    \footnote{One may object that we only proved this for a finite-dimensional setting; however all the reasoning follows through to the continuous case.} %
    ,} \\
    &= A_{V_{0}}(h) \cdot\mathcal{W}_{\Sigma}^{E\dagger}\mathcal{W}_{\Sigma}^{E}  \cdot  \mathcal{G} \text{, using \cref{eq:ww-commute},} \\
     \mathcal{G}_{\Sigma} \cdot U_{V}(h) &=  
\left(\mathcal{W}_{\Sigma}^{E}\cdot A_{V_{0}}(h)
\cdot \mathcal{W}_{\Sigma}^{E\dagger}\right)
\cdot \mathcal{G}_{\Sigma}\text{, since $\mathcal{W}_\Sigma^\dagger $ and thus $\mathcal{W}_\Sigma^{E\dagger}$ is an isometry}. 
\end{align*}
Note that $ h \mapsto A_{V_{0}}^{(\Sigma)}(h) =
\left(\mathcal{W}_{\Sigma}^{E}\cdot A_{V_{0}}(h)
\cdot \mathcal{W}_{\Sigma}^{E\dagger}\right)$ is a representation of the group $ G $ due to the fact that  $ \mathcal{W}_{\Sigma}^{\dagger}\mathcal{W}_{\Sigma}^{}  $ commutes with the left and right actions, which is how $A_{V_0}(h)$ acts on the edge degrees of freedom.
Thus this is our ``finite-dimensional version'' of $A_{V_0}(h)$.

To expand to the proper isometry, $\tilde{\mathcal{G}}_\Sigma$, we show that $P_\Sigma$ commutes with the action of the group.
Consider the square of \cref{eq:truncated-covariant}, namely
\begin{align}
 U_{V}(h)^\dagger \cdot {\mathcal{G}}_{\Sigma}^\dagger \cdot  {\mathcal{G}}_{\Sigma} \cdot U_{V}(h) &= {\mathcal{G}}_{\Sigma}^\dagger \cdot A^\dagger_{V_0}(h) A_{V_{0}}(h) \cdot {\mathcal{G}}_{\Sigma}.\\
 U_{V}(h)^\dagger \cdot P_\Sigma \cdot U_{V}(h) &=P_\Sigma
\end{align}
Thus $P_\Sigma$ commutes with the group action, which implies that $K_\Sigma^{-1}$ commutes with it, implying \cref{eq:isometry-covariant}.

One may now wonder why using $\tilde{\mathcal{G}}_\Sigma$ on a generalized gauged LOTE code does not violate the Eastin-Knill theorem~\cite{eastinRestrictionsTransversalEncoded2009}, which says that codes capable of correcting erasure errors can not have a continuous transversal gate set. The reason is that by truncating we have lost the property that locality of operators is preserved up to dressing (see \Cref{subsec: dressing-local-operators}). This was necessary to argue that the resulting code kept the same entanglement wedge map. After truncation, the reconstruction properties holds only approximately, in which case the code can have a continuous transversal gate set, as shown in Ref.~\cite{Faist_2020}.

\section{Discussion}
\label{sec:conclusion}
While toy models of AdS/CFT remain far from perfect, there is much to be learned, in both holography and quantum information, by attempting to improve these models. In this work we took a step in this direction by constructing and analyzing codes with Bulk gauge symmetries that were dual to Boundary global symmetries. 

We began by introducing the gauging map, which maps a system with global symmetry to another with gauge symmetry. We proved several key features of the map: that it can be isometric, that it preserves locality of vertex operators up to a dressing to the Boundary which can be chosen to go along an arbitrary path, and its global/gauge duality. 
We proceeded by introducing holographic codes. We gave a working definition of a holographic code, which allowed the Bulk to be a constrained system, meaning that only constraint-preserving operators need be reconstructable. We also required 
a 
version of entanglement wedge reconstruction that  foreshadowed the compatibility of holographic codes and the gauging map: that an entanglement wedge must be ``near boundary probing'', that is there must always be a path from a point in the entanglement wedge to one on its corresponding Boundary region. We then gave two examples of our definition: the HaPPY code, whose Bulk is unconstrained and which exhibits a global/global duality, and the gauged LOTE code, whose Bulk's Hilbert space was constrained to a full gauge-invariant Hilbert space, but which did not a priori have a gauge/global duality. We then proved, however, that for holographic codes with such a gauge-invariant Bulk, there must always be a dual Boundary global symmetry. Thus we have constructed the first toy model of a holographic code with a gauge/global duality, but also, shown that establishing the presence of such a duality (given a Bulk gauge symmetry) requires only the error correcting properties of AdS/CFT, not any additional structure. 

In the last section of the paper, we used the gauging map to analyze holographic codes. We began by establishing some general Bulk gauging and ungauging theorems: a code with a global/global duality can be converted into one with a gauge/global duality, albeit with all fluxes fixed to a particular value. A code with gauge/global duality with Bulk containing at least one flux sector can be converted to one with global/global duality by restricting the Bulk to that sector, i.e. by fixing the gauge field configuration to a classical value. This latter fact inspired the various remaining observations in the paper. First, it offers an explanation for why the HaPPY code fails to model AdS/CFT by displaying a global/global symmetry: its Bulk is already restricted to a fixed-flux sector. Second, it gives us a model of a somewhat trivial time evolution, in which nearby sites are completely decoupled from one another.
Thirdly, it gave a concrete model in which one can understand how metric fluctuations contribute to AdS/CFT being only an approximate code.
Finally, when combined with the generalized gauged LOTE code it gives us a new mechanism for creating approximate codes that are transversal with respect to a universal gate set. 

We expect it to be straightforward to generalize this work to higher dimensions and to higher-form global symmetries. The most straightforward future direction is quantifying the code approximation factor of the codes with global/global continuous symmetry dualities as a function of the number of representations included in the truncation. Specifically, it would be interesting to know whether or not it saturates the upper bound given in Ref~\cite{Faist_2020}. 
It would also be interesting to explore such bounds further in the context of toy models for AdS/CFT.  
Our gauging and ungauging approach to produce gauge/global or global/global holographic codes could also be applied to random (stabilizer) tensor network constructions~\cite{haydenHolographicDualityRandom2016,Nezami2020} to avoid any ``accidental'' symmetries. 

The discussion in this work has focused on toy models for holography close to a fixed background geometry.
Recently there has been progress on establishing no-go results that constrain possible symmetries in a theory of quantum gravity that involves summing over topologies~\cite{Harlow2021,Belin2020,Hsin2020,Chen2021}. It would be interesting to extend the holographic code toy models to this setting and to study the possibilities for symmetry there.
This would appear to require some new ideas, as here we have relied upon finding a path to the boundary in our dressing of operators to make them gauge-invariant, which may not always be possible.
The resolution to this potential pitfall may be the incorporation of wormholes into the toy models. 

The kind of dynamical duality that we construct here is only a fairly trivial one, in which subsystems are fully decoupled (i.e.\ the speed of light vanishes).
However, a similar construction may be capable of implementing a proper dynamical duality between $k$-local Hamiltonians in the bulk and boundary.
This could involve a map similar to the gauging map, but whose gauge/global duality involves symmetries that are \emph{not} simply generated by sums of local operators, but rather sums of $k$-local terms. 
One approach would be to start by studying such a map in a general (non-holographic) setting, as we did in \cref{sec:generalized-gauging-map}, before applying it to a code like LOTE as we did in \cref{sec:applications-to-holography} to obtain a holographic code with global/global duality.
Conversely, if we find holographic codes with a proper dynamical duality (i.e.\ allowing for time evolution around a fixed Bulk geometry), it may be possible to construct from it a natural ambient constrained system, which may be of interest to the study of quantum gravity. 

\acknowledgments

We thank Patrick Hayden for providing useful comments on a draft of this work. 
We thank Beni Yoshida for permitting us to adapt several figures from Ref.~\cite{HaPPY} for use in this work. 
DW acknowledges support from the Simons Foundation. KD is supported by the Center for Science of Information (CSoI), an NSF Science
and Technology Center, under grant agreement CCF-0939370.

\appendix

\section{Error correction}

\label{sec:qecc}

We define correctability of regions as follows.
\begin{definition}
	Consider a code with encoding isometry $\mathcal{V}:\mathcal{H}_L \to \mathcal{H}_P$ and codespace projector $P=\mathcal{V} \mathcal{V}^\dagger$.
	A region $R$ is said to be \emph{correctable} with respect to an algebra of logical operators $\mathcal{A}_L$ if the erasure channel $\mathcal{N}_R (\cdot)\defi \tr_R(\cdot)$ is correctable; i.e.\ for every $\mathcal{O}_L \in \mathcal{A}_L $ there exists a recovery channel $\mathcal{R}$ such that
	\begin{align}
		P (\mathcal{R} \circ \mathcal{N}_R )^\dagger(\mathcal{V}\mathcal{O}_L\mathcal{V}^\dagger) P = \mathcal{V} \mathcal{O}_L \mathcal{V}^\dagger
	\end{align}
	\label{def:correctable}
\end{definition}

	The most useful property of a correctable region in a holographic code is a result of the \emph{reconstruction} or \emph{cleaning} lemma, which shows that this is equivalent to

\begin{lemma}
	A region $R$ of a code with encoding isometry $\mathcal{V}$ and codespace projector $P$ is correctable if and only if for any logical operator $\mathcal{O}_L \in \mathcal{A}_L$, there is a physical operator $\mathcal{O}_{R^c}$ supported on $R^c$ which preserves the codespace, i.e.\ $[\mathcal{O}_{R^c},P] = 0$, and implements $\mathcal{O}_L$, i.e.\ $\mathcal{V}^\dagger \mathcal{O}_{R^c}\mathcal{V} = \mathcal{O}_L$.
	\label{lem:cleaning-lemma}
\end{lemma}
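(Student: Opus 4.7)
The plan is to reduce the biconditional to the single intertwiner equation $\mathcal{O}_{R^c}\mathcal{V} = \mathcal{V}\mathcal{O}_L$. Since $\mathcal{V}^\dagger \mathcal{V} = \id$ and $P = \mathcal{V}\mathcal{V}^\dagger$, this intertwiner is equivalent to the conjunction of the two stated conditions: $\mathcal{V}^\dagger\mathcal{O}_{R^c}\mathcal{V} = \mathcal{O}_L$ gives $P\mathcal{O}_{R^c}\mathcal{V} = \mathcal{V}\mathcal{O}_L$, and $[\mathcal{O}_{R^c},P]=0$ gives $P^\perp \mathcal{O}_{R^c}\mathcal{V} = \mathcal{O}_{R^c} P^\perp \mathcal{V} = 0$.

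For the forward direction I would first upgrade \cref{def:correctable} to the Schr\"odinger-picture identity $\mathcal{R}\circ\mathcal{N}_R(\mathcal{V}\rho_L\mathcal{V}^\dagger) = \mathcal{V}\rho_L\mathcal{V}^\dagger$ for every logical density matrix $\rho_L$. Dualizing the defining equation via trace pairing shows $P\mathcal{C}(\mathcal{V}\rho_L\mathcal{V}^\dagger)P = \mathcal{V}\rho_L\mathcal{V}^\dagger$ with $\mathcal{C} = \mathcal{R}\circ\mathcal{N}_R$; the complementary block is then a positive semidefinite operator of vanishing trace and hence zero, which forces the off-diagonal blocks to vanish as well. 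Dualizing once more and noting that $\mathcal{N}_R^\dagger$ adjoins identity on $R$, for each $\mathcal{O}_L \in \mathcal{A}_L$ the candidate $\mathcal{O}_{R^c} := \mathcal{R}^\dagger(\mathcal{V}\mathcal{O}_L\mathcal{V}^\dagger)$ is an operator on $R^c$ satisfying $\mathcal{V}^\dagger\mathcal{O}_{R^c}\mathcal{V} = \mathcal{O}_L$. To promote this to codespace preservation I would apply the Kadison--Schwarz inequality for the unital completely positive map $\mathcal{R}^\dagger$ to $A = \mathcal{V}\mathcal{O}_L\mathcal{V}^\dagger$, giving
\begin{equation*}
\mathcal{R}^\dagger(\mathcal{V}\mathcal{O}_L^\dagger \mathcal{O}_L \mathcal{V}^\dagger) \;\geq\; \mathcal{O}_{R^c}^\dagger \mathcal{O}_{R^c}.
\end{equation*}
Conjugating by $\mathcal{V}$ collapses the left side to $\mathcal{O}_L^\dagger\mathcal{O}_L$ (by the implementation identity applied to the logical operator $\mathcal{O}_L^\dagger\mathcal{O}_L$), while the right side splits as $\mathcal{O}_L^\dagger\mathcal{O}_L + (P^\perp\mathcal{O}_{R^c}\mathcal{V})^\dagger(P^\perp\mathcal{O}_{R^c}\mathcal{V})$; positivity then forces $P^\perp\mathcal{O}_{R^c}\mathcal{V} = 0$, completing the intertwiner.

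For the reverse direction I would promote the assignment $\mathcal{O}_L \mapsto \mathcal{O}_{R^c}$ to a $*$-homomorphism $\pi \colon \mathcal{A}_L \to \mathcal{L}(\mathcal{H}_{R^c})$ after quotienting by its kernel; a short computation using the intertwiner on products $\mathcal{O}_L\mathcal{O}_L'$ shows that $\pi$ really is multiplicative and $*$-preserving rather than merely linear. The Artin--Wedderburn structure theorem for finite-dimensional $C^*$-algebras then produces a decomposition $\mathcal{H}_{R^c} \cong (\mathcal{H}_L \otimes \mathcal{H}_A) \oplus \mathcal{H}_\perp$ with $\pi(\mathcal{O}_L) = \mathcal{O}_L \otimes \id_A$, and a commutant argument applied to the intertwiner forces every codestate to factor as $\mathcal{V}\ket{\psi}_L = \ket{\psi}_L \otimes \ket{\chi}_{A,R}$ for a universal ancillary state $\ket{\chi}_{A,R}$ on $\mathcal{H}_A \otimes \mathcal{H}_R$. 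The recovery channel is then built by isometrically projecting $\mathcal{H}_{R^c}$ onto its logical tensor factor, discarding $\mathcal{H}_A$, re-preparing the fixed state $\ket{\chi}_{A,R}$, and applying $\mathcal{V}$; verifying it is CPTP and satisfies \cref{def:correctable} is routine.

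The principal obstacle is the codespace-preservation half of the forward direction: the Heisenberg dual of $\mathcal{R}$ only obviously yields an operator implementing $\mathcal{O}_L$ modulo two-sided projection by $P$, not one that commutes with $P$ outright. Kadison--Schwarz supplies the missing rigidity because equality in it holds precisely on the multiplicative domain of $\mathcal{R}^\dagger$, and on the codespace the logical subalgebra sits inside that domain, which is exactly what rules out leakage into the orthogonal complement.
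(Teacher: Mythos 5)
Your proposal is correct in outline but proves the lemma by an entirely different route from the paper. The paper's proof is short and non-constructive: it quotes the commutant characterization of correctability from Ref.~\cite{BKK} --- $R$ is correctable iff $[P\mathcal{O}_RP,\mathcal{V}Y_L\mathcal{V}^\dagger]=0$ for all $\mathcal{O}_R$ and all $Y_L\in\mathcal{A}_L$ --- derives the reverse direction in three lines (writing $\mathcal{V}Y_L\mathcal{V}^\dagger = PY_{R^c}P$ and using that $Y_{R^c}$ commutes with both $P$ and $\mathcal{O}_R$), and simply cites Ref.~\cite{harlow} for the forward direction. You instead prove both directions from scratch: the Kadison--Schwarz/multiplicative-domain argument is a genuinely self-contained replacement for the citation in the forward direction, and your reverse direction reproves the operator-algebra error-correction structure theorem rather than invoking the commutant criterion. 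What the paper's route buys is brevity and automatic coverage of arbitrary $*$-subalgebras $\mathcal{A}_L$; what yours buys is an explicit recovery channel and independence from external results.

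Two points need repair. First, the opening ``upgrade'' of \cref{def:correctable} to $\mathcal{R}\circ\mathcal{N}_R(\mathcal{V}\rho_L\mathcal{V}^\dagger)=\mathcal{V}\rho_L\mathcal{V}^\dagger$ for every logical density matrix is false when $\mathcal{A}_L$ is a proper subalgebra of $\mathcal{L}(\mathcal{H}_L)$ (take $\mathcal{A}_L=\mathbb{C}\,\id$: every region is then correctable, yet no state is recoverable); fortunately you never use it, since $\mathcal{V}^\dagger\mathcal{O}_{R^c}\mathcal{V}=\mathcal{O}_L$ for $\mathcal{O}_{R^c}=\mathcal{R}^\dagger(\mathcal{V}\mathcal{O}_L\mathcal{V}^\dagger)$ already follows from the Heisenberg-picture definition because $\mathcal{N}_R^\dagger$ tensors on $\id_R$. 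Relatedly, the Kadison--Schwarz step requires the \emph{same} recovery channel to implement $\mathcal{O}_L^\dagger\mathcal{O}_L$ (the definition as written lets $\mathcal{R}$ depend on the operator), and it only kills the block $P^{\perp}\mathcal{O}_{R^c}P$; you must run the argument again on $\mathcal{O}_L^\dagger$, whose candidate reconstruction is $\mathcal{O}_{R^c}^\dagger$ by $*$-preservation of $\mathcal{R}^\dagger$, to kill $P\mathcal{O}_{R^c}P^{\perp}$ and obtain $[\mathcal{O}_{R^c},P]=0$ --- the single intertwiner $\mathcal{O}_{R^c}\mathcal{V}=\mathcal{V}\mathcal{O}_L$ is not by itself equivalent to the two stated conditions. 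Second, in the reverse direction the factorization $\mathcal{V}\ket{\psi}_L=\ket{\psi}_L\otimes\ket{\chi}_{A,R}$ with a single universal ancillary state holds only when $\mathcal{A}_L$ is a full matrix algebra. The paper applies this lemma to subalgebras with nontrivial center (the entanglement-wedge algebras of the gauged LOTE code are built precisely so that central elements are recoverable on both $R$ and $R^c$), for which the Artin--Wedderburn decomposition is a direct sum over blocks and the state on $\mathcal{H}_A\otimes\mathcal{H}_R$ may differ from block to block; the recovery channel must then be built blockwise after a measurement of the center. This is repairable, but it is a genuine gap at the level of generality the paper requires.
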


The forward direction of this lemma is partially proven in Ref.~\cite{preskillpastawski}, but without the condition that $[\mathcal{O}_{R^c},P] = 0$.
We prove this result here using results from \cite{BKK,harlow}.
\begin{proof}
    By \cite{BKK}, correctability is equivalent to the following condition:
    \begin{align}
		\forall Y_L \in \mathcal{A}_L, \forall \mathcal{O}_R, [P\mathcal{O}_RP,\mathcal{V}Y_L \mathcal{V}^\dagger] = 0, \label{eq:bkk}
    \end{align}
	where $\mathcal{O}_R$ is any operator acting on the region $R$.
    
    $(\impliedby)$: Suppose that \cref{eq:bkk} does not hold.
	Then $\exists \mathcal{O}_R, Y_L \in \mathcal{A}_L$ such that $[P\mathcal{O}_R P , \mathcal{V}Y_L \mathcal{V}^\dagger] \neq 0$.
    But by assumption, $\exists Y_{R^c} $ s.t.\ $PY_{R^c}P = \mathcal{V}Y_L \mathcal{V}^\dagger$ and $[Y_{R^c},P] = 0$.
	Therefore $[P\mathcal{O}_RP, \mathcal{V}Y_L\mathcal{V}^\dagger] = P [\mathcal{O}_R , Y_{R^c}] P = 0$, a contradiction.

    $(\implies)$: Shown in Ref.~\cite{harlow}.
\end{proof}

\begin{lemma}
	Consider a code with encoding isometry $\mathcal{V}:\mathcal{H}\rightarrow\mathcal{H}_L$.
	Suppose $\mathcal{H}=\mathcal{H}_R\otimes \mathcal{H}_{R^c}$ and that $R$ is correctable with respect to an algebra of logical operators $\mathcal{A}_L$. Then for any logical unitary $U_L\in \mathcal{A}_L$ there is a unitary codespace-preserving operator $U$ supported on $R^c$ that implements $U_L$; i.e.\ $U\mathcal{V} = \mathcal{V}U_L$. 
	\label{lem:logical-unitary-to-physical-unitary}
\end{lemma}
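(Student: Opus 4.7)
The plan is to produce the unitary $U$ by starting from the (not necessarily unitary) codespace-preserving operator delivered by \cref{lem:cleaning-lemma} and then ``polishing'' it via a polar decomposition whose support structure is compatible with the tensor factorization $\mathcal{H} = \mathcal{H}_R \otimes \mathcal{H}_{R^c}$.

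First, by \cref{lem:cleaning-lemma} applied to the unitary $U_L \in \mathcal{A}_L$, there exists an operator $O_{R^c}$ of the form $\tilde{O}_{R^c}\otimes \id_R$ with $[O_{R^c},P]=0$ and $\mathcal{V}^\dagger O_{R^c} \mathcal{V} = U_L$, equivalently $O_{R^c}\mathcal{V} = \mathcal{V}U_L$. Because $[O_{R^c}^\dagger,P]=0$ as well, we also have $O_{R^c}^\dagger\mathcal{V} = \mathcal{V}U_L^\dagger$, and so $O_{R^c}^\dagger O_{R^c}\mathcal{V} = \mathcal{V}U_L^\dagger U_L = \mathcal{V}$. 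This is the key fact: on the codespace, $|O_{R^c}|$ acts as the identity.

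Next, I take the polar decomposition $\tilde O_{R^c} = \tilde W_{R^c}\,|\tilde O_{R^c}|$ on $\mathcal{H}_{R^c}$, where $\tilde W_{R^c}$ is a partial isometry. Because $O_{R^c}$ is of the product form $\tilde O_{R^c}\otimes \id_R$, both factors inherit this form: $W \defi \tilde W_{R^c}\otimes \id_R$ and $S\defi |\tilde O_{R^c}|\otimes \id_R = |O_{R^c}|$. From the observation above, $S^2\mathcal{V} = \mathcal{V}$ and positivity of $S$ give $S\mathcal{V} = \mathcal{V}$, hence $W\mathcal{V} = WS\mathcal{V} = O_{R^c}\mathcal{V} = \mathcal{V}U_L$. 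In particular, $\mathcal{V}\mathcal{H}_L$ lies in $(\ker W)^\perp$, and since $\ker W = (\ker \tilde W_{R^c})\otimes \mathcal{H}_R$, we conclude $\mathcal{V}\mathcal{H}_L \subseteq K^\perp \otimes \mathcal{H}_R$ with $K \defi \ker \tilde W_{R^c}$.

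Finally, I extend the partial isometry $\tilde W_{R^c}$ to a unitary $\tilde U_{R^c}$ on $\mathcal{H}_{R^c}$: the complement of $K$ and the complement of $\image(\tilde W_{R^c})$ have the same dimension (rank–nullity in finite dimensions), so one can pick any unitary $X:K\to \image(\tilde W_{R^c})^\perp$ and set $\tilde U_{R^c} = \tilde W_{R^c}+X$, where $X$ is extended by zero on $K^\perp$. Then $U\defi \tilde U_{R^c}\otimes \id_R$ is a physical unitary supported on $R^c$, and
\begin{align*}
U\mathcal{V} = W\mathcal{V} + (X\otimes \id_R)\mathcal{V} = \mathcal{V}U_L,
\end{align*}
because the second term vanishes on $K^\perp\otimes\mathcal{H}_R$, which contains $\mathcal{V}\mathcal{H}_L$ by the previous step. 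Unitarity of $U$ and $U\mathcal{V} = \mathcal{V}U_L$ together imply $UPU^\dagger = U\mathcal{V}\mathcal{V}^\dagger U^\dagger = \mathcal{V}U_L U_L^\dagger \mathcal{V}^\dagger = P$, so $U$ preserves the codespace.

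The main technical subtlety—and the place where care is required—is the tensor-product compatibility of the polar decomposition and the kernel tracking in the last step: one must verify that extending $\tilde W_{R^c}$ to a unitary on $\mathcal{H}_{R^c}$ does not spoil $U\mathcal{V}=\mathcal{V}U_L$, which works precisely because the codespace lies in the ``isometric part'' of $W$'s domain, namely $K^\perp\otimes \mathcal{H}_R$. In infinite dimensions one would additionally need $K$ and $\image(\tilde W_{R^c})^\perp$ to have matching dimension to guarantee such an extension exists, but this is automatic in the finite-dimensional setting relevant for the holographic codes of this paper.
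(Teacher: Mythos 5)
Your proof is correct, but it takes a genuinely different route from the paper's. The paper defines $H_L = -i\log(U_L)$, applies the cleaning lemma to this Hermitian generator, symmetrizes the resulting reconstruction $H$ to $\tilde{H} = \tfrac{1}{2}(H+H^\dagger)$, and exponentiates to obtain $U = e^{i\tilde{H}}$, which is automatically unitary, supported on $R^c$, and codespace-preserving by a power-series argument. You instead apply \cref{lem:cleaning-lemma} directly to $U_L$, observe that $O_{R^c}^\dagger O_{R^c}$ fixes the codespace, and then polish $O_{R^c}$ via polar decomposition and extension of the partial isometry to a unitary. Each approach has something to recommend it: the paper's exponential trick sidesteps all kernel bookkeeping and delivers unitarity for free, but it leans on the claim that $\mathcal{A}_L$ contains a Hermitian logarithm of $U_L$ (which holds for a finite-dimensional von Neumann algebra via functional calculus, though the paper's ``power series'' justification is somewhat loose); your argument avoids that functional-calculus step entirely by working with $U_L$ itself, at the cost of having to verify that the codespace sits inside the isometric part $(\ker W)^\perp \otimes \mathcal{H}_R$ of the polar factor and that the unitary extension does not disturb the action there. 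Your key steps all check out: $S^2\mathcal{V}=\mathcal{V}$ with $S\succeq 0$ does give $S\mathcal{V}=\mathcal{V}$ by the spectral theorem, the polar decomposition factors through the tensor product as you claim, and the dimension count for extending $\tilde{W}_{R^c}$ to a unitary is automatic in finite dimensions.
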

\begin{proof}
	Define $H_L \defi -i \log(U_L)\in \mathcal{A}_L$ (the algebra is closed under the logarithm function which amounts to addition and multiplication via its power series).
	By the cleaning lemma, \cref{lem:cleaning-lemma}, there exists a codespace-preserving operator $H$ supported on $R^c$ such that $H\mathcal{V} = \mathcal{V}H_L$.
	We claim that the operator $U = \exp(\frac{i}{2}( H + H^\dagger))$ is unitary and reconstructs $U_L$ on the region $R^c$.

	First, we show unitarity.
	$H_L$ is Hermitian by definition.
	$H$ need not be Hermitian, but one can construct $\tilde{H} \defi \frac{1}{2} (H + H^\dagger)$ which is.
	Furthermore, $\tilde{H}$ implements $H_L$ because $\mathcal{V} \tilde{H}^\dagger \mathcal{V}^\dagger = H_L^\dagger = H_L$.
	It is also supported on $R^c$ by assumption.
	Thus, $U$ is unitary and also supported on $R^c$.

	Second, consider the operator implemented by $U$.
	One can expand $\mathcal{V}^{\dagger}\exp(iH) \mathcal{V} $ as a power series to conclude that $\mathcal{V}^{\dagger} U \mathcal{V} = U_L$.
	
	Finally, $U$ is also codespace-preserving, which again can be seen from the power series and $[H,P]=0$.

\end{proof}

\begin{lemma}
	For any correctable region $R$ with respect to an algebra of logical operators $\mathcal{A}_L$, and any logical unitary representation $U_L: G \to \mathcal{A}_L$, there is a unitary representation of $G$ supported on $R^c$ that implements $U_L$.
	\label{lem: logical-representation to physical representation}
\end{lemma}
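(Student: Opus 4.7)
First I would observe that applying \cref{lem:logical-unitary-to-physical-unitary} individually to each $g\in G$ already produces unitary codespace-preserving operators $U(g)$ on $R^c$ implementing $U_L(g)$, but in general the products $U(g)U(h)$ only agree with $U(gh)$ on the codespace — they can differ by operators that act trivially on $P\mathcal{H}$. To upgrade these to a genuine representation, I would need to coordinate the choice of reconstructions across the whole group rather than make them element-by-element.

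My plan is therefore to construct a single unital $*$-homomorphism $\phi:\mathcal{A}_L\to\mathcal{L}(\mathcal{H}_{R^c})$ whose image lies in the commutant of the codespace projector $P$ and which implements the logical action through the encoding, i.e.\ $\mathcal{V}^\dagger \phi(X)\mathcal{V}=X$ for all $X\in\mathcal{A}_L$. Given such a $\phi$, setting $U(g):=\phi(U_L(g))$ automatically yields a unitary representation of $G$ supported on $R^c$: unitarity and the group law are preserved because $\phi$ is a unital $*$-homomorphism, and the implementation relation $U(g)\mathcal{V}=\mathcal{V}U_L(g)$ follows from the implementation property of $\phi$.

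To build $\phi$ I would invoke the subsystem decomposition familiar from operator-algebra quantum error correction. Correctability of $R$ with respect to $\mathcal{A}_L$ gives a block decomposition
\[
\mathcal{H}_{R^c}\;\cong\;\bigoplus_{k}\bigl(\mathcal{H}_{A_k}\otimes \mathcal{H}_{B_k}\bigr)\;\oplus\;\mathcal{H}_C,
\]
in which $\mathcal{A}_L$ is faithfully represented on $\bigoplus_k \mathcal{H}_{A_k}$ and each codestate restricts on $R^c$ to a superposition of vectors of the form (data on $A_k$) $\otimes$ (fixed reference on $B_k$), possibly still entangled with $R$. Setting
\[
\phi(X)\;:=\;\bigoplus_{k}\bigl(X^{A_k}\otimes \id_{B_k}\bigr)\;\oplus\;\id_{C}
\]
yields the required $*$-homomorphism: commutation with $P$ is immediate from the block structure, and the implementation condition follows because $\phi(X)$ acts as $X$ on the $A_k$ factors that carry the logical information while doing nothing on the $B_k$ reference factors or on the complement $\mathcal{H}_C$.

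The main obstacle I anticipate will be justifying the subsystem decomposition cleanly enough to verify that $\phi(X)$ literally commutes with $P$, rather than merely preserving the codespace up to the trivial block; this is where one must use correctability beyond what \cref{lem:cleaning-lemma} alone provides. In the applications of interest where $\mathcal{A}_L$ is the full matrix algebra on the logical Hilbert space, the decomposition collapses to $\mathcal{V}=W\otimes\ket{\beta}_R$ for some isometry $W:\mathcal{H}_L\to \mathcal{H}_{R^c}$ and reference state $\ket{\beta}$ on $R$, so one can simply define $U(g)$ to equal $W U_L(g) W^\dagger$ on $W\mathcal{H}_L$ and the identity on its orthogonal complement in $\mathcal{H}_{R^c}$ — a direct construction that bypasses the full OAQEC machinery while still exhibiting a unitary representation implementing $U_L$.
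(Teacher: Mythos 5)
Your diagnosis of the difficulty is exactly right -- element-wise application of \cref{lem:logical-unitary-to-physical-unitary} gives unitaries that satisfy the group law only when acting to the left of $\mathcal{V}$ -- but your route to fixing it differs from the paper's. You reach for the operator-algebra structure theorem: decompose $\mathcal{H}_{R^c}$ into blocks $\mathcal{H}_{A_k}\otimes\mathcal{H}_{B_k}$ carrying $\mathcal{A}_L$, define a unital $*$-homomorphism $\phi$ acting as $X$ on the $A_k$ factors and trivially elsewhere, and set $U(g)=\phi(U_L(g))$. That works and is arguably more conceptual, but it imports the full OAQEC decomposition as an unproven black box -- precisely the step you flag as the "main obstacle." The paper instead stays elementary: starting from the element-wise reconstructions $U(g)$, it derives $U(g_1)U(g_2)\Tr_R(\Pi_c)=U(g_1g_2)\Tr_R(\Pi_c)$, multiplies by the generalized inverse of $\Tr_R(\Pi_c)$ to conclude the group law holds on the support subspace $S$ of the reduced codespace projector, shows each $U(g)$ is block-diagonal with respect to $S\oplus S^\perp$ (a consequence of codespace preservation), and then redefines $U'(g)=\Pi_S U(g)\Pi_S+\Pi_S^\perp$ so the representation property holds globally. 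In effect the paper extracts only the one consequence of the subsystem decomposition it needs (agreement on $S$, triviality on $S^\perp$) without ever constructing the decomposition, which is what makes its argument self-contained; your approach buys a cleaner conceptual statement ($\phi$ is manifestly a representation of the whole algebra, not just of $G$) at the cost of heavier machinery. One caveat on your closing remark: for the full matrix algebra, correctability of $R$ does not give $\mathcal{V}=W\otimes\ket{\beta}_R$ with a pure product on $R$; the decoupling form is $\mathcal{V}\ket{\psi}=(W_A\ket{\psi})\otimes\ket{\chi}_{BR}$ for some tensor factor $\mathcal{H}_A\otimes\mathcal{H}_B\subseteq\mathcal{H}_{R^c}$ and a fixed state $\ket{\chi}$ that may be entangled between $B$ and $R$, so your shortcut construction should conjugate $U_L(g)$ into the $A$ factor and act as the identity on $B$ rather than assume the code states restrict to a product across the $R$/$R^c$ cut.
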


\begin{proof}
From \cref{lem:logical-unitary-to-physical-unitary}, we have a set of unitaries $\left\{U(g)\right\}_{g \in G}$ that are codespace-preserving, supported on $R^c$, and implement the respective unitaries $U_L(g)$.

Using the fact that the $U(g) V = V U_L(g)$ and the representation property of $U_L$, we have $U(g_1) U(g_2) V =  U(g_1 \cdot g_2) V $.
From this,
\begin{align}
    U(g_1) U(g_2) V &=  U(g_1 \cdot g_2) V  \\
    U(g_1) U(g_2) \Pi_c &=  U(g_1 \cdot g_2) \Pi_c  \\
    U(g_1) U(g_2) \Tr_R \Pi_c &=  U(g_1 \cdot g_2) \Tr_R \Pi_c ,
\end{align}
where we have multiplied by $V^\dagger$ on the right and then traced out $R$.
We can then multiply on the right by $\Tr_R(\Pi_c)^+$, the generalized inverse of $\Tr_R(\Pi_c)$, and use $\Tr_R(\Pi_c) \Tr_R(\Pi_c)^+ =: \Pi_{S}$, the projector onto $S$, which we define as the support of the partial trace of the codespace projector.
We obtain
\begin{align}
    U(g_1) U(g_2) \Pi_{S} &=  U(g_1 \cdot g_2)  \Pi_{S}.
\end{align}
Thus we have a representation on this subspace.
We now just need to consider the orthogonal complement $S^\perp$.

We claim the following: (1) $U(g)$ is block-diagonal with respect to the decomposition of $R^c = S \oplus S^\perp$, and (2) the block acting on $S^\perp$ can be chosen freely without affecting the codespace-preserving property of $U$ or its action on the codespace; in particular we can choose it to be $\id_{S^\perp}$ and thus act as a trivial representation of $G$.

Proof of (1):

We know that $U$ is codespace-preserving; thus
\begin{align}
(U \otimes \id_R ) \Pi_c &= \Pi_c (U \otimes \id_R )\\
    U \Tr_R(\Pi_c) &= \Tr_R(\Pi_c) U \\
    U \Pi_S &= \Tr_R(\Pi_c) U\Tr_R(\Pi_c)^+ \\
    \Pi_S^\perp U \Pi_S &= \Pi_S^\perp \Tr_R(\Pi_c) U\Tr_R(\Pi_c)^+ \\
    &= 0, 
\end{align}
where we traced out the $R$ subsystem on line two, and on the last line used $ \Tr_R(\Pi_c) = \Pi_S  \Tr_R(\Pi_c) $ and $\Pi_S ^\perp \Pi_S = 0$.
Similarly one can show $\Pi_S U \Pi_S^\perp = 0$;  thus $U$ is block-diagonal as claimed.

Proof of (2):

Replace $U$ with $U' = \Pi_S U \Pi_S + \Pi_S^\perp$, where we use $\Pi_S^{(\perp)}$ as shorthand for $\Pi_S^{(\perp)} \otimes \id _R $.
First, we show that this is still codespace preserving by demonstrating $\Pi_S \Pi_c = \Pi_c$.

Given positive semidefinite operators $M$, $N$, it is known that $M^{1/2} N M^{1/2}$ is also positive semidefinite; thus $ \Pi_S^\perp \Pi_c \Pi_S^\perp  \succeq 0$.
However, it is also traceless:
\begin{align}
\Tr( \Pi_S^\perp \Pi_c \Pi_S^\perp )
    &=  \Tr_{R^c}( \Pi_S^\perp \Tr_R\Pi_c )\\
    &= 0,
\end{align}
But the trace of a positive semidefinite operator is only zero if the matrix equals zero, so $\Pi_S^\perp \Pi_c \Pi_S^\perp = |\Pi_S^\perp \Pi_c | ^2 = 0$; thus $\Pi_S^\perp  $ and $\Pi_c$ are orthogonal and $\Pi_S \Pi_c = \Pi_c$.
Finally,
\begin{align}
    U' \Pi_c 
    &= U' \Pi_S \Pi_c \\
    &=  U \Pi_c \\
    &=  \Pi_c U \\
    &=  \Pi_c \Pi_S U \\
    &=  \Pi_c U' ,
\end{align}
so $U'$, like $U$, is codespace-preserving.

Finally, the action on the codespace is the same:  $\Pi_c U' \Pi_c = \Pi_c \Pi_S U' \Pi_S \Pi_c = \Pi_c U \Pi_c$.
Unitarity also follows straightforwardly from the unitarity and block-diagonality of $U$, and thus we have shown the result.

\end{proof}

\section{Gauging isometry proofs} \label{app:gauge-proofs}

\subsection{Gauge-invariance of Wilson loops}
\begin{theorem}\label{app:ProofWilsonLoopsI}
All Wilson loops are gauge invariant and commute with each other.
\end{theorem}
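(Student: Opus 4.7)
The strategy is to first establish the basic commutation relations between Wilson link operators and the two multiplication representations $U^L$ and $U^R$, and then use a telescoping cancellation along the loop. For commutativity between distinct Wilson loops, I would simply observe that all Wilson link operators are simultaneously diagonal in the group-element basis.

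The key technical computation is to determine how a gauge transformation acts on a single Wilson link $W^e_{ij}$. Using the definitions $U^L_e(g)\ket{h}=\ket{gh}$, $U^R_e(g)\ket{h}=\ket{hg^{-1}}$, and $W^e_{ij}\ket{h}=r_{ij}(h)\ket{h}$, together with the homomorphism property $r(ab)=r(a)r(b)$, I would show
\begin{align*}
    U^L_e(g)^\dagger\, W^e\, U^L_e(g) &= r(g)\, W^e, \\
    U^R_e(g)^\dagger\, W^e\, U^R_e(g) &= W^e\, r(g)^{-1},
\end{align*}
where the matrix product is taken in the representation indices of $W^e$. By applying the same calculation to $W^{\bar e}_{ij}=(W^e_{ji})^\dagger$, one gets the mirror-image relations. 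Recalling that $A_v(g)$ acts on an incident edge as $U^R_e(g)$ when $v=e^+$ and as $U^L_e(g)$ when $v=e^-$, this means: conjugation by a gauge transformation at a vertex $v$ multiplies the Wilson link matrix $W^{\bar e}$ (for any edge $\bar e$ incident to $v$) by $r(g)$ on the side at which the traversal enters $v$ along the loop and by $r(g)^{-1}$ on the side at which it leaves.

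Next I would put this to work on a Wilson loop along a cycle $(v_1,\dots,v_n)$. Gauge invariance at a vertex $v$ outside the cycle is immediate because $W^l$ acts trivially on all incident edges. For a vertex $v_i$ inside the cycle that is not the "cyclic seam" between $W^{\bar e_n}$ and $W^{\bar e_1}$, exactly one incoming Wilson link factor picks up a factor $r(g)^{-1}$ on its right and the subsequent factor picks up $r(g)$ on its left, and these cancel inside the matrix product. The action at $v_1$ is slightly different: it conjugates the whole product by $r(g)$, but cyclicity of the trace absorbs this. The commutation of $W^l$ with $U_{v}(g)$ is automatic since $W^l$ contains no vertex operators. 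Finally, compatibility with the orientation convention of each edge of the loop should be checked by observing that both $W^e$ and $W^{\bar e}$ transform by the same "left at source of traversal / right at target of traversal" rule, so the cancellation does not depend on the direction of the underlying graph edge.

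For the commutation of two Wilson loops $W^l$ and $W^{l'}$ (possibly for different representations $r,r'$), I would note that each $W^e_{ij}$ is manifestly diagonal in the product basis $\bigotimes_e\ket{g_e}$, since it acts as the scalar $r_{ij}(g_e)$. Taking traces and sums of products of such operators preserves diagonality, so any two Wilson loops are simultaneously diagonal in this basis and therefore commute. The main obstacle I foresee is purely bookkeeping: keeping consistent track of which index of $r$ corresponds to which edge orientation, and making sure the conventions of $\bar e$ and $W^{\bar e}$ in the paper match those in the telescoping argument; once that dictionary is set, both claims are short calculations.
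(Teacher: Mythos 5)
Your proposal is correct and follows essentially the same route as the paper: establish how $U^{L/R}_e(g)$ conjugation multiplies $W^e$ and $W^{\bar e}$ by $r(g)$ or $r(g)^{-1}$ on the appropriate index, then cancel these factors pairwise at each vertex of the loop (with cyclicity of the trace handling the seam at $v_1=v_n$). Your diagonality observation for mutual commutativity is a welcome addition, since the paper's written proof only addresses gauge invariance and leaves the commutation claim implicit.
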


Note that 
\begin{align*}
    U_{e}^{R}(h)^{\dagger} \cdot W_{ij}^{e} \cdot U_{e}^{R}(h) &= \sum_{k} W_{ik}^{e} r_{kj}(h^{-1}),\\
    U_{e}^{L}(h)^{\dagger} \cdot W_{ij}^{e} \cdot U_{e}^{L}(h) &= \sum_{k}  r_{ik}(h)W_{kj}^{e}.
\end{align*}
So that after conjugation transpose and flipping indices, we get
\begin{align*}
    U_{e}^{R}(h)^{\dagger} \cdot W_{ij}^{\bar{e}} \cdot U_{e}^{R}(h) &= \sum_{k} r_{ik}(h)W_{kj}^{\bar{e}} \\
    U_{e}^{L}(h)^{\dagger} \cdot W_{ij}^{\bar{e}} \cdot U_{e}^{L}(h) &= \sum_{k} W_{ik}^{\bar{e}}r_{kj}(h^{-1})
\end{align*}

In order to proof the gauge-invariance, we need to show that every projector $\Pi_{v}$ for $v\in V_{1}$ commutes with the Wilson loops. If $v$ is not in the Wilson loop, then they commute trivially. Otherwise, there is an incoming $e^{-} \in E^{-}(v)$ and a outcoming $e^{+}\in E^{+}(v)$ that are part of the Wilson loop (note that $v$ is in $V_1$). Then we only need to show that 
\[ [\Pi_{v}, \left(W^{\overline{e^{-}}}W^{\overline{e^{+}}}\right)_{ij}]=0  \]
which follows from
\[ [ U^{R}_{e^{+}}(g^{})U^{L}_{e^{-}}(g^{}), \left(W^{\overline{e^{-}}}W^{\overline{e^{+}}}\right)_{ij}]=0\]
as the Wilson loop trivially commutes with the other parts of $A_v(g)$.

Consider
\newcommand{\Wp}{W^{\overline{e^+}}}
\newcommand{\Wm}{W^{\overline{e^-}}}
\newcommand{\URp}{U^{R}_{e^{+}}}
\newcommand{\URm}{U^{R}_{e^{-}}}
\newcommand{\ULp}{U^{L}_{e^{+}}}
\newcommand{\ULm}{U^{L}_{e^{-}}}

\begin{align*}
    \left(\URp(g^{})\ULm(g^{})\right)\left(\Wm\Wp \right)_{ij} \left(\URp(g^{})\ULm(g^{})\right)^{\dagger} &= 
    \sum_{k} \left( \ULm(g^{})\cdot \Wm_{ik} \cdot \ULm(g^{-1})   \right) \\
    &\quad \quad\quad \left( \URp(g^{})\cdot \Wp_{kj} \cdot \URp(g^{-1})   \right) \\
    &= \sum_{klm}\left( \Wm_{im}r_{mk}(g^{})\right)\left( r_{kl}(g^{-1})\Wp_{lj}\right) \\
    &= \sum_{lm}\left( \Wm_{im}r_{ml}(g\cdot g^{-1})\Wp_{lj}\right) \\
    &= \sum_{lm}\left( \Wm_{im}\delta_{lm}\Wp_{lj}\right) \\
    &= \left(\Wm\Wp\right)_{ij}
\end{align*}

\subsection{Flux free sector}
\begin{lemma}[Restatement of \cref{lemma:subgraph-gauging-lemma}] \label{proof:subgraphGaugingLemma}
\subgraphGaugingLemma
\end{lemma}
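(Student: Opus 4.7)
The plan is to prove the two defining properties of $\mathcal{A}_\mathcal{T}(\Gamma)$ separately: first that $O_\Gamma$ is supported on $\Gamma$ (so that $O_\Gamma \in \mathcal{A}_{\mathcal{S}'}(\Gamma)$), and second that $[O_\Gamma,\Pi_{GI}]=0$. I will keep using the standard fact (immediate from the definitions) that the operators $A_v(g)$ and $A_{v'}(g')$ mutually commute for any $v,v',g,g'$, because on any shared edge one acts via $U^R$ and the other via $U^L$, which commute, while on vertices they act on different subsystems. Note also that $A^\Gamma_v$ is literally the same operator as $A_v$ from \cref{def: gauge-trans-operator}, with the superscript only emphasizing the choice $v\in V_\Gamma$.

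For the support claim, I would split each factor as $A_v(g_v)=A_v^{\mathrm{in}}(g_v)\cdot A_v^{\mathrm{out}}(g_v)$, where
\[
A_v^{\mathrm{in}}(g_v)=U_v(g_v)\!\!\prod_{e\in E^+(v)\cap E_\Gamma}\!\!U^R_e(g_v)\!\!\prod_{e\in E^-(v)\cap E_\Gamma}\!\!U^L_e(g_v),
\]
and $A_v^{\mathrm{out}}(g_v)$ contains the remaining edge factors. These two factors act on disjoint subsystems and hence commute. Because the $A_v$'s for distinct $v\in V_\Gamma$ mutually commute, the products $B^{\mathrm{in}}(\mathbf{g})=\prod_{v\in V_\Gamma}A_v^{\mathrm{in}}(g_v)$ and $B^{\mathrm{out}}(\mathbf{g})=\prod_{v\in V_\Gamma}A_v^{\mathrm{out}}(g_v)$ are well-defined and act on disjoint subsystems—$B^{\mathrm{in}}$ on $V_\Gamma\cup E_\Gamma=\Gamma$, and $B^{\mathrm{out}}$ purely outside $\Gamma$. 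Since $Q_\Gamma\in\mathcal{A}_{\mathcal{S}'}(\Gamma)$, it commutes with $B^{\mathrm{out}}$, so the conjugation inside the integrand collapses to $B^{\mathrm{in}}(\mathbf{g})\,Q_\Gamma\,B^{\mathrm{in}}(\mathbf{g})^\dagger$, which is manifestly supported on $\Gamma$. Integration over $\mathbf{g}$ preserves this support.

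For gauge-invariance, it suffices to show that $A_u(h)\,O_\Gamma\,A_u(h)^\dagger=O_\Gamma$ for every $h\in G$ and every $u\in V_1$, as this implies $[A_u(h),O_\Gamma]=0$ and hence $[\Pi_u,O_\Gamma]=0$ and then $[\Pi_{GI},O_\Gamma]=0$. I split into two cases. If $u\in V_\Gamma$, I pull $A_u(h)$ into the integrand, commute it past all the $A_v(g_v)$ with $v\neq u$, and use the representation property $A_u(h)A_u(g_u)=A_u(hg_u)$ to produce $A_u(hg_u)$ on the left and $A_u(hg_u)^\dagger$ on the right. A change of variables $g'_u=hg_u$, using left-invariance of the Haar measure, absorbs $h$ and returns $O_\Gamma$. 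If instead $u\in V_1\setminus V_\Gamma$, then $u\notin V_0$ either, so the hypothesis $E_\Gamma^+\cup E_\Gamma^-\subseteq V_\Gamma\cup V_0$ forces $E(u)\cap E_\Gamma=\varnothing$ (any edge of $E_\Gamma$ touching $u$ would place $u$ in $V_\Gamma\cup V_0$). Hence the support of $A_u(h)$, namely $\{u\}\cup E(u)$, is disjoint from $\Gamma$, so $A_u(h)$ trivially commutes with $O_\Gamma$.

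I do not expect any of these steps to be a serious obstacle; the main subtlety is just carefully justifying the $\mathrm{in}/\mathrm{out}$ factorization of $\prod_v A_v(g_v)$, which requires explicit use of the commutation of the $A_v$'s on shared edges. Once that is in hand, the support statement follows from cancellation of the outside factors against their adjoints, and gauge-invariance is a clean two-case argument: Haar-translation in the $V_\Gamma$ case and disjoint supports in the $V_1\setminus V_\Gamma$ case.
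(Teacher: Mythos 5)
Your proof is correct and follows essentially the same route as the paper's: the in/out factorization you make explicit is exactly what the paper's appendix identities relating $A_v(g_v)A_v^\Gamma(h_v)$ to $A_v^\Gamma(g_vh_v)$ times leftover edge factors encode, your Haar change of variables for $u\in V_\Gamma$ matches the paper's substitution $h_v\to g_v h_v$, and your disjoint-support argument for $u\in V_1\setminus V_\Gamma$ is the paper's opening observation (stated there slightly less carefully for $v\in V\setminus V_\Gamma$). Your per-vertex conjugation $A_u(h)\,O_\Gamma\,A_u(h)^\dagger=O_\Gamma$ is a mild streamlining of the paper's one-sided computation of $O_\Gamma\prod_{v\in V_\Gamma}\int dg_v\,A_v(g_v)$, and your literal reading of $A_v^\Gamma$ as the full $A_v$ (the appendix clearly intends the products restricted to $E_\Gamma$) is harmless, since your cancellation of the outside factors against their adjoints shows both readings define the same operator $O_\Gamma$.
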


\begin{proof}\label{proof:subgraph-gauging-lemma}
Clearly $O_{\Gamma}$ has support only on $\Gamma$. We need to show that $[O_{\Gamma},\Pi_{GI}]=0$. For $v\in V\setminus V_{\Gamma}$, $A_{v}(g_v)$ commutes with $O_{\Gamma}$ since the only possible overlap is on edges but all edges of $E_{\Gamma}$ avoid vertices on $V\setminus V_{\Gamma}$. Thus, we only need to keep track of the $v\in V_{\Gamma}$, i.e. we need to show
\begin{align}
\label{eq: gamma-commutator}
   [O_\Gamma,\prod_{v\in V_\Gamma}\int dg_v A_v(g_v)]=0.
\end{align}
Now notice that 
\begin{align}
    A_{v}(g_v)A^{\Gamma}_{v}(h_v) &= 
    A_{v}^{\Gamma}(g_v h_v)  \bigotimes_{e \in E^{+}(v) \setminus E_{\Gamma}} U_{e}^{R}(g_{v}^{}) \bigotimes_{f \in E^{-}(v) \setminus E_{\Gamma}} U_{f}^{L}(g_{v}^{}) \label{eq:1}\\
    \left(A^{\Gamma}_{v}(h_v)\right)^{\dagger}A_{v}(g_v) &= \left(A_{v}^{\Gamma}(g_v^{-1} h_v) \right)^{\dagger} \bigotimes_{e \in E^{+}(v) \setminus E_{\Gamma}} U_{e}^{R}(g_{v}^{}) \bigotimes_{f \in E^{-}(v) \setminus E_{\Gamma}} U_{f}^{L}(g_{v}^{}) \label{eq:2}
\end{align}
then 
\begin{align*}
    O_{\Gamma} \DgvG A_v(g_v)&= \DgvG\dd{h_v}
    \left(\prod_{v\in V_{\Gamma}} A^{\Gamma}_{v}(h_v)\right) 
    Q_{}
    \prod_{v\in V_{\Gamma}} \left(A^{\Gamma}_{v}(h_v)^{\dagger} A_{v}(g_v)\right).
\end{align*}
We can use \cref{eq:2}, make a change of variables $h_v \rightarrow g_v \cdot h_v$ and then use \cref{eq:1} to arrive to the desired result. 
\end{proof}





\begin{theorem}[Restatement of \cref{thm: gauging-map-image-is-flux-free}] \label{proof:gauging-map-image-is-flux-free}
\fluxfreegauge
\end{theorem}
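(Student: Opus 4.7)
The plan is to prove the two inclusions separately. The inclusion $\image[\mathcal{G}] \subseteq \Pi_{FF}\mathcal{H}_\Lambda$ is essentially immediate from \cref{lem: gauge-map-wilson-link}: Wilson loop and NGC-to-NGC Wilson line operators are built as products of Wilson link operators $W^e_{ij}$, and since each such operator can be commuted past $\Pi_{GI}$ (being gauge invariant) and then acts on the trivial edge configuration $\bigotimes_e\ket{I}_e$ through $\mathcal{G}$, it produces exactly the coefficients $r_{ij}(I)$. Taking traces and ordered products along loops and NGC-to-NGC paths gives precisely the eigenvalues defining the flux-free sector, so every state in the image of $\mathcal{G}$ lies there.

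For the reverse inclusion, I would first argue that the flux-free sector is spanned by states of the form $\Pi_{GI}(\ket{\psi}_V\otimes\bigotimes_{e\in E}\ket{g_e})$ where the edge configuration $\bigotimes_e\ket{g_e}$ lies in the support of $\Pi_{FF}^E$ (the edges-only flux-free projector, cf.\ the last comment of \cref{def:flux-free}). It therefore suffices to show that every such state lies in $\image[\mathcal{G}]$. The reduction strategy is to relate an arbitrary flux-free edge configuration to the trivial one $\bigotimes_e\ket{I}_e$ by means of a product of gauge transformations. Invoking Lemma 3.3 of Ref.~\cite{Cui_2020}, there exist group elements $\{h_v\}_{v\in V}$ such that
\begin{align*}
    \bigotimes_{e\in E}\ket{g_e} \;=\; \Bigl(\prod_{v\in V}A_v^E(h_v)\Bigr)\bigotimes_{e\in E}\ket{I}_e,
\end{align*}
up to an operator acting purely on the vertex Hilbert spaces $\mathcal{H}_{V}$ (which can be absorbed into a redefinition of $\ket{\psi}_V$). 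Here $A_v^E(h_v)$ denotes the action of $A_v(h_v)$ on the edge sites alone.

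The remaining task is to dispose of these gauge transformations. For $v\in V_1$, the full transformations $A_v(h_v)$ commute past $\Pi_{GI}$ and act as the identity on the projected state; by replacing $\ket{\psi}_V$ with $\bigl(\prod_{v\in V_1}U_v(h_v)^\dagger\bigr)\ket{\psi}_V$ we reduce to the case where $h_v=I$ for all $v\in V_1$. The NGC transformations for $v\in V_0$ are more subtle: the flux-free property of $\bigotimes_e\ket{g_e}$ constrains the $\{h_v\}_{v\in V_0}$. Concretely, computing the eigenvalue of an NGC-to-NGC Wilson line on $\prod_v A_v^E(h_v)\bigotimes_e\ket{I}_e$ shows that the product $h_u^{-1}h_w$ must equal $I$ along any such path between $u,w\in V_0$; connectedness of $\Lambda\setminus(V_0\cup E_0)$ guarantees that such a path exists for every pair, forcing $h_v=h$ to be a single common element. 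Then $\prod_{v\in V_0}A_v^E(h)=A_{V_0}(h)$ up to vertex factors, and \cref{thm:gauge-map-duality} lets us commute $A_{V_0}(h)$ through $\mathcal{G}$ as a global symmetry $U_V(h)$, which is again absorbed into $\ket{\psi}_V$. The state thus equals $\mathcal{G}\ket{\psi'}_V$ for a suitable $\ket{\psi'}_V$.

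The main obstacle will be the step establishing that the $\{h_v\}_{v\in V_0}$ collapse to a single group element: this is where both the flux-free hypothesis and the connectedness of $\Lambda\setminus(V_0\cup E_0)$ must be used in tandem, and where care is needed for nonabelian $G$ to ensure that the equality holds up to conjugation by the (absorbable) GC elements $h_{v\in V_1}$ rather than only modulo conjugacy classes.
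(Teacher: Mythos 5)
Your proposal is correct and follows essentially the same route as the paper's proof: the forward inclusion via \cref{lem: gauge-map-wilson-link}, the basis claim for the flux-free sector, Lemma 3.3 of Ref.~\cite{Cui_2020} to relate a flux-free edge configuration to the trivial one by gauge transformations, absorption of the $V_1$ transformations into $\Pi_{GI}$ and $\ket{\psi}_V$, and the triviality of NGC-to-NGC Wilson lines together with connectedness of $\Lambda\setminus(V_0\cup E_0)$ to force a single common element on $V_0$. The only cosmetic difference is the last step: you invoke \cref{thm:gauge-map-duality} to trade $A_{V_0}(h)$ for a global symmetry action (exactly as in the paper's main-text sketch), whereas the appendix proof instead regenerates the residual $E_0$ edge configuration via $V_1$ gauge transformations and absorbs them into $\Pi_{GI}$ -- both are valid.
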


The inclusion 
\[ \image[\mathcal{G}] \subseteq  \image[\Pi_{FF}] \]
was shown in the main text.
To prove the other inclusion, we need Lemma 3.3 of Ref.~\cite{Cui_2020}, which makes reference to group elements defined by Wilson lines defined over cycles in the following way.
Consider a Wilson line along a cycle such that the trace is not taken at the end, distinguishing it from a Wilson loop.
Then its eigenstates are group states of the edges along the cycle that have a well-defined product (or ``holonomy''), which is the group element $g_\gamma$ assigned to the cycle $\gamma$ for that state.
We refer to the set of such cycles that start and end at a given point $v\in V$ as $\mathcal{C}_v$.
Then Lemma 3.3 of Ref.~\cite{Cui_2020} states that two states $\ket{\psi}, \ket{\psi'} \in \mathcal{H}_E$ are related by a gauge transformation $\ket{\psi} =  \prod_{v\in V} \bar{A}_{v}(h_v) \ket{\psi'}$, with $\bar{A_v}$ just the action of $A_v$ restricted to edge degrees of freedom, if and only if for any $v\in V$ there is an element $h\in G$ such that for all $\gamma \in \mathcal{C}_v$, the eigenvalue $g_\gamma$ for $\ket\psi$ and the eigenvalue $g'_\gamma$ for $\ket{\psi'}$ are related by $g_\gamma = h g_\gamma h^{-1}$.

For now, we focus on the state of the edges, $\bigotimes_{e\in E} \ket{ g_e } \in \mathcal{H}_E$.
For this state, and any point $v\in V_1$ and any cycle $\gamma \in \mathcal{C}_v$, we have $\tr r(g_\gamma) = \tr r(I)$.
By character theory, and the faithfulness of $r$, this means $g_\gamma = I$.
Then by the above lemma, this is related by a gauge transformation to the state with all edge states set to $I$, i.e.\ 
\begin{align}
 \bigotimes_{e\in E} \ket{ g_e }= \prod_{v\in V} \bar{A}_{v}(h_v) \bigotimes_{e\in E} \ket{I_e}  \label{eq:equivalent-up-to-gauge},
\end{align}
where $\bar{A}_v(h)$ is the gauge constraint operator $A_v(h)$ restricted to edges only, i.e.\ in the full Hilbert space we have $ A_{v}(h) = U_{v}(h) \otimes \bar{A}_{v}(h)$.

Our goal now is to show that the state $\Pi_{GI} \ket{\psi}_V \otimes \bigotimes_{e\in E} \ket{ g_e }$ lies in the image of $\mathcal{G}$. Let's start by using \cref{eq:equivalent-up-to-gauge}:
\begin{align*}
    \Pi_{GI} \ket{\psi}_V \otimes \bigotimes_{e\in E} \ket{ g_e }
    &= \Pi_{GI} 
    \prod_{v\in V} \bar{A}_{v}(h_v)
    \ket{\psi}_V \otimes \bigotimes_{e\in E} \ket{I_e} \text{, using \cref{eq:equivalent-up-to-gauge}}\\
    &= \Pi_{GI}\  
    \prod_{v\in V_1} {A}_{v}(h_v)  {U}_{v}(h_v^{-1})  
    \prod_{v\in V_0} \bar{A}_{v}(h_v) 
    \ket{\psi}_V \otimes  \bigotimes_{e\in E} \ket{I_e} \text{, by def.\ of $\bar{A}$} \\
    &= 
    \Pi_{GI}
    \ \prod_{v\in V_0} \bar{A}_{v}(h_v) 
    \ \prod_{v\in V_1} {U}_{v}(h_v^{-1}) \ket{\psi}_V \otimes \bigotimes_{e\in E} \ket{I_e} \\
    &\hspace{3cm} \text{, using that $\Pi_{GI}A_{v}(g) = \Pi_{GI}$ for $v\in V_1$.  }\\
    &= 
    \Pi_{GI}
    \ \ket{\psi'}_V 
    \otimes \bigotimes_{e\in E_1} \ket{I} 
    \otimes \bigotimes_{e\in E_0} \ket{h_e} 
\end{align*}
where $h_{e} = h_{e^{+}}^{-1} $ and $\ket{\psi'} _V   =\prod_{v\in V_1} {U}_{v}(h_v^{-1}) \ket{\psi}_V$.

As the NGC to NGC Wilson lines from $e$ to $f$, with $e,f\in E_0$, are trivial, this implies that $h_{e}h_{f}^{-1}=I$ whenever there is a NGC to NGC path from $e$ to $f$ which is guaranteed because $(V_1,E_1)$ is connected. Therefore, there exist some $h\in G$ such that for every $e\in E_{0}$ we have $h_e = h$. Note that
\[ \prod_{v \in V_1}\bar{A}_{v}(h) \bigotimes_{e\in E}\ket{I}_e = \bigotimes_{e\in E_1} \ket{I}_e 
    \otimes \bigotimes_{e\in E_0} \ket{h}_e   \]
then
\[ \ket{\psi'}_V 
\otimes \bigotimes_{e\in E_1} \ket{I}_e 
    \otimes \bigotimes_{e\in E_0} \ket{h}_e =
    \prod_{v \in V_1}{A}_{v}(h) \
    \prod_{v \in V_1}{U}_{v}(h)^{\dagger}
    \ket{\psi'}_V 
\otimes \bigotimes_{e\in E}\ket{I_e}_e \] 
so that 
\[\Pi_{GI} \ket{\psi}_V \otimes \bigotimes_{e\in E} \ket{ g_e }_e = 
\Pi_{GI} \ 
    \ket{\psi''}_V
\otimes \bigotimes_{e\in E}\ket{I_e} = \mathcal{G}\ket{\psi''}_V\]
with $\ket{\psi''}_V = \prod_{v \in V_1}{U}_{v}(h)^{\dagger}
    \ket{\psi'}_V $.
    
Thus we have shown
\[ \image[\mathcal{G}] = \image[\Pi_{FF}]. \]


\begin{theorem} \label{thm: undressing-local-ops}
Consider a subgraph $\Gamma= (V_{\Gamma}, E_{\Gamma})$ such that for every point $v\in V_1\setminus V_\Gamma$ there is a path between $v$ and $V_0$ fully contained in $\Lambda\setminus\Gamma$. Let $O_{\Gamma} $ be an 
operator supported on $\Gamma$ that commutes with $\Pi_{FF}=\Gmc\Gmc^{\dagger}$. Then there exists an operator $O$ acting on $\mathcal{H}_V$ with support on $V_{\Gamma}$ such that 
\[ O_{\Gamma}\mathcal{G} = \mathcal{G}O.\]
\end{theorem}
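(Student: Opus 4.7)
The plan is to construct the undressed operator explicitly as $O \defi \mathcal{G}^{\dagger} O_\Gamma \mathcal{G}$ (assuming $\mathcal{G}$ has been normalized so that $\mathcal{G}^\dagger \mathcal{G} = \id_V$), and then to verify its two required properties: that $\mathcal{G} O = O_\Gamma \mathcal{G}$, and that $O$ is supported only on $V_\Gamma$.

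The implementation property follows quickly. By \cref{thm: gauging-map-image-is-flux-free} we may identify $\Pi_{FF} = \mathcal{G}\mathcal{G}^{\dagger}$, so combining the hypothesis $[O_\Gamma,\Pi_{FF}]=0$ with $\Pi_{FF}\mathcal{G}=\mathcal{G}$ gives
\[
    \mathcal{G} O = \mathcal{G}\mathcal{G}^{\dagger} O_\Gamma \mathcal{G} = \Pi_{FF} O_\Gamma \mathcal{G} = O_\Gamma \Pi_{FF} \mathcal{G} = O_\Gamma \mathcal{G}.
\]

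The substantive part is showing $O$ is supported on $V_\Gamma$, which reduces to proving that $[O,X_u]=0$ for every $u\in V\setminus V_\Gamma$ and every $X_u\in \mathcal{A}_\mathcal{S}(u)$. For $u\in V_1\setminus V_\Gamma$, the hypothesis on $\Gamma$ provides a path from $u$ to some $v_0\in V_0$ lying entirely in $\Lambda\setminus\Gamma$. Applying the dressing theorem (\cref{thm: operator-dressing-theorem}) along this path yields a dressed operator $X'$ whose support is disjoint from $\Gamma$ and which satisfies $\mathcal{G} X_u = X' \mathcal{G}$. Then
\[
    \mathcal{G}[O,X_u] = O_\Gamma \mathcal{G} X_u - X' \mathcal{G} O = O_\Gamma X' \mathcal{G} - X' O_\Gamma \mathcal{G} = [O_\Gamma, X']\mathcal{G} = 0,
\]
since $O_\Gamma$ and $X'$ act on disjoint subsystems. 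Injectivity of $\mathcal{G}$ (via \cref{thm: gauging-map-is-isometry}) then yields $[O,X_u]=0$ as desired.

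The main obstacle I foresee is the analogous argument for $u\in V_0\setminus V_\Gamma$ when $\mathcal{H}_{v\in V_0}$ is nontrivial, since \cref{thm: operator-dressing-theorem} as stated dresses operators only on vertices in $V_1$. One should adapt its proof essentially verbatim by dressing $X_u$ with an NGC-to-NGC Wilson line running from $u$ to a second $V_0$ vertex, choosing the path in $\Lambda\setminus\Gamma$ via an obvious strengthening of the hypothesis. In the main application of this theorem to holographic codes, where $\mathcal{H}_{v\in V_0}$ is trivial, this extension is unnecessary. Combining the two steps then delivers an $O$ with both the required implementation and support properties, completing the proof.
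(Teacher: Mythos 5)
Your construction $O \defi \mathcal{G}^\dagger O_\Gamma \mathcal{G}$, the implementation argument via $\Pi_{FF}\mathcal{G}=\mathcal{G}$, and the commutator argument for $u\in V_1\setminus V_\Gamma$ using the dressing theorem along a path avoiding $\Gamma$ all coincide with the paper's proof. The gap is in how you handle $u\in V_0\setminus V_\Gamma$: you propose to prove a Wilson-line--dressed variant of \cref{thm: operator-dressing-theorem} for $V_0$ vertices and to strengthen the hypothesis of the theorem so that such a line can be routed around $\Gamma$. That would prove a different (weaker) statement than the one asserted, and the extra machinery is not needed.

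The paper closes this case by a direct computation rather than a dressing argument. Writing out
\[
O =
\left(\prod_{v\in V_1}\int_{G}\!\dd{g}_v\int_G\dd{h}_v \right)
\left(\prod_{v\in V_1} U_{v}^{\dagger}(g_v)\right)
\left(\Tr_{E}\bigl[O_{\Gamma} \prod_{e} \ketbra{h_{e^{-}}h_{e^{+}}^{-1}}{g_{e^{-}}g_{e^{+}}^{-1}}_e\bigr] \right)
\left(\prod_{v\in V_1} U_{v}^{}(h_v)\right),
\]
one sees that the partial trace over $E$ leaves an operator supported on $V_\Gamma$, and the only additional factors are the $U_v$ with $v\in V_1$; this is because $\Pi_{GI}$ involves gauge transformations only at GC vertices, so conjugation by $\mathcal{G}$ can spread support onto $V_1$ but never onto $V_0$. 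Hence $\textrm{support}(O)\subseteq V_\Gamma\cup V_1$ automatically, and $O$ acts trivially on all of $V_0$ outside $V_\Gamma$ with no hypothesis on paths to a second $V_0$ vertex. Intersecting this with the conclusion of your $V_1$ argument, $\textrm{support}(O)\subseteq \bigl(V_0\cup V_\Gamma\bigr)\cap\bigl(V_\Gamma\cup V_1\bigr)=V_\Gamma$, which completes the proof as stated.
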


\begin{proof}
Consider the operator $O = \mathcal{G}^{\dagger} \, O_{\Gamma} \, \mathcal{G}$. Then 
\[
\mathcal{G}O= \Gmc^{}\Gmc^{\dagger} O_{\Gamma} \Gmc =  O_{\Gamma}\Gmc^{}\Gmc^{\dagger} \Gmc =  O_{\Gamma}\Gmc^{}.
\]
Next consider any operator $T_v$ with support on $v \in V_1\setminus V_{\Gamma}$. Then one can use \cref{thm: operator-dressing-theorem} with the subgraph $\mathcal{P}_v$ being the path from $v$ to $V_0$ that avoids ${\Gamma}$ so that $\Gmc T_{v} = T_{\mathcal{P}_v} \Gmc$. Then we can write
\begin{align*}
    O_{\Gamma}\, T_{\mathcal{P}_v} \,\Gmc & = \Gmc O \, T_{v}, \\
    T_{\mathcal{P}_v} \,O_{\Gamma}\,\Gmc & = \Gmc T_{v} \, O.
\end{align*}
Note that as $\tilde{O}$ and $O_{\mathcal{P}_v}$ have non-overlapping support they trivially commute, thus
\[
[O, T_v]  = \Gmc^{\dagger}\Gmc [O, T_v] =  \Gmc^{\dagger}[O_{\Gamma}, T_{\mathcal{P}_v}]\Gmc = 0 .
\]
which means that $O$ commutes with all $T_v$ for $v\in V_{1}\setminus V_{\Gamma}$ therefore $O$ acts as $\id$ on $\mathcal{H}_v$. 

Then one can see that $O$ acts trivially on $V_0 \setminus V_{\Gamma}$ by explicitly expanding its definition:
\[
O = 
\left(\prod_{v\in V_1}\int_{G}\!\dd{g}_v\int_G\dd{h}_v \right)
\left(\prod_{v\in V_1} U_{v}^{\dagger}(g_v)\right)
\left(\Tr_{E}[O_{\Gamma} \prod_{e} \ketbra{h_{e^{-}}h_{e^{+}}^{-1}}{g_{e^{-}}g_{e^{+}}^{-1}}_e] \right)
\left(\prod_{v\in V_1} U_{v}^{}(h_v)\right).
\]
The partial trace over $E$ of the operator with support on $\Gamma$ has support on $V_{\Gamma}=\Gamma \setminus E$. Then as the unitaries multiplying the partial trace has support on $V_1$, this implies that the full operator has support on $V_{\Gamma}\cup V_1$.

The above two results imply that $O$ has support on 
\begin{align*}
    \textrm{support}(O)&\subseteq \left(V\setminus \left(V_1\setminus V_{\Gamma} \right)\right) \cap \left(V_{\Gamma}\cup V_1\right)\\
    & \subseteq\left(V_0 \cup V_{\Gamma} \right) \cap \left(V_{\Gamma}\cup V_1\right) = V_{\Gamma}.
\end{align*}
\end{proof}

\subsection{Other flux sectors}
\label{subsec: other-flux-sectors-proofs}

\begin{theorem}
\label{thm: general-twisted-map-is-an-isometry}
If $V_0\neq\varnothing$ and $\Lambda$ is connected, then $\mathcal{G}_{\phi}$ is an isometry up to an overall normalization factor.
\end{theorem}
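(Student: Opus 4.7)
The plan is to adapt the proof of \cref{thm: gauging-map-is-isometry} essentially verbatim, replacing the fiducial edge state $\ket{I}_e$ with $\ket{\phi_e}_e$ everywhere and tracking how the delta-function constraints deform. Concretely, I would form the operator $\Pmc_\phi = \mathcal{G}_\phi^\dagger \mathcal{G}_\phi : \mathcal{H}_V \to \mathcal{H}_V$ and expand it as
\begin{align*}
\Pmc_\phi &= \bra{\{\phi_e\}}_E \Pi_{GI} \ket{\{\phi_e\}}_E \\
&= \int \prod_{v\in V_1}\dd{g_v} \prod_{u\in V_1} U_u(g_u) \prod_{e\in E} \mel{\phi_e}{U_e^L(g_{e^-}) U_e^R(g_{e^+})}{\phi_e}_e,
\end{align*}
with the convention $g_v = I$ for $v\in V_0$ since those gauge transformations are not in the constraint. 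Using $U_e^L(g_{e^-}) U_e^R(g_{e^+}) \ket{\phi_e} = \ket{g_{e^-} \phi_e g_{e^+}^{-1}}$, each edge factor becomes $\braket{\phi_e}{g_{e^-} \phi_e g_{e^+}^{-1}}_e$, which is nonzero precisely when $g_{e^-} = \phi_e\, g_{e^+}\, \phi_e^{-1}$.

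Next I would show that this twisted constraint, together with $V_0 \neq \varnothing$ and connectedness of $\Lambda$, forces $g_v = I$ for every $v \in V_1$. The base case: for any edge $e$ with one endpoint in $V_0$, the corresponding $g$ on that endpoint is $I$, and the twisted equation $g_{e^-} = \phi_e\, g_{e^+}\, \phi_e^{-1}$ then propagates to give $g$ equal to $I$ on the $V_1$ endpoint (conjugation of $I$ is $I$, independent of $\phi_e$). Then induction along any path in $\Lambda$: if $g_{v}= I$ at one end of an edge $e$, the constraint forces $g$ at the other end to equal $\phi_e^{\pm 1} I \phi_e^{\mp 1} = I$ as well. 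Since $\Lambda$ is connected and $V_0 \neq \varnothing$, every vertex in $V_1$ is path-connected to some vertex in $V_0$, so $g_v = I$ for all $v \in V_1$.

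With all $g$'s pinned to $I$, the $\prod_u U_u(g_u)$ factor collapses to $\id_{\mathcal{H}_V}$, and the remaining integral is simply a number $\kappa_\phi^2$ (finite because each delta-type constraint removes one group integration, and the remaining integration just contributes the volume/count of the solution set). Thus $\Pmc_\phi = \kappa_\phi^2\, \id_{\mathcal{H}_V}$, and rescaling $\mathcal{G}_\phi \to \kappa_\phi^{-1} \mathcal{G}_\phi$ yields an isometry.

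The only genuinely new step compared to the $\phi \equiv I$ case is checking that the \emph{twisted} delta constraints still trivialize in the same way; the main obstacle is making the propagation argument rigorous when the edges are oriented arbitrarily and $\phi_e$ is a generic group element, but this reduces to the observation above that the conjugation action $g \mapsto \phi_e g \phi_e^{-1}$ fixes $I$ regardless of $\phi_e$. Everything else (non-negativity of the normalization, finiteness when $G$ is a compact Lie group via the Haar measure, and the formula $\kappa_\phi^2 = |G|^{-|V_1|}$ for finite $G$) follows exactly as in \cref{thm: gauging-map-is-isometry}.
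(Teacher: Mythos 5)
Your proposal follows essentially the same route as the paper's proof: both compute $\mathcal{G}_\phi^\dagger\mathcal{G}_\phi$, observe that each edge contributes a constraint equivalent to $g_{e^-}=\phi_e\,g_{e^+}\,\phi_e^{-1}$, and use the fact that conjugation fixes the identity together with connectedness and $V_0\neq\varnothing$ to pin all $g_v$ to $I$, reducing to the untwisted normalization constant. Your write-up is correct and in fact spells out the propagation-along-paths step slightly more explicitly than the paper does.
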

\begin{proof}
Similar to before, consider 
\[ \Pmc_{\phi} = \Gmc^{\dagger}_{\phi} \Gmc^{}_{\phi}. \]
We want to show that the above map is still proportional to the identity.

Let us unpack $\mathcal{G}|\psi\rangle_V$ by plugging in the explicit form of $\Pi_{GI}$ (see definition \ref{def: gauging-definition}).
\begin{align*}
    \mathcal{G}_{\phi}|\psi\rangle_V =
    \Dgv
    \prod_{v\in V_{1}} U_v(g_v)\ket{\psi}_V\bigotimes_{
    e\in E}|g^{}_{e^-}\phi_{e}g^{-1}_{e^+}\rangle_e
\end{align*}

From where we can read 
\begin{align*}
    \Pmc_{\phi}=
    \Dgv
    U_v(g_v)
    \prod_{e\in E}\delta(\phi_{e}^{-1}g^{}_{e^-}\phi_{e}g^{-1}_{e^+})
\end{align*}

The effect of the delta distributions is to ensure that the only terms contributing in the integrations are those with $\phi_{e}g_{e^{+}}=g_{e^{-}}\phi_{e}^{}$ for every vertex $E$. As usual if $e^{+}\in V_0$, we have $g_{e^{+}}=I$. From the last observation, we see that actually all $g_{v}$ are set to one. Then we get the same constant as the untwisted case $\phi_{e} = I$.

\end{proof}

\begin{theorem}
\label{thm: general-twisted-gauge-global-duality}
Consider a system transforming under a global symmetry, $(\mathcal{S}, G, \{U_v\})$  with $\mathcal{S}=(V, \{\mathcal{H}_v\})$ with $\Lambda=(V,E)$ and labeled by $\alpha$ such that $V_0 \neq \varnothing $. Let $\mathcal{T}$ be obtained from gauging $\mathcal{S}$. Let $\mathcal{G}$ be the corresponding gauging isometry. We have 
\begin{align*}
    \mathcal{G}_{\phi}U_{V}(h) = A_{V_0}(h)\mathcal{G}_{^{h}\!\phi}.
\end{align*}
where $^{h}\!\phi: e \longmapsto  h^{-1}\cdot \phi_{e} \cdot h^{}$.

\end{theorem}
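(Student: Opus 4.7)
The plan is to verify the identity directly on an arbitrary state $\ket{\psi}_V \in \mathcal{H}_V$ by unpacking both sides using the explicit form of the gauge-invariance projector. Specifically, I would start from the left-hand side, $\mathcal{G}_\phi U_V(h)\ket{\psi}_V$, expand $\Pi_{GI}$ as $\int\prod_{v\in V_1}\dd{g_v}\prod_{v\in V_1} A_v(g_v)$ acting on the trivially-initialized edges $\bigotimes_{e\in E}\ket{\phi_e}_e$, and then perform a group-valued change of variables to absorb the action of $U_{V_1}(h)$ into the integration. The strategy mirrors the proof of \cref{thm:gauge-map-duality}, but with the added bookkeeping of the nontrivial edge initial states $\ket{\phi_e}$.

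The key computation proceeds as follows. First, split $U_V(h) = U_{V_0}(h) U_{V_1}(h)$; since $U_{V_0}(h)$ commutes with every $A_v(g_v)$ for $v\in V_1$, pull it to the far left. Next, use the gauge action on edges ($U^L_e$ on the $e^-$ side, $U^R_e$ on the $e^+$ side) to rewrite each edge ket as $\ket{g^{}_{e^-}\phi_e g^{-1}_{e^+}}_e$ for $e\in E_1$, and as $\ket{g^{}_{e^-}\phi_e}_e$ for $e\in E_0$ (where $g_{e^+}=I$ by convention since $e^+\in V_0$). Then change variables $g'_v = g_v h$ for each $v\in V_1$; this converts $\prod_{v\in V_1} U_v(g_v h)$ into $\prod_{v\in V_1} U_v(g'_v)$ and transforms the edge kets as
\begin{align*}
\ket{g^{}_{e^-}\phi_e g^{-1}_{e^+}}_e &\longmapsto \ket{g'_{e^-}(h^{-1}\phi_e h)(g'_{e^+})^{-1}}_e \quad (e\in E_1),\\
\ket{g^{}_{e^-}\phi_e}_e &\longmapsto \ket{g'_{e^-}(h^{-1}\phi_e h)h^{-1}}_e \quad (e\in E_0).
\end{align*}

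The $E_1$ factors are already exactly what one obtains from applying the gauging map to the twisted flux assignment ${}^h\!\phi_e = h^{-1}\phi_e h$. For the dangling edges $e\in E_0$, the residual trailing $h^{-1}$ can be pulled out as an explicit $U^R_e(h)$ acting from the left (since $U^R_e(h)\ket{k}_e = \ket{kh^{-1}}_e$); this is precisely the piece needed to upgrade the prefactor $U_{V_0}(h)$ to the full asymptotic symmetry transformation $A_{V_0}(h) = U_{V_0}(h)\prod_{e\in E_0} U^R_e(h)$, using that for $v\in V_0$ the only incident edges are dangling edges with $v = e^+$, so $A_v(h) = U_v(h)\prod_{e\in E_0(v)} U^R_e(h)$.

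The main obstacle is keeping the non-abelian conjugations straight: one must be careful that the conjugation pattern that arises, $h^{-1}\phi_e h$, matches the definition ${}^h\!\phi$ and that the left/right multiplication structure of $U^{L}_e, U^{R}_e$ is consistent with the convention that dangling edges are oriented from $V_0$ into $V_1$. Once this bookkeeping is verified, assembling the pieces yields $\mathcal{G}_\phi U_V(h)\ket{\psi}_V = A_{V_0}(h)\mathcal{G}_{{}^h\!\phi}\ket{\psi}_V$, which proves the claim. As a sanity check, setting $\phi\equiv I$ recovers the untwisted duality of \cref{thm:gauge-map-duality}, and for $h$ in the centralizer of $\phi(E)$ one finds ${}^h\!\phi = \phi$, recovering the ordinary gauge/global duality as noted in the text.
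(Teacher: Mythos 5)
Your proposal is correct and follows essentially the same route as the paper's proof: expand $\Pi_{GI}$, split off $U_{V_0}(h)$, change variables $g'_v = g_v h$ on the $V_1$ integrations, and recognize the residual $h^{-1}$ on the dangling edges as $\prod_{e\in E_0}U^R_e(h)$ completing $A_{V_0}(h)$. The bookkeeping of the conjugation $h^{-1}\phi_e h$ and the $E_0$/$E_1$ split matches the paper exactly.
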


\begin{proof}
%
Expanding the left hand side we get
\begin{align*}
    LHS=\int \prod_{v\in V_{1}}\dd{g_v} \left(\prod_{v\in V_1}U_v(g_vh)\right)
    U_{V_0}(h)
    \bigotimes_{e\in E_1 }|g^{}_{e^-}\phi_{e}g^{-1}_{e^+}\rangle_e\bigotimes_{e\in E_0 }|g_{e^-}\phi_{e}\rangle_{e}
\end{align*}
where $E_0  = \{e \in E | e^{+}\in V_0\}$ and $E_1  = E\setminus E_0 $ and $U_{V_0}(h) = \prod_{v\in V_{0}} U_{v}(h)$.
Next, we perform the change of variables $g'_v=g_vh$ to get
\begin{align*}
    LHS&=\int \prod_{v\in V_{1}}\dd{g'_v} \left(\prod_{v\in V_1}U_v(g_v')\right)
    U_{V_0}(h)
    |\psi\rangle_V\bigotimes_{e\in E_1 }|(g')^{}_{e^-}\left(h^{-1}\cdot \phi_{e}\cdot h^{}\right)(g')^{-1}_{e^+}\rangle_e\\
    &\hspace{10cm}\bigotimes_{e\in E_0 }U^{R}_e(h)\ket{g'_{e^-}h^{-1}\phi_{e}h^{}}_{e} \\
    &=U_{V_0}(h) \prod_{e\in E_0 } U^R_{e}(h) \mathcal{G}_{^{h}\!\phi}\ket{\psi}_V = A_{V_0}(h)\mathcal{G}_{^{h}\!\phi}\ket{\psi}_{V}.
\end{align*}
\end{proof}

\begin{remark}
\label{remark: adding-overlapping-lines}
Note that if $p = (u,v_1,\dots,v_{N},w)$ and $q=(w,v_{N},\dots, v_{N+M},u')$ then 
\[ 
\sum_{j} W_{ij}^{p} W_{jk}^q = W_{ik}^{c}
\]
where $c$ is the path $c=(u, v_1,\dots, v_{N}, \dots, v_{N+M},u')$ which is obtained by deleting $w$ in $p$ and $q$ and then concatenating them. This is true because $\sum_{j}W_{ij}^{\overline{(u,w)}}W_{jk}^{\overline{(w,u)}}=
\sum_{j}W_{ji}^{{(w,u)}}W_{kj}^{{(u,w)}}$ and thus acting on an state $\ket{g}_{(u,w)}$ gives $\sum_{j} r_{ji}(g)r_{kj}(g^{-1}) = r_{ki}(I)=\delta_{ji}$.
\end{remark}

\begin{lemma} \label{lem:same-sector-gauging-non-trivial}
Consider a system transforming under a global symmetry, $(\mathcal{S}, G, \{U_v\})$, with $\mathcal{S}=(V, \{\mathcal{H}_v\})$, and a connected graph $\Lambda=(V,E)$ with $\alpha$ labeling such that $\Lambda\setminus(V_0\cup E_0)$ is still connected. Let $\mathcal{T}$ be obtained from gauging $\mathcal{S}$. Let $\ket{\{g_e\}},\ket{\{h_e\}}\in\mathcal{H}_E$ be two states with the same NGC to NGC Wilson line eigenvalues, trivial Wilson loop eigenvalues, and trivial NGC to NGC line eigenvalues when the line starts and ends at the same point. Then there exist states $\ket{\{g^0_e\}},\ket{\{h^0_e\}}$ with trivial NGC to NGC Wilson lines and trivial Wilson loops, and a map $f:V_0\rightarrow G$ such that
\begin{align*}
    \ket{\{g_e\}}&=\prod_{v\in V_0}\bar{A}_v(f(v))\ket{\{g^0_e\}}\\
    \ket{\{h_e\}}&=\prod_{v\in V_0}\bar{A}_v(f(v))\ket{\{h^0_e\}}
\end{align*}
\end{lemma}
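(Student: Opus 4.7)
The strategy is to construct a single NGC gauge transformation $\prod_{v\in V_0}\bar{A}_v(f(v))$ that simultaneously trivializes the NGC-to-NGC Wilson lines of both $\ket{\{g_e\}}$ and $\ket{\{h_e\}}$. The reason one $f$ suffices is that, by hypothesis, the two states share all NGC-to-NGC Wilson line data, so the $f$ we build is a function of only this shared data. The resulting states $\ket{\{g_e^0\}}$ and $\ket{\{h_e^0\}}$ then automatically have trivial NGC-to-NGC Wilson lines, and trivial Wilson loops will be preserved because $\bar{A}_v$ for $v\in V_0$ acts only on dangling edges in $E_0$, which never appear in any Wilson loop (whose cycles lie entirely in $V_1$).

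The first step is to extract a ``coboundary'' structure from the hypotheses. Using a faithful representation $r$, Wilson loop triviality together with the character-theoretic argument in \cref{def:flux-free} implies that the holonomy around any Wilson loop equals $I\in G$. By the argument already made in the proof of \cref{proof:gauging-map-image-is-flux-free}, this means that for any two $V_1$-paths $\pi,\pi'$ between the same pair of $V_1$-vertices, the holonomies along $\pi$ and $\pi'$ under $\ket{\{g_e\}}$ (and independently $\ket{\{h_e\}}$) agree, since $\pi\cdot\bar{\pi'}$ is a valid Wilson loop. Analogously, the hypothesis that NGC-to-NGC lines starting and ending at the same vertex $u\in V_0$ are trivial ensures that holonomies of NGC-to-NGC paths based at $u$ starting through different dangling edges incident to $u$ are compatible. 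Taken together, these two conditions ensure that one can unambiguously define, for each ordered pair $(u,w)\in V_0\times V_0$ of NGC vertices connected by any NGC-to-NGC path, a well-defined group element $g_{u,w}\in G$ giving the common holonomy of any such path; and crucially $g_{u,w}=h_{u,w}$ since the two states share the NGC-to-NGC Wilson line data.

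The second step is to construct $f$ from this $g_{u,w}$. Because $\Lambda\setminus(V_0\cup E_0)$ is connected and $\Lambda$ is connected, every pair of NGC vertices is joined by some NGC-to-NGC path, so $g_{u,w}$ is defined on all of $V_0\times V_0$, and the cocycle identity $g_{u,w}g_{w,u'}=g_{u,u'}$ holds by concatenating paths through the $V_1$ subgraph. Fix a base NGC vertex $u_0\in V_0$ and set $f(u_0)=I$, then define $f(u)=g_{u_0,u}$ for every other $u\in V_0$. Working out the precise action of $\bar{A}_u(f(u))$ on edges in $E_0$ (it left-multiplies each dangling edge state by $f(u)$, using the convention that dangling edges point $V_0\to V_1$), one checks that applying $\prod_{v\in V_0}\bar{A}_v(f(v)^{-1})$ to $\ket{\{g_e\}}$ modifies the holonomy along any NGC-to-NGC path from $u$ to $w$ by left/right-multiplication by $f(u)^{-1}$ and $f(w)$ respectively, producing $f(u)^{-1}g_{u,w}f(w)=g_{u_0,u}^{-1}g_{u,w}g_{u_0,w}=I$ by the cocycle identity. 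Defining $\ket{\{g_e^0\}}$ and $\ket{\{h_e^0\}}$ as the corresponding transformed states gives the desired decomposition, and Wilson loop triviality is preserved since NGC transformations commute with all Wilson loop operators.

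The main obstacle I anticipate is keeping track of the conventions for left versus right multiplication and the directedness of edges carefully enough to verify the coboundary identity $f(u)^{-1}g_{u,w}f(w)=I$ cleanly; in particular the passage between the hypothesis-level statement ``closed NGC-to-NGC lines are trivial'' and the operational statement ``$g_{u,w}$ is independent of the choice of dangling edge at $u$'' requires some combinatorial care, since a loop going out one dangling edge of $u$ and returning on another is genuinely a closed NGC-to-NGC line (interior in $V_1$) only when both dangling edges of $u$ together with a $V_1$-path form a valid such line, which is ensured by connectivity of $\Lambda\setminus(V_0\cup E_0)$. The rest of the argument is essentially a gauge-theoretic unwinding of the Cui--Kong--Wen style result from \cite{Cui_2020} restricted to NGC transformations.
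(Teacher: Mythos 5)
Your proposal is correct and follows essentially the same route as the paper: both use faithfulness of $r$ plus trivial Wilson loops to get path-independence of holonomies, the trivial closed NGC-to-NGC line hypothesis to get consistency across dangling edges sharing an NGC vertex, and the shared Wilson-line data to build a single coboundary $f$ relative to a base point (the paper organizes this per dangling edge via $l_e$ and then assembles the $U^R_e(l_e^{-1})$ into vertex operators $\bar{A}_v$, whereas you work per NGC vertex from the start, but this is only a reorganization). The one concrete slip is in your displayed identity: with the rule $g_{u,w}\mapsto f(u)^{-1}g_{u,w}f(w)$ and the cocycle relation $g_{u,w}=g_{u_0,u}^{-1}g_{u_0,w}$, the choice $f(u)=g_{u_0,u}$ gives $g_{u_0,u}^{-2}g_{u_0,w}^{2}\neq I$; you need $f(u)=g_{u_0,u}^{-1}$ (matching the paper's $f(v_k)=l_{e_k}^{-1}$) or the opposite transformation convention -- a fixable inversion of exactly the kind you flagged, not a gap in the argument.
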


\begin{proof}
Note that the NGC to NGC Wilson lines do measure the product along the path they follow because we are using a faithful representation and we do not take a trace.
Next, notice that as the Wilson loops, are all trivial, then the product of all the elements around a close loop in $E_1$ gives the identity.
Together, these last two facts imply that the eigenvalue for an NGC to NGC Wilson line between two fixed edges in $E_0$ is path-independent.
Furthermore, if we choose any one fixed edge $e_0\in E_0$, the eigenvalues of all the NGC-NGC Wilson lines are determined by knowing the eigenvalues for the line operators starting at $e_0$ and ending at every other point in $E_0 \setminus \{e_0\}$.
This is because when we matrix multiply two NGC-NGC Wilson line operators (the second beginning where the first one ends) we obtain another NGC-NGC Wilson line that follows that concatenation for the initial paths but skipping the repeated edge (see \cref{remark: adding-overlapping-lines}).   

Let $p_k$ be a path that goes from $e_0$ to $e_k$ and $l_{e_{k}}$ be the group element such that 
\[ W_{ij}^{p_k} \ket{\{g_e\}}_E = r_{ij}(l_{e_k}) \ket{\{g_e\}}_E\]
where $k=1,\dots,\abs{E_0}-1$ and $e_{k}$ are the different points of $ E_0\setminus\{e_0\}$. Thus the state configuration defined as
\[
\ket{\{g^0_e\}}_E \equiv 
\left( \prod_{e\in E_{0}\setminus\{e_0\} } U^{R}_{e}(l_{e}^{-1})\right) \ket{\{g_e\}}_E\]
has trivial eigenvalues for NGC-to-NGC Wilson lines.
To see this, note that each $W^{p_k}_{ij}$ now measures the product $ (g^0_{{e_k}})^{-1}\dots g^0_{e_0}  = (g_{{e_k}} l_{e_k})^{-1}\dots  g_{e_0} = l_{e_k}^{-1} l_{e_k} = I$.

Note that $\ket{\{h_e\}}_E$ have the same $l_{e_k}$'s as they have the same NGC-to-NGC Wilson lines, so 
\[
\ket{\{ h^0_e\}}_E= 
 \prod_{e\in E_{0}\setminus\{e_0\} } U^{R}_{e}(l_{e}^{-1})   \ket{\{h_e\}}_E\]
 also have trivial NGC Wilson loop eigenvalues. 

The last step is to rewrite $\left( \prod_{e\in E_{0}\setminus\{e_0\} } U^{R}_{e}(l_{e}^{-1})  \right)$ as some product of $\bar{A}_v$ with $v\in V_0$. For $v$ with only one edge this is clear. If $v$ participates in at least two edges, say $e$ and $e'$, i.e. $e,e'\in E_0$ such that $e^+=e'^+$, we need that $l_e=l_{e'}$. This is true because, since $\Lambda\setminus(V_0\cup E_0)$ is connected, we can consider a loop that leaves $v$ through $e$ and returns through $e'$. Since by assumption this loop has trivial eigenvalues, we must have $l_e=l_{e'}$.
Then we can rewrite the product over $E_0\setminus{e_0}$ as a product over $V_0$ of $V_0$ gauge transformations with
\[
f(v_k) =  l_{e_k}^{-1} , 
\]
and $f(v_0) = I$, such that the lemma holds.
\end{proof}


\begin{theorem}
\label{thm: NGC-map-image-is-full-sector-sometimes}
Consider a system transforming under a global symmetry, $(\mathcal{S}, G, \{U_v\})$, with $\mathcal{S}=(V, \{\mathcal{H}_v\})$, and a connected graph $\Lambda=(V,E)$ with $\alpha$ labeling such that $\Lambda\setminus(V_0\cup E_0)$ is still connected.
Let $\mathcal{T}$ be obtained from gauging $\mathcal{S}$.
 Let $\mathcal{G}_\phi$ be an NGC flux map, i.e.\ $\mathcal{G}_\phi=\prod_{v\in V_0}A_v(h_v)\mathcal{G}$.
 Then the image of $\mathcal{G}_\phi$ is an entire flux sector of the gauge-invariant Hilbert space.
 Specifically, its image is characterized as having the same eigenvalues for Wilson loops and NGC to NGC Wilson lines as some state $\ket{\{\phi_e\}}=\prod_{v\in V_0}\bar{A}_v(g^0_v)\ket{\{I\}}_E\equiv \prod_{v\in V_0}A_v(g^0_v)\ket{\{I\}}_E \in \mathcal{H}_E$.
\end{theorem}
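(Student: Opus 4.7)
The plan is to reduce this theorem to \cref{thm: gauging-map-image-is-flux-free} by exploiting the factorization $\mathcal{G}_\phi = U_0 \mathcal{G}$ where $U_0 \equiv \prod_{v\in V_0} A_v(h_v)$. First I would verify two preliminary facts about $U_0$: that it is unitary (being a product of the unitary representations $A_v$), and that it commutes with $\Pi_{GI}$. The commutation holds because each $A_v(h_v)$ with $v\in V_0$ is supported on $\{v\}\cup E(v)$, and on the edges it shares with any $A_w(g)$ appearing in $\Pi_{GI}$ (necessarily with $w\in V_1$) it uses the opposite of $U^R$ and $U^L$ from what $A_w(g)$ uses, while $[U^L(g),U^R(h)]=0$. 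Consequently $U_0$ restricts to a unitary operator on $\Pi_{GI}\mathcal{H}_\Lambda$, and $\image(\mathcal{G}_\phi) = U_0\,\image(\mathcal{G})$, which by \cref{thm: gauging-map-image-is-flux-free} equals $U_0 \Pi_{FF}\mathcal{H}_\Lambda$.

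It then remains to identify $U_0 \Pi_{FF}\mathcal{H}_\Lambda$ with the fixed-flux sector described in the theorem. Since $U_0$ is unitary, a state $\ket\chi \in \Pi_{GI}\mathcal{H}_\Lambda$ lies in this subspace if and only if $U_0^\dagger\ket\chi \in \Pi_{FF}\mathcal{H}_\Lambda$, i.e.\ its Wilson loop eigenvalues equal $\Tr[r(I)]$ and its NGC-to-NGC Wilson line eigenvalues equal $r_{ij}(I) = \delta_{ij}$. I would translate these two requirements into direct conditions on $\ket\chi$ by computing how the Wilson operators transform under conjugation by $U_0$.

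For Wilson loops the translation is immediate: since $W^l$ is fully gauge-invariant we have $[U_0, W^l] = 0$, so the Wilson loop eigenvalues of $U_0^\dagger\ket\chi$ coincide with those of $\ket\chi$. For an NGC-to-NGC Wilson line along a path $p=(u,v_1,\dots,v_n,w)$ with $u,w\in V_0$, only the factors $A_u(h_u)$ and $A_w(h_w)$ in $U_0$ act non-trivially on the endpoint edges (the intermediate $A_{v_i}$ do not appear in $U_0$ since $v_i\in V_1$, and the remaining $A_{v'}$ with $v'\in V_0$ are disjoint in support from $p$). Using the commutation relations for $W^e_{ij}$ with $U^L$ and $U^R$ recalled at the start of \cref{app:ProofWilsonLoopsI}, conjugation of $W^p_{ij}$ by $U_0$ produces a linear combination of the $W^p_{kl}$ weighted by matrix elements of $r(h_u)$ and $r(h_w)^{-1}$ (with signs determined by edge orientations).

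The final step is to match this with the reference state. Applying exactly the same conjugation computation to the trivial edge configuration $\ket{\{I\}}_E$ determines the Wilson eigenvalues of $U_0\ket{\{I\}}_E = \ket{\{\phi_e\}}$, which by construction are the same weighted sums of $r_{ij}(I)=\delta_{ij}$ that the condition $U_0^\dagger\ket\chi\in\Pi_{FF}\mathcal{H}_\Lambda$ imposes on $\ket\chi$. Hence $U_0\Pi_{FF}\mathcal{H}_\Lambda$ is precisely the fixed-flux sector whose Wilson loop and NGC-to-NGC Wilson line eigenvalues agree with those of $\ket{\{\phi_e\}}$. The main obstacle I anticipate is the index and orientation bookkeeping in this last step to verify that the conjugated eigenvalue conditions line up exactly with those of $\ket{\{\phi_e\}}$; once that is pinned down the result follows from \cref{thm: gauging-map-image-is-flux-free} by unitary conjugation, with no additional gauge-theoretic input required.
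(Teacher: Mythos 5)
Your proposal is correct, but it proceeds by a genuinely different mechanism than the paper's proof. The paper argues constructively: it takes a spanning set of the target flux sector, states of the form $\Pi_{GI}\ket{\psi}_V\otimes\ket{\{g_e\}}$ with $\ket{\{g_e\}}$ having the prescribed Wilson eigenvalues, and exhibits an explicit preimage under $\mathcal{G}_\phi$ by first invoking \cref{lem:same-sector-gauging-non-trivial} to peel off a common $V_0$ gauge transformation from both $\ket{\{g_e\}}$ and $\ket{\{\phi_e\}}$, and then re-running the gauge-transformation bookkeeping of \cref{proof:gauging-map-image-is-flux-free} (including the appeal to Lemma 3.3 of Ref.~\cite{Cui_2020}) on the resulting trivial-eigenvalue configurations. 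You instead observe that $\mathcal{G}_\phi = U_0\mathcal{G}$ with $U_0=\prod_{v\in V_0}A_v(h_v)$ a unitary commuting with $\Pi_{GI}$, so that $\image(\mathcal{G}_\phi)=U_0\,\Pi_{FF}\mathcal{H}_\Lambda$ by \cref{thm: gauging-map-image-is-flux-free}, and then identify this subspace as a full flux sector by transporting the defining eigenvalue conditions through conjugation by $U_0$. This is sound: $W^l$ is gauge-invariant so its conditions are unchanged, and for each path $p$ conjugation by $U_0$ maps the span of $\{W^p_{ij}\}$ invertibly to itself (only the endpoint factors $A_u(h_u),A_w(h_w)$ act), so the condition ``$U_0^\dagger\ket{\chi}$ has trivial eigenvalues'' is equivalent to ``$\ket{\chi}$ has eigenvalues $\lambda^p_{ij}$'' for a uniquely determined set $\lambda^p_{ij}$. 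The obstacle you anticipate, matching those eigenvalues to $\ket{\{\phi_e\}}$, in fact requires no explicit index bookkeeping: the identical conjugation identity applied to $\ket{\{I\}}_E$ (on which $W^p_{ij}$ has eigenvalue $\delta_{ij}$) shows $\ket{\{\phi_e\}}=\bar{U}_0\ket{\{I\}}_E$ has exactly the eigenvalues $\lambda^p_{ij}$, whatever the coefficients are. What your route buys is brevity and a clean bijection that sidesteps the Cui-et-al.\ lemma entirely (it enters only through the cited flux-free theorem, which is where the hypothesis that $\Lambda\setminus(V_0\cup E_0)$ is connected is consumed); what the paper's route buys is an explicit formula for the preimage state $\ket{\psi''}_V$, which mirrors the structure used elsewhere in the ungauging constructions. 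One cosmetic caveat: $U_0$ also carries the vertex factors $U_v(h_v)$ for $v\in V_0$, so when comparing with $\ket{\{\phi_e\}}\in\mathcal{H}_E$ you should work with the edge restriction $\bar{A}_v$, as the theorem statement itself does; since the Wilson operators act only on edges this does not affect the argument.
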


\begin{proof}
As in \cref{thm: gauging-map-image-is-flux-free}, it is sufficient to show that for any $\ket{\{g_e\}}\in\mathcal{H}_E$ with the same eigenvalues for Wilson loops and NGC to NGC Wilson lines as $\ket{\{\phi_e\}}$, and for any $\ket{\psi}_V\in \mathcal{H}_V$ that 
    $\Pi_{GI}\ket{\psi}_V\otimes\ket{\{g_e\}}$
is in the image of $\mathcal{G}_\phi$. The NGC to NGC loops starting at the same vertex are trivial, because conjugation doesn't change the identity. Thus we can use \cref{lem:same-sector-gauging-non-trivial} to the states $\ket{\{\phi_e\}}$ and $\ket{\{g_e\}}$ to get the promised objects $\ket{\{\phi^0_e\}},\ket{\{g^0_e\}},$ and $f$. Using the same arguments as in \cref{thm: gauging-map-image-is-flux-free}, we can write 
\begin{align*}
    \ket{\{g^0_e\}}=\bar{A}_{V_0}(w)\prod_{u\in V_1}\bar{A}_u(h_u)\ket{\{\phi^0_e\}}
\end{align*}
with $w\in G$.
Thus 
\begin{align*}
    &\Pi_{GI}\ket{\psi}_V\otimes\ket{\{g_e\}}\\
    &= \Pi_{GI}\ket{\psi}_V\otimes\left(\prod_{v\in V_0}\bar{A}_v(f(e))\bar{A}_{V_0}(w)\prod_{u\in V_1}\bar{A}_u(h_u)\ket{\{\phi^0_e\}}\right)\\
    &=\prod_{v\in V_0}\bar{A}_v(f(e))\bar{A}_{V_0}(w)\Pi_{GI}\ket{\psi}_V\otimes\left(\prod_{u\in V_1}\bar{A}_u(h_u)\ket{\{\phi^0_e\}}\right)\\
    &=\prod_{v\in V_0}\bar{A}_v(f(e))\bar{A}_{V_0}(w)\Pi_{GI}\ket{\psi'}_V\otimes\ket{\{\phi^0_e\}}\\
    &=\prod_{v\in V_0}\bar{A}_v(f(e))\Pi_{GI}\ket{\psi''}_V\otimes\ket{\{\phi^0_e\}}\\
    &=\Pi_{GI}\ket{\psi''}_V\otimes\ket{\{\phi_e\}},
\end{align*}
where we absorbed $\prod_{u\in V_1}\bar{A}_u(h_u)$ in $\ket{\psi}_V$ by inserting gauge transformations after $\Pi_{GI},$ and we absorbed $\bar{A}_{V_0}(w)$ using the gauge/global duality of flux free maps. By definition, this is in the image of $\mathcal{G}_\phi$.

\end{proof}

\bibliographystyle{unsrtnat}
\bibliography{biblio}

\end{document}